\renewcommand{\footnotesize}{\fontsize{8}{12}\selectfont}
\newtheorem{@theorem}{Theorem}[section]
\newenvironment{theorem}{\begin{@theorem}}{\end{@theorem}}
\newtheorem{lemma}{Lemma}[section]
\newtheorem{corollary}{Corollary}[section]
\newtheorem{claim}{Claim}[section]
\newtheorem{aproblem}{Problem}[section]
\newtheorem*{Remark}{Remark}
\newtheorem{definition}{Definition}[section]
\pgfplotsset{compat=1.18}
\newcommand{\etal}{et al.}
\newcommand{\tO}{\widetilde{O}}
\newcommand{\Ot}{\tilde{O}}
\newcommand{\polylog}{{\rm polylog}}
\newcommand{\gt}{\tilde{g}}
\newcommand{\Ynew}{\ell}
\newcommand{\codestyle}[1]{\texttt{#1}}
\newcommand{\Cycle}{\mbox{\codestyle{Cycle}}}
\newcommand{\BallOrCycle}{\codestyle{BallOrCycle}}
\newcommand{\DegenerateOrCycle}{\codestyle{DegenerateOrCycle}}
\newcommand{\IsDense}{\codestyle{IsDense}}
\newcommand{\codeNull}{\codestyle{null}}
\newcommand{\codeYes}{\codestyle{Yes}}
\newcommand{\codeNo}{\codestyle{No}}
\newcommand{\codeAnd}{~ \mathrm{and} ~}
\newcommand{\wt}{wt}
\newcommand{\OneAlg}{\codestyle{ShortCycle}}
\newcommand{\SrtCycSpr}{\codestyle{ShortCycleSparse}}
\newcommand{\SrtCycDns}{\codestyle{ShortCycleDense}}
\newcommand{\SpCase}{\codestyle{SpecialCases}}
\newcommand{\BFSSampleM}{\codestyle{BfsSample}}
\newcommand{\DoubleBfsCycle}{\codestyle{NbrBallOrCycle}}
\newcommand{\SparseOrCycleM}{\codestyle{SparseOrCycle}}
\newcommand{\HandleRemainderM}{\codestyle{HandleRemainder}}
\newcommand{\twoSparseOrCycle}{\codestyle{$2$-SparseOrCycle}}
\newcommand{\AdditiveGirthApprox}{\codestyle{AdtvGirthApprox}}
\newcommand{\dHybrid}{\codestyle{AllVtxBallOrCycle}}
\newcommand{\kHybrid}{\codestyle{$2k$-Hybrid}}
\newcommand{\A}{\mathcal{A}}
\begin{document}
\title{New algorithms for girth and cycle detection}
\author{Liam Roditty \and Plia Trabelsi}

\author{
Liam Roditty\thanks{Department of Computer Science, Bar Ilan University, Ramat Gan 5290002, Israel. E-mail {\tt liam.roditty@biu.ac.il}. Supported in part by BSF grants 2016365 and 2020356.}
\and 
Plia Trabelsi\thanks{Department of Computer Science, Bar Ilan University, Ramat Gan 5290002, Israel. E-mail {\tt plia.trabelsi@gmail.com}.}
}

\date{}
\maketitle

\usetikzlibrary{patterns}
\maketitle

\begin{abstract}
Let $G=(V,E)$ be an unweighted undirected graph with $n$ vertices and $m$ edges. Let $g$ be the
girth of $G$, that is, the length of a shortest cycle in $G$. 
We present a randomized algorithm with a running time of 
$\tilde{O}\big(\Ynew \cdot n^{1 + \frac{1}{\Ynew - \varepsilon}}\big)$ that returns a cycle of length at most
$
2\Ynew \left\lceil \frac{g}{2} \right\rceil - 2 \left\lfloor \varepsilon \left\lceil \frac{g}{2} \right\rceil \right\rfloor
$,
where $\Ynew \geq 2$ is an integer and $\varepsilon \in [0,1]$, for every graph with $g = \polylog(n)$.

Our algorithm generalizes an algorithm  of  Kadria \etal{} [SODA'22] that computes a cycle of length at most
$4\left\lceil \frac{g}{2} \right\rceil - 2\left\lfloor \varepsilon \left\lceil \frac{g}{2} \right\rceil \right\rfloor
$
in $\tilde{O}\big(n^{1 + \frac{1}{2 - \varepsilon}}\big)$ time.
Kadria \etal{} presented also  an algorithm that finds a cycle of length at most 
$
2\ell \left\lceil \frac{g}{2} \right\rceil
$
in $\tilde{O}\big(n^{1 + \frac{1}{\ell}}\big)$ time, where $\ell$ must be an integer. 
Our algorithm generalizes this algorithm, as well,  by replacing the integer parameter~$\ell$ in the running time exponent with a real-valued parameter~$\Ynew - \varepsilon$, thereby offering greater flexibility in parameter selection and enabling a broader spectrum of combinations between running times and cycle lengths.

We also show that for sparse graphs a better tradeoff is possible, by presenting an $\Ot(\Ynew\cdot m^{1+\frac{1}{\Ynew-\varepsilon}})$ time randomized algorithm that returns a cycle of length at most $2\Ynew(\lfloor \frac{g-1}{2}\rfloor)  - 2(\lfloor \varepsilon \lfloor \frac{g-1}{2}\rfloor \rfloor+1)$, where $\Ynew\geq 3$ is an integer and $\varepsilon\in [0,1)$,  for every graph with $g=\polylog(n)$.

To obtain our algorithms we develop several techniques and introduce a formal definition of \textit{hybrid} cycle detection algorithms.
Both may prove useful in broader contexts, including other cycle detection and approximation problems. 
Among our techniques is a new  cycle searching technique, in which we search for a cycle from a given vertex and possibly all its neighbors in linear time. Using this technique together with more ideas we develop two hybrid algorithms. The first  allows us to obtain an $\Ot(m^{2-\frac{2}{\lceil g/2  \rceil+1}})$-time,
$(+1)$-approximation of $g$. The second  is used to obtain our   $\Ot(\Ynew\cdot n^{1+\frac{1}{\Ynew-\varepsilon}})$-time and $\Ot(\Ynew\cdot m^{1+\frac{1}{\Ynew-\varepsilon}})$-time approximation algorithms.

\end{abstract}

\newpage

\section{Introduction}\label{sec:introduction}

Let $G=(V,E)$ be an unweighted undirected graph with $n$ vertices and $m$ edges. A set of vertices $C_\ell=\{ v_1,v_2,\cdots,v_{\ell+1}\}$ in $G$,  where $\ell \geq 2$,  is a cycle of length $\ell$ if  $v_1=v_{\ell+1}$ and 
$(v_i,v_{i+1})\in E$, for every $1\leq i \leq \ell$.  A  $C_{\leq \ell}$ is a cycle of length at most $\ell$. 
The \textit{girth} $g$ of $G$ is the length of a shortest cycle in $G$. 
The girth of a graph has been studied extensively since the 1970s by researchers from both the graph theory and the algorithms communities.

Itai and Rodeh~\cite{itai1977finding} showed that the girth can be computed in $O(mn)$ time or in $O(n^\omega)$ time, where $\omega< 2.371552$~\cite{williams2024new}, if Fast Matrix Multiplication (FMM) algorithms are used. 
They also proved that the problem of computing the girth is equivalent to the problem of deciding whether there is a $C_3$ (triangle) in a graph or not. 

In practice, algorithms for FMM
have very large constant factors in their running time. Combinatorial algorithms,
informally, are algorithms which do not use algebraic methods that are being used by FMM algorithms, and
consequently are often more practical.
Vassilevska W. and Williams~\cite{WilliamsW18} showed that if there exists a truly subcubic time\footnote{$O(n^{3-c})$ time for a constant $c > 0$.} combinatorial algorithm
which detects if a graph has a triangle (and therefore also a subcubic time algorithm that computes the exact girth), then there exists a truly subcubic time combinatorial algorithm for Boolean Matrix Multiplication (BMM)
(and therefore also for unweighted All Pairs Shortest Path (APSP), see \cite{galil1997all}, \cite{seidel1995all}, \cite{shoshan1999all}). 
 Such an algorithm
 would be a major breakthrough. 
As a result, to get a faster running time for computing the girth, it is natural to settle for an \textit{approximation} algorithm  for  the girth instead of an exact computation.
An $(\alpha,\beta)$-approximation $\hat{g}$ of $g$ (where $\alpha \geq 1$ and $\beta \geq 0$), satisfies $g\leq \hat{g} \leq \alpha \cdot g+\beta$. We denote an approximation as an $\alpha$-approximation 
if $\beta=0$ and as a $(+\beta)$-approximation
if $\alpha=1$. 

Itai and Rodeh~\cite{itai1977finding} presented a  $(+1)$-approximation algorithm that runs in $O(n^2)$ time.
Notice that in contrast to the BMM or APSP problems, where a running time of $\Omega(n^2)$ is inevitable since  the output size is $\Omega(n^2)$, in the girth problem the output is a single number, thus, there is no natural barrier for subquadratic time algorithms. 
Indeed, Lingas and Lundell~\cite{lingas2009efficient} presented a $\frac{8}{3}$-approximation algorithm that runs in $\Ot(n^{3/2})$ time, and 
Roditty and V. Williams~\cite{roditty2012subquadratic} presented a $2$-approximation algorithm that runs in $\Ot(n^{5/3})$ time. 
Dahlgaard, Knudsen and St\"{o}ckel~\cite{dahlgaard2017new} presented  two  tradeoffs  
between  running time and approximation. One generalizes the algorithms of \cite{lingas2009efficient,roditty2012subquadratic} and
computes a cycle of length at most $2\lceil \frac{g}{2}\rceil+2\lceil \frac{g}{2(\ell-1)}\rceil$
in $\Ot(n^{2-1/\ell})$  time.
The other computes, whp, a $C_{\leq 2^\ell g}$, for any integer $\ell\geq 2$, in $\Ot(n^{1+1/\ell})$  time.

Kadria \etal{}~\cite{kadria2022algorithmic} significantly improved upon the second algorithm of~\cite{dahlgaard2017new} and  presented an algorithm, that for every integer $\ell \geq 1$, 
computes a $C_{\leq 2 \ell \lceil g / 2\rceil}$ in $\Ot(n^{1+1 / \ell})$ time. 
They also presented an algorithm, that  for every $\varepsilon\in (0,1)$, computes a cycle of length at most $4\lceil \frac{g}{2}\rceil  - 2\lfloor \varepsilon \lceil \frac{g}{2}\rceil \rfloor \leq (2-\varepsilon)g+4$, in $\tO(n^{1+1/(2-\varepsilon)})$ time, for every graph with $g=\polylog(n)$. 

These two algorithms of Kadria \etal{}, as well as few other approximation algorithms (see for example, \cite{lingas2009efficient}, \cite{ChechikLRS20}, \cite{RodittyT13}), were obtained using a general framework for girth approximation in which a search is performed over the range of possible values of $g$, using some algorithm $\A$ that gets as an input an integer $\tilde{g}$ which is a guess for the value of $g$. In each step of the  search $\A$ either returns a cycle  $C_{\leq f(\tilde{g})}$, where $f$ is a non decreasing function,  or  determines that    $g>\tilde{g}$. 
The goal of the search is to find the smallest $\tilde{g}$, for which  $\A$ returns  a cycle,
because  for this value  we have $g> \tilde{g}-1$ (and thus $g\geq \tilde{g}$), and algorithm $\A$  returns a  $C_{\leq f(\tilde{g})}$. 
This cycle is of length at most $f(g)$ 
since  $g\geq \tilde{g}$ and $f$ is a non decreasing function ($f$ can represent the approximation, for example $f(\tilde{g})=2\tilde{g}$ yields a $2$-approximation). 
The two possible outcomes of $\A$ and its usage in the general girth approximation framework inspired us to formally define 
the notion of a $(\gamma,\delta)$-\textit{hybrid} algorithm as follows:

\newpage
\begin{definition}
A $(\gamma,\delta)$-hybrid algorithm is an algorithm that either outputs a $C_{\leq \gamma}$ or determines that $g>\delta$.
\end{definition}

When $\gamma = \delta$, the algorithm is referred to as a $\gamma$-hybrid algorithm.
The girth approximation framework described above suggests that a possible approach for developing efficient  girth approximation algorithms is  by developing efficient  $(\gamma,\delta)$-hybrid algorithms. 

Kadria \etal{}~\cite{kadria2022algorithmic} 
designed several algorithms that satisfy the definition of $(\gamma, \delta)$-hybrid algorithms. 
Their girth approximation algorithms mentioned above were obtained using two different $(f(\tilde{g}), \tilde{g})$-hybrid algorithms. 
Additionally, for  every integer $k\geq 2$, they presented a $(2k,3)$-hybrid and a $(2k,4)$-hybrid algorithms that run in $O(\min\{m^{1+1/(k+1)},n^{1+2/k}\})$ time, and 
a $(\max \{2k,g\},5)$-hybrid algorithm that runs in $O(\min\{m^{1+2/(k+1)},n^{1+3/k}\})$ time. 
Therefore, for $k\geq 2$, $\Tilde{g}\in \{3,4,5\}$ and $\alpha=\lceil\frac{\tilde{g}}{2}\rceil$,   there is a
$( \max \{2k,g\}, \tilde{g})$-hybrid algorithm that runs in $O(\min\{ 
    m^{1+\frac{\alpha-1}{k+1}}
    ,
    n^{1+\frac{\alpha}{k}}
\})$ time. 
A natural question is whether these 
three
algorithms are only part of a general 
tradeoff between the running time, $\gamma$ and $\delta$.

\begin{aproblem}
\label{P-open problem 1}
    Let  $\tilde{g}\geq 6$ and $k\geq 2$ be two integers and let $\alpha=\lceil\frac{\tilde{g}}{2}\rceil$. 
    Is it possible to obtain  a $( \max \{2k,g\}, \tilde{g})$-hybrid algorithm that runs  in $O(\min\{ m^{1+\frac{\alpha-1}{k+1}} , n^{1+\frac{\alpha}{k}}\})$ time?
\end{aproblem}

In this paper we present a  $( \max \{2k,g\}, \tilde{g})$-hybrid algorithm that runs, whp, in $\Ot((\frac{k+1}{\alpha-1}+\alpha) \cdot \min\{ m^{1+\frac{\alpha-1}{k+1}}, n^{1+\frac{\alpha}{k}} \})$ time, for every $\tilde{g}\geq 3$. 
This algorithm provides an affirmative answer to Problem~\ref{P-open problem 1}, albeit the $(\frac{k+1}{\alpha-1}+\alpha)$ factor in the running time. 

Kadria \etal{} \cite{kadria2022algorithmic} presented also a $(2k, 6)$-hybrid algorithm
whose running time is $\tilde{O}(\min\{n m^{\frac{1}{2}+\frac{1}{2(k+1)}}, n^{\frac{3}{2}+\frac{1}{k}}\})$,
for every integer $k\geq 4$. The running time of our algorithm when $\tilde{g}=6$ is $\Ot(k \cdot \min\{ m^{1+\frac{2}{k+1}}, n^{1+\frac{3}{k}} \})$. We improve the running time for every constant $k\geq5$, in the case of $m < O(n^{1+1/k})$ (when $m \geq O(n^{1+1/k})$ the two algorithms run a similar procedure and thus have the same asymptotic running time).

Using our
$( \max \{2k,g\}, \tilde{g})$-hybrid algorithm we obtain a generalization of the  $\tO(n^{1+1/(2-\varepsilon)})$ time algorithm of Kadria \etal{}~\cite{kadria2022algorithmic} that computes a cycle of length at most $4\lceil \frac{g}{2}\rceil  - 2\lfloor \varepsilon \lceil \frac{g}{2}\rceil \rfloor \leq (2-\varepsilon)g+4$, for every graph with $g=\polylog(n)$. 
Our generalized algorithm  runs in  $\Ot(\Ynew\cdot n^{1+1/(\Ynew-\varepsilon)})$ time, whp, and returns a cycle of length at most $2\Ynew\lceil \frac{g}{2}\rceil  - 2\lfloor \varepsilon \lceil \frac{g}{2}\rceil \rfloor \leq (\Ynew-\varepsilon)g+\Ynew+2$, where $\Ynew\geq 2$ is an integer and  $\varepsilon\in [0,1]$,  for every graph with $g=\polylog(n)$. 
We also show that if the graph is sparse then the approximation can be improved. More specifically, we present an algorithm that runs in  $\Ot(\Ynew\cdot m^{1+1/(\Ynew-\varepsilon)})$ time, whp, and returns a cycle of length at most $(\Ynew-\varepsilon)g-\Ynew+2\varepsilon$, where $\Ynew\geq 3$ is an integer and $\varepsilon\in [0,1)$,  for every graph with $g=\polylog(n)$.

Our $\Ot(\Ynew \cdot n^{1+1/(\Ynew - \varepsilon)})$-time algorithm also generalizes the $\Ot(n^{1+1/\ell})$-time algorithm of Kadria \etal{}~\cite{kadria2022algorithmic}, which computes a $C_{\leq 2\ell \lceil g/2\rceil}$ for every \textbf{integer} $\ell \geq 1$. In our algorithm, the integer parameter $\ell$ that appears in the exponent of the running time is replaced by a real-valued parameter $\Ynew - \varepsilon$.
Thus, we introduce many new points on the tradeoff curve between running time and approximation ratio. Specifically, for every integer $\ell \leq \polylog(n)$,
up to $\lceil g / 2 \rceil - 1$ additional tradeoff points are added.\footnote{Up to $\lceil g / 2 \rceil - 1$ tradeoff points are added since for every such $\ell$, when $\varepsilon$ is a multiple of $ \frac{1}{\lceil g/2\rceil}$, we get an $\Ot(n^{1+1/(\ell-\varepsilon)})$-time algorithm which computes a $C_{\leq 2(\ell -\varepsilon) \lceil g/2\rceil}$.}
For example, consider $\ell = 3$ and a graph with girth $g = 5$ or $g = 6$. Our algorithm yields two additional points on the tradeoff curve, corresponding to $\varepsilon = \frac{1}{3}$ and $\varepsilon = \frac{2}{3}$. For $\varepsilon = \frac{1}{3}$, we compute a $C_{\leq 16}$ in $\Ot(n^{1 + \frac{3}{8}})$ time, and for $\varepsilon = \frac{2}{3}$, we compute a $C_{\leq 14}$ in $\Ot(n^{1 + \frac{3}{7}})$ time.
These points lie between the two points on the tradeoff curve given by the algorithm of Kadria \etal{}~\cite{kadria2022algorithmic}, which computes either a $C_{\leq 12}$ in $\Ot(n^{1 + \frac{3}{6}})$ time or a $C_{\leq 18}$ in $\Ot(n^{1 + \frac{3}{9}})$ time.
See Figure~\ref{fig:epsilon} for a comparison.

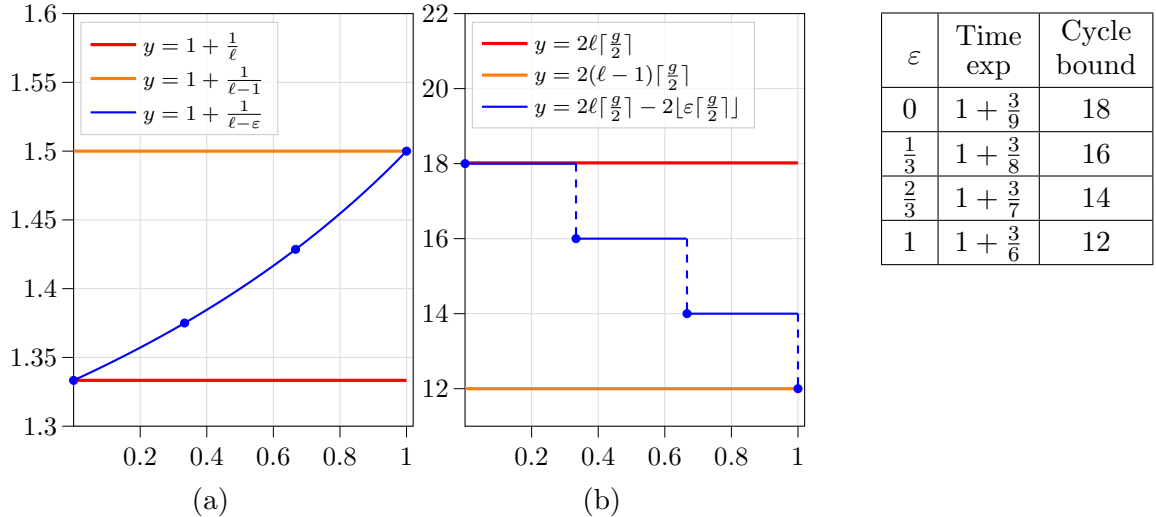
\begin{figure}[t]
\begin{center}
   \begin{multicols}{3}
\hfill

    \begin{tikzpicture}
    
    \definecolor{color0}{rgb}{0.12156862745098,0.66666666666667,0.705882352941177}
    \definecolor{color1}{rgb}{1,0.498039215686275,0.0549019607843137}
    
    \begin{axis}[
    width=0.38\textwidth,
    height= 0.443\textwidth, 
    legend cell align={left},
    legend style={at={(0.02,0.98)}, anchor = north west,fill opacity=0.8, draw opacity=1, text opacity=1, draw=white!80!black,
    font=\scriptsize,
    },
    tick align=outside,
    tick pos=left,
    x grid style={white!86.2745098039216!black},
    xmajorgrids,
    xmin=0, xmax=1.02,
    xtick style={color=black},
    xtick={0.2,0.4,0.6,0.8,1},
    xticklabel style={font=\small},
    y grid style={white!86.2745098039216!black},
    ylabel style={rotate=-90},
    ymajorgrids,
    ymin=1.3, ymax=1.6,
    ytick style={color=black},
    ytick={1.3,1.35,1.4,1.45,1.5,1.55, 1.6},
    yticklabel style={font=\small, xshift=2pt},
    ]

    \addplot [
    very thick, 
            domain=0:1, 
            samples=100, 
            red]{1+1/3};
    \addlegendentry{$y =1 + \frac{1}{\ell}$};

    \addplot [
    very thick, 
            domain=0:1, 
            samples=100, 
            color1]{1+1/2};
    \addlegendentry{$y =1 + \frac{1}{\ell-1}$};

   \addplot [
    thick, 
            domain=0:1, 
            samples=100, 
            blue]{1 + (1)/(3-x)};
    \addlegendentry{$y =1 + \frac{1}{\ell-\varepsilon}$};

    \addplot[
        only marks, 
        mark=*, 
        mark size=1.5pt, 
        draw=blue, 
        fill=blue
    ]
   table {
    x y
    0 1.3333
    0.3333 1.375
    0.6667 1.4286
    1 1.5
    };
    coordinates {(0.3333, 1.375)};

    \end{axis}
    \node at (1.8, -1) {(a)};
    \end{tikzpicture}

\columnbreak

\hfill

    \begin{tikzpicture}
    
    \definecolor{color0}{rgb}{0.12156862745098,0.66666666666667,0.705882352941177}
    \definecolor{color1}{rgb}{1,0.498039215686275,0.0549019607843137}
    
    \begin{axis}[
    width=0.38\textwidth,
    height= 0.443\textwidth, 
    legend cell align={left},
    legend style={at={(0.02,0.98)}, anchor = north west,fill opacity=0.8, draw opacity=1, text opacity=1, draw=white!80!black,
    font=\scriptsize,
    },
    tick align=outside,
    tick pos=left,
    x grid style={white!86.2745098039216!black},
    xmajorgrids,
    xmin=0, xmax=1.02,
    xtick style={color=black},
    xtick={0.2,0.4,0.6,0.8,1},
    xticklabel style={font=\small},
    y grid style={white!86.2745098039216!black},
    ylabel style={rotate=-90},
    ymajorgrids,
    ymin=11, ymax=22,
    ytick style={color=black},
    ytick={12,14,16,18,20,22},
    yticklabel style={font=\small, xshift=2pt},
    ]

    \addplot [very thick, 
        domain=0:1, 
        samples=100, 
        red]
        {18.02};
    \addlegendentry{$y =2\ell\lceil\frac{g}{2}\rceil$};

    \addplot [very thick, 
        domain=0:1, 
        samples=100, 
        color1]
        {12};
    \addlegendentry{$y =2(\ell-1)\lceil\frac{g}{2}\rceil$};

    \addplot [thick, 
            domain=0:1, 
            samples=104, 
            jump mark mid,
            blue]{2*3*3 -2*floor(x*3)};
    \addlegendentry{$y =2\ell\lceil\frac{g}{2}\rceil- 2\lfloor\varepsilon\lceil\frac{g}{2}\rceil\rfloor$};
    \draw[thick, dashed,blue] (0.33333,18) -- (0.33333,16);
    \draw[thick, dashed,blue] (0.66667,16) -- (0.66667,14);
    \draw[thick, dashed,blue] (1,14) -- (1,12);

   \addplot[
        only marks, 
        mark=*, 
        mark size=1.5pt, 
        draw=blue, 
        fill=blue
    ]
   table {
        x y
        0 18
        0.3333 16
        0.6667 14
        1 12
    };

    \end{axis}

     \node at (1.8, -1) {(b)};
    \end{tikzpicture}

    \hfill
    \columnbreak
    \hfill

    \renewcommand{\arraystretch}{1.2}

     \vphantom{\rule{0pt}{2.42cm}}
    \begin{tabular}{|c|c|c|}
    \hline \rule{0pt}{0.8cm}
    \raisebox{0.8ex}{$\varepsilon$} & \shortstack{Time \\[3pt] exp} & \shortstack{Cycle\\ bound} \\
    \hline 
     $0$     & $1 + \frac{3}{9}$ & $18$ \\ \hline
     $\frac{1}{3}$   & $1 + \frac{3}{8}$ & $16$ \\ \hline
     $\frac{2}{3}$   & $1 + \frac{3}{7}$ & $14$ \\ \hline
     $1$     & $1 + \frac{3}{6}$   & $12$ \\ 
    \hline
    \end{tabular}
\hfill
\end{multicols}
\vspace{-1cm}
  \caption{
Our $\Ot(\ell \cdot n^{1 + \frac{1}{\ell-\varepsilon}})$-time girth approximation algorithm
compared to the $\Ot(n^{1 + \frac{1}{\ell}})$-time algorithm of ~\cite{kadria2022algorithmic}, for every $\varepsilon \in [0,1]$, choosing $\ell=3$ and $g = 5$ or $6$. 
(a) The $y$-axis is the exponent of $n$ in the running time, and the $x$-axis is $\varepsilon$.  
(b) The $y$-axis is the upper bound on the length of the returned cycle, and the $x$-axis is $\varepsilon$.
The blue points correspond to our algorithm at four specific choices of $\varepsilon$: $\varepsilon = 0, \frac{1}{3}, \frac{2}{3}, 1$ (see the table on the right).
}
\label{fig:epsilon}
\end{center}

\end{figure}

Together, the tradeoffs in the two algorithms of Kadria \etal{}~\cite{kadria2022algorithmic} that we generalize encompass several other known results, including those of Itai and Rodeh~\cite{itai1977finding}, Lingas and Lundell~\cite{lingas2009efficient}, Roditty and V. Williams~\cite{roditty2012subquadratic} and the first tradeoff of Dahlgaard \etal{}~\cite{dahlgaard2017new} (when $g=\polylog(n)$). 
Notably, some of these algorithms have resisted improvement for many years.
Therefore, the unification of these algorithms within a single tradeoff curve in our result, together with the addition of new valid points on this curve, reinforces the possibility that it captures a fundamental relationship between running time and approximation quality.
This, in turn, motivates further investigation into whether a matching lower bound exists for this tradeoff.

The rest of this  paper is organized as follows.
In Section~\ref{sec:overview} we provide an overview. 
Preliminaries are in Section~\ref{sec:prelim}.
In Section~\ref{sec:neighborhood_technique}, we present a new cycle searching technique that is used by our algorithms.
In Section~\ref{sec:2k-hybrid and +1 aprox} we present a $2k$-hybrid algorithm and then use it to obtain a $(+1)$-approximation algorithm for the girth.
In Section~\ref{sec:one-alg} we generalize the $2k$-hybrid algorithm and present a $( \max \{2k,g\}, \tilde{g})$-hybrid algorithm.
In Section~\ref{sec:girth-approx} we use the hybrid algorithm from Section~\ref{sec:one-alg} to obtain
two more approximation algorithms for the girth.

\section{Overview}
\label{sec:overview}

Among the techniques that we develop to obtain our new algorithms, is a new cycle searching technique that might be of independent interest. 
Our new  technique exploits the property that if  $s\in V$ is not on a $C_{\leq 2k}$,
then for any two neighbors $x$ and $y$ of $s$, 
the set of vertices at distance exactly $k-1$ from $x$ and $y$ that are also at distance $k$ from $s$ are disjoint (see Figure~\ref{fig:nbrballorcycleIllustration}). This  allows us to  check  efficiently  for all the neighbors of $s$ if they are on a $C_{\leq 2k}$.
Using this technique, together with more 
tools that we develop, we obtain \textit{two} hybrid algorithms.


\begin{figure}[t]
    \centering

\begin{tikzpicture}[x=0.75pt,y=0.75pt,yscale=-1,xscale=1]

\draw [color={rgb, 255:red, 0; green, 0; blue, 0 }  ,draw opacity=0.4 ]   (290.21,1148.05) -- (393.64,1174.06) ;
\draw [color={rgb, 255:red, 0; green, 0; blue, 0 }  ,draw opacity=0.4 ]   (290.21,1148.05) -- (188.64,1174.06) ;
\draw  [color={rgb, 255:red, 0; green, 0; blue, 0 }  ,draw opacity=0.4 ] (188.48,1176.06) -- (218.48,1238.99) -- (158.48,1238.99) -- cycle ;
\draw  [color={rgb, 255:red, 0; green, 0; blue, 0 }  ,draw opacity=0.4 ][fill={rgb, 255:red, 255; green, 255; blue, 255 }  ,fill opacity=1 ] (177.43,1174.06) .. controls (177.43,1167.86) and (182.45,1162.84) .. (188.64,1162.84) .. controls (194.84,1162.84) and (199.86,1167.86) .. (199.86,1174.06) .. controls (199.86,1180.25) and (194.84,1185.27) .. (188.64,1185.27) .. controls (182.45,1185.27) and (177.43,1180.25) .. (177.43,1174.06) -- cycle ;
\draw  [color={rgb, 255:red, 0; green, 0; blue, 0 }  ,draw opacity=0.4 ] (164.37,1231.58) .. controls (164.37,1228.61) and (166.78,1226.21) .. (169.74,1226.21) .. controls (172.7,1226.21) and (175.1,1228.61) .. (175.1,1231.58) .. controls (175.1,1234.54) and (172.7,1236.94) .. (169.74,1236.94) .. controls (166.78,1236.94) and (164.37,1234.54) .. (164.37,1231.58) -- cycle ;
\draw  [color={rgb, 255:red, 0; green, 0; blue, 0 }  ,draw opacity=0.4 ] (176.7,1231.58) .. controls (176.7,1228.61) and (179.1,1226.21) .. (182.07,1226.21) .. controls (185.03,1226.21) and (187.43,1228.61) .. (187.43,1231.58) .. controls (187.43,1234.54) and (185.03,1236.94) .. (182.07,1236.94) .. controls (179.1,1236.94) and (176.7,1234.54) .. (176.7,1231.58) -- cycle ;
\draw  [color={rgb, 255:red, 0; green, 0; blue, 0 }  ,draw opacity=0.4 ] (189.03,1231.58) .. controls (189.03,1228.61) and (191.43,1226.21) .. (194.4,1226.21) .. controls (197.36,1226.21) and (199.76,1228.61) .. (199.76,1231.58) .. controls (199.76,1234.54) and (197.36,1236.94) .. (194.4,1236.94) .. controls (191.43,1236.94) and (189.03,1234.54) .. (189.03,1231.58) -- cycle ;
\draw  [color={rgb, 255:red, 0; green, 0; blue, 0 }  ,draw opacity=0.4 ] (201.36,1231.58) .. controls (201.36,1228.61) and (203.76,1226.21) .. (206.73,1226.21) .. controls (209.69,1226.21) and (212.09,1228.61) .. (212.09,1231.58) .. controls (212.09,1234.54) and (209.69,1236.94) .. (206.73,1236.94) .. controls (203.76,1236.94) and (201.36,1234.54) .. (201.36,1231.58) -- cycle ;
\draw    (290.21,1148.05) -- (325.79,1174.05) ;
\draw    (290.21,1148.05) -- (256.64,1174.15) ;
\draw   (325.48,1176.15) -- (355.48,1239.08) -- (295.48,1239.08) -- cycle ;
\draw   (256.48,1176.15) -- (286.48,1239.08) -- (226.48,1239.08) -- cycle ;
\draw  [fill={rgb, 255:red, 239; green, 187; blue, 106 }  ,fill opacity=1 ] (279,1148.05) .. controls (279,1141.86) and (284.02,1136.84) .. (290.21,1136.84) .. controls (296.41,1136.84) and (301.43,1141.86) .. (301.43,1148.05) .. controls (301.43,1154.25) and (296.41,1159.27) .. (290.21,1159.27) .. controls (284.02,1159.27) and (279,1154.25) .. (279,1148.05) -- cycle ;
\draw  [color={rgb, 255:red, 0; green, 0; blue, 0 }  ,draw opacity=1 ][fill={rgb, 255:red, 180; green, 207; blue, 242 }  ,fill opacity=1 ] (245.43,1174.15) .. controls (245.43,1167.95) and (250.45,1162.93) .. (256.64,1162.93) .. controls (262.84,1162.93) and (267.86,1167.95) .. (267.86,1174.15) .. controls (267.86,1180.34) and (262.84,1185.36) .. (256.64,1185.36) .. controls (250.45,1185.36) and (245.43,1180.34) .. (245.43,1174.15) -- cycle ;
\draw  [fill={rgb, 255:red, 180; green, 207; blue, 242 }  ,fill opacity=1 ] (314.57,1174.05) .. controls (314.57,1167.86) and (319.59,1162.84) .. (325.79,1162.84) .. controls (331.98,1162.84) and (337,1167.86) .. (337,1174.05) .. controls (337,1180.25) and (331.98,1185.27) .. (325.79,1185.27) .. controls (319.59,1185.27) and (314.57,1180.25) .. (314.57,1174.05) -- cycle ;
\draw  [color={rgb, 255:red, 0; green, 0; blue, 0 }  ,draw opacity=1 ] (232.37,1231.67) .. controls (232.37,1228.7) and (234.78,1226.3) .. (237.74,1226.3) .. controls (240.7,1226.3) and (243.1,1228.7) .. (243.1,1231.67) .. controls (243.1,1234.63) and (240.7,1237.03) .. (237.74,1237.03) .. controls (234.78,1237.03) and (232.37,1234.63) .. (232.37,1231.67) -- cycle ;
\draw  [color={rgb, 255:red, 0; green, 0; blue, 0 }  ,draw opacity=1 ] (244.7,1231.67) .. controls (244.7,1228.7) and (247.1,1226.3) .. (250.07,1226.3) .. controls (253.03,1226.3) and (255.43,1228.7) .. (255.43,1231.67) .. controls (255.43,1234.63) and (253.03,1237.03) .. (250.07,1237.03) .. controls (247.1,1237.03) and (244.7,1234.63) .. (244.7,1231.67) -- cycle ;
\draw  [color={rgb, 255:red, 0; green, 0; blue, 0 }  ,draw opacity=1 ] (257.03,1231.67) .. controls (257.03,1228.7) and (259.43,1226.3) .. (262.4,1226.3) .. controls (265.36,1226.3) and (267.76,1228.7) .. (267.76,1231.67) .. controls (267.76,1234.63) and (265.36,1237.03) .. (262.4,1237.03) .. controls (259.43,1237.03) and (257.03,1234.63) .. (257.03,1231.67) -- cycle ;
\draw  [color={rgb, 255:red, 0; green, 0; blue, 0 }  ,draw opacity=1 ] (269.36,1231.67) .. controls (269.36,1228.7) and (271.76,1226.3) .. (274.73,1226.3) .. controls (277.69,1226.3) and (280.09,1228.7) .. (280.09,1231.67) .. controls (280.09,1234.63) and (277.69,1237.03) .. (274.73,1237.03) .. controls (271.76,1237.03) and (269.36,1234.63) .. (269.36,1231.67) -- cycle ;
\draw  [color={rgb, 255:red, 0; green, 0; blue, 0 }  ,draw opacity=1 ] (301.37,1231.67) .. controls (301.37,1228.7) and (303.78,1226.3) .. (306.74,1226.3) .. controls (309.7,1226.3) and (312.1,1228.7) .. (312.1,1231.67) .. controls (312.1,1234.63) and (309.7,1237.03) .. (306.74,1237.03) .. controls (303.78,1237.03) and (301.37,1234.63) .. (301.37,1231.67) -- cycle ;
\draw  [color={rgb, 255:red, 0; green, 0; blue, 0 }  ,draw opacity=1 ] (313.7,1231.67) .. controls (313.7,1228.7) and (316.1,1226.3) .. (319.07,1226.3) .. controls (322.03,1226.3) and (324.43,1228.7) .. (324.43,1231.67) .. controls (324.43,1234.63) and (322.03,1237.03) .. (319.07,1237.03) .. controls (316.1,1237.03) and (313.7,1234.63) .. (313.7,1231.67) -- cycle ;
\draw  [color={rgb, 255:red, 0; green, 0; blue, 0 }  ,draw opacity=1 ] (326.03,1231.67) .. controls (326.03,1228.7) and (328.43,1226.3) .. (331.4,1226.3) .. controls (334.36,1226.3) and (336.76,1228.7) .. (336.76,1231.67) .. controls (336.76,1234.63) and (334.36,1237.03) .. (331.4,1237.03) .. controls (328.43,1237.03) and (326.03,1234.63) .. (326.03,1231.67) -- cycle ;
\draw  [color={rgb, 255:red, 0; green, 0; blue, 0 }  ,draw opacity=1 ] (338.36,1231.67) .. controls (338.36,1228.7) and (340.76,1226.3) .. (343.73,1226.3) .. controls (346.69,1226.3) and (349.09,1228.7) .. (349.09,1231.67) .. controls (349.09,1234.63) and (346.69,1237.03) .. (343.73,1237.03) .. controls (340.76,1237.03) and (338.36,1234.63) .. (338.36,1231.67) -- cycle ;
\draw    (446.43,1237.71) -- (446.43,1168.44) ;
\draw [shift={(446.43,1165.44)}, rotate = 90] [fill={rgb, 255:red, 0; green, 0; blue, 0 }  ][line width=0.08]  [draw opacity=0] (8.93,-4.29) -- (0,0) -- (8.93,4.29) -- cycle    ;
\draw [shift={(446.43,1240.71)}, rotate = 270] [fill={rgb, 255:red, 0; green, 0; blue, 0 }  ][line width=0.08]  [draw opacity=0] (8.93,-4.29) -- (0,0) -- (8.93,4.29) -- cycle    ;
\draw    (433.43,1237.51) -- (433.43,1140.71) ;
\draw [shift={(433.43,1137.71)}, rotate = 90] [fill={rgb, 255:red, 0; green, 0; blue, 0 }  ][line width=0.08]  [draw opacity=0] (8.93,-4.29) -- (0,0) -- (8.93,4.29) -- cycle    ;
\draw [shift={(433.43,1240.51)}, rotate = 270] [fill={rgb, 255:red, 0; green, 0; blue, 0 }  ][line width=0.08]  [draw opacity=0] (8.93,-4.29) -- (0,0) -- (8.93,4.29) -- cycle    ;
\draw  [color={rgb, 255:red, 0; green, 0; blue, 0 }  ,draw opacity=0.4 ] (393.48,1176.06) -- (423.48,1238.99) -- (363.48,1238.99) -- cycle ;
\draw  [color={rgb, 255:red, 0; green, 0; blue, 0 }  ,draw opacity=0.4 ][fill={rgb, 255:red, 255; green, 255; blue, 255 }  ,fill opacity=1 ] (382.43,1174.06) .. controls (382.43,1167.86) and (387.45,1162.84) .. (393.64,1162.84) .. controls (399.84,1162.84) and (404.86,1167.86) .. (404.86,1174.06) .. controls (404.86,1180.25) and (399.84,1185.27) .. (393.64,1185.27) .. controls (387.45,1185.27) and (382.43,1180.25) .. (382.43,1174.06) -- cycle ;
\draw  [color={rgb, 255:red, 0; green, 0; blue, 0 }  ,draw opacity=0.4 ] (369.37,1231.58) .. controls (369.37,1228.61) and (371.78,1226.21) .. (374.74,1226.21) .. controls (377.7,1226.21) and (380.1,1228.61) .. (380.1,1231.58) .. controls (380.1,1234.54) and (377.7,1236.94) .. (374.74,1236.94) .. controls (371.78,1236.94) and (369.37,1234.54) .. (369.37,1231.58) -- cycle ;
\draw  [color={rgb, 255:red, 0; green, 0; blue, 0 }  ,draw opacity=0.4 ] (381.7,1231.58) .. controls (381.7,1228.61) and (384.1,1226.21) .. (387.07,1226.21) .. controls (390.03,1226.21) and (392.43,1228.61) .. (392.43,1231.58) .. controls (392.43,1234.54) and (390.03,1236.94) .. (387.07,1236.94) .. controls (384.1,1236.94) and (381.7,1234.54) .. (381.7,1231.58) -- cycle ;
\draw  [color={rgb, 255:red, 0; green, 0; blue, 0 }  ,draw opacity=0.4 ] (394.03,1231.58) .. controls (394.03,1228.61) and (396.43,1226.21) .. (399.4,1226.21) .. controls (402.36,1226.21) and (404.76,1228.61) .. (404.76,1231.58) .. controls (404.76,1234.54) and (402.36,1236.94) .. (399.4,1236.94) .. controls (396.43,1236.94) and (394.03,1234.54) .. (394.03,1231.58) -- cycle ;
\draw  [color={rgb, 255:red, 0; green, 0; blue, 0 }  ,draw opacity=0.4 ] (406.36,1231.58) .. controls (406.36,1228.61) and (408.76,1226.21) .. (411.73,1226.21) .. controls (414.69,1226.21) and (417.09,1228.61) .. (417.09,1231.58) .. controls (417.09,1234.54) and (414.69,1236.94) .. (411.73,1236.94) .. controls (408.76,1236.94) and (406.36,1234.54) .. (406.36,1231.58) -- cycle ;

\draw (286,1145.51) node [anchor=north west][inner sep=0.75pt]  [font=\footnotesize]  {$s$};
\draw (252,1170.66) node [anchor=north west][inner sep=0.75pt]  [font=\footnotesize]  {$x$};
\draw (321,1169.66) node [anchor=north west][inner sep=0.75pt]  [font=\footnotesize]  {$y$};
\draw (451,1195.46) node [anchor=north west][inner sep=0.75pt]  [font=\footnotesize]  {$k-1$};
\draw (419,1180.46) node [anchor=north west][inner sep=0.75pt]  [font=\footnotesize]  {$k$};

\end{tikzpicture}

    \caption{Disjoint sets of vertices at distance $k-1$ from $x$ and $y$ }
    \label{fig:nbrballorcycleIllustration}
\end{figure}

The first is a relatively simple 
$O(m^{1+\frac{k-1}{k+1}})$-time, $2k$-hybrid algorithm.
We use this
hybrid algorithm in
the girth approximation framework described earlier, 
to obtain an $\Ot(m^{1+\frac{\ell -1}{\ell+1}})$-time, $(+1)$-approximation of the girth, where $g=2\ell$ or $g=2\ell -1$ (the running time can also be written as $\Ot(m^{2-\frac{2}{\lceil g/2  \rceil+1}})$).
We remark that using an algorithm of 
\cite{dahlgaard2017cappedkwalks} for $C_{2k}$ detection,
it is possible to obtain a $(+1)$-approximation in $\Ot(\ell^{O(\ell)}\cdot m^{1+\frac{\ell -1}{\ell+1}})$ time (see  Section~\ref{subsec:+1approx} for more details).
However,  the additional $\ell^{O(\ell)}$ factor might be significant even for small values of $\ell$.

The second is the $( \max \{2k,g\}, \tilde{g})$-hybrid algorithm that solves Problem~\ref{P-open problem 1}. 
Its main component is an 
$(2k,2\alpha)$-hybrid algorithm that runs in $\Ot((\frac{k+1}{\alpha-1}+\alpha) \cdot m^{1+\frac{\alpha-1}{k+1}} )$-time, whp, and generalizes the first $O(m^{1+\frac{k-1}{k+1}})$-time $2k$-hybrid algorithm, by introducing an additional parameter $\alpha\leq k$. 
Using $\alpha$ we can tradeoff between the running time and the lower bound on $g$ and obtain a faster running time at the price of a worse lower bound.

We compare our $(2k,2\alpha)$-hybrid  algorithm to algorithm $\Cycle$ of 
Kadria \etal{}~\cite{kadria2022algorithmic}, 
an $O(m+n^{1+\frac{1}{\ell}})$-time,\footnote{ $\Cycle$ runs in $O(n^{1+\frac{1}{\ell}}+m)$ time,  which can be reduced to $O(n^{1+\frac{1}{\ell}})$ time, as shown in \cite{kadria2022algorithmic}.} 
$(2\ell\alpha, \tilde{g})$-hybrid algorithm, where $ \alpha = \lceil \tilde{g} / 2\rceil$, 
that they used to obtain the $\Ot(n^{1+\frac{1}{\ell}})$-time, $2 \ell \lceil g / 2\rceil$-approximation  algorithm. 
As we show later,  the running time of our $(2k,2\alpha)$-hybrid algorithm  can be bounded by $\Ot((\frac{k+1}{\alpha-1}+\alpha) \cdot n^{1+\frac{\alpha}{k}})$.
Since in our algorithm $k$ is not necessarily a multiple of $\alpha$ (compared to the $\ell\alpha$ of $\Cycle$), our algorithm allows more flexibility, and  we achieve many more possible tradeoffs between the running time and the output cycle length.
For example, if we consider a multiplicative approximation better than $3$, when the value of $g$ is a constant known in advance, our algorithm 
can return longer cycles that are still shorter than $3g$, in a faster running time.
See Figure~\ref{fig:graph_3g_approx_length} for a comparison.\footnote{
Kadria \etal{}~\cite{kadria2022algorithmic} presented also an $O((\alpha-c)\cdot n^{1+\frac{\alpha}{2\alpha - c}})$-time, $(4\alpha-2c, 2\alpha)$-hybrid algorithm, where $0 < c \leq \alpha$ are integers. 
For $c=2\alpha-k$, this is an $O((k-\alpha)\cdot n^{1+\frac{\alpha}{k}})$-time, $(2k,2\alpha)$-hybrid algorithm, similar to our $(2k,2\alpha)$-hybrid algorithm.
However, since $0 < c\leq \alpha$, the possible values of $k$ are restricted and must satisfy $\alpha \leq k < 2\alpha$. 
By choosing $\alpha = \lceil \frac{g}{2} \rceil$ and an appropriate $k$ 
the two algorithms have 
a similar flexibility for a $2$-approximation, but since in our algorithm also larger
values of $k$ are allowed,
we can achieve a faster running time for a $t$-approximation where $t>2$. \label{footnoteTC}}


\begin{figure}[t]
\begin{center}
   \begin{multicols}{2}
\hfill
    \begin{tikzpicture}
    
    \definecolor{color0}{rgb}{0.12156862745098,0.66666666666667,0.705882352941177}
    \definecolor{color1}{rgb}{1,0.498039215686275,0.0549019607843137}
    
    \begin{axis}[
    width=0.5\textwidth,
    height= 0.46\textwidth, 
    legend cell align={left},
    legend style={fill opacity=0.8, draw opacity=1, text opacity=1, draw=white!80!black,
    font=\scriptsize,
    },
    tick align=outside,
    tick pos=left,
    x grid style={white!86.2745098039216!black},
    xmajorgrids,
    xmin=2, xmax=19,
    xtick style={color=black},
    xtick={2,4,6,8,10,12,14,16,18},
    y grid style={white!86.2745098039216!black},
    ylabel style={rotate=-90},
    ymajorgrids,
    ymin=1.29, ymax=1.645,
    ytick style={color=black},
    ytick={1.3,1.35,1.4,1.45,1.5,1.55, 1.6},
    ]

    \addplot [blue,mark=*,mark size=1.7pt,draw=blue] table{
    x  y
    3	1.500
    4	1.400
    5	1.429
    6	1.375
    7	1.400
    8	1.364
    9	1.385
    10	1.357
    11	1.375
    12	1.353
    13	1.368
    14	1.350
    15	1.364
    16	1.348
    17	1.360
    18	1.346
    };
    \addlegendentry{$(2k,2\alpha)$-hybrid}

        \addplot [
      draw=red, fill=red,
      mark=*, mark size=1.7pt
    ]
    table{%
   3	1.500
    4	1.500
    5	1.500
    6	1.500
    7	1.500
    8	1.500
    9	1.500
    10	1.500
    11	1.500
    12	1.500
    13	1.500
    14	1.500
    15	1.500
    16	1.500
    17	1.500
    18	1.500
    };
    \addlegendentry{$\Cycle$ \cite{kadria2022algorithmic}
    }

   \addplot [
    dashed,
    dash pattern=on 1pt off 0.8pt, 
            domain=1.5:21, 
            samples=100, 
            color0]{1 + (x)/(3*x - 2)};
    \addlegendentry{$y =1 + \frac{g}{3g-2}$};

    \addplot [dashed, 
    dash pattern=on 1pt off 0.8pt, 
        domain=1.5:21, 
        samples=100, 
        color0]{1 + (x+1)/(3*x - 1)};
    \addlegendentry{$y =1 + \frac{g+1}{3g-1}$};

    \addplot[
        only marks, 
        mark=*, 
        mark size=2.2pt, 
        draw=blue, 
        fill=blue
    ]
    coordinates {(3, 1.500)}; 
    \addplot[
        only marks, 
        mark=*, 
        mark size=1.3pt, 
        draw=red, 
        fill=red
    ]
    coordinates {(3, 1.500)};

    \end{axis}
    \node at (2.8, -1) {(a)};
    \end{tikzpicture}

\columnbreak

    \begin{tikzpicture}
    
    \definecolor{color0}{rgb}{0.12156862745098,0.66666666666667,0.705882352941177}
    \definecolor{color1}{rgb}{1,0.498039215686275,0.0549019607843137}
    
    \begin{axis}[
    width=0.5\textwidth,
    height= 0.46\textwidth, 
    legend cell align={left},
    legend style={at={(0.02,0.98)}, anchor = north west,fill opacity=0.8, draw opacity=1, text opacity=1, draw=white!80!black,
    font=\scriptsize
    },
    tick align=outside,
    tick pos=left,
    x grid style={white!86.2745098039216!black},
    xmajorgrids,
    xmin=2, xmax=19,
    xtick style={color=black},
    xtick={2,4,6,8,10,12,14,16,18},
    y grid style={white!86.2745098039216!black},
    ylabel style={rotate=-90},
    ymajorgrids,
    ymin=6, ymax=57,
    ytick style={color=black},
    ytick={10,20,30,40,50},
    ]
  
    \addplot [blue,mark=*,mark size=1.7pt,draw=blue] table{
    x  y
    3	8
    4	10
    5	14
    6	16
    7	20
    8	22
    9	26
    10	28
    11	32
    12	34
    13	38
    14	40
    15	44
    16	46
    17	50
    18	52
    };
    \addlegendentry{$(2k,2\alpha)$-hybrid} 
    \addplot [
     red,mark=*,mark size=1.7pt,draw=red] table{
     3	8
    4	8
    5	12
    6	12
    7	16
    8	16
    9	20
    10	20
    11	24
    12	24
    13	28
    14	28
    15	32
    16	32
    17	36
    18	36
    };
    \addlegendentry{$\Cycle$ \cite{kadria2022algorithmic}
    }

    \addplot [dashed, 
        dash pattern=on 1pt off 0.8pt, 
        domain=1.5:21, 
        samples=100, 
        color0]{3*x};
    \addlegendentry{$y =3g$};

    \addplot[
        only marks, 
        mark=*, 
        mark size=2.2pt, 
        draw=blue, 
        fill=blue
    ]
    coordinates {(3, 8)}; 
    \addplot[
        only marks, 
        mark=*, 
        mark size=1.3pt, 
        draw=red, 
        fill=red
    ]
    coordinates {(3, 8)}; 
    
    \end{axis}

     \node at (2.8, -1) {(b)};
    \end{tikzpicture}
    \hfill
\end{multicols}
\vspace{-\baselineskip} 
  \caption{
$(2k,2\alpha)$-hybrid algorithm vs.   $\Cycle$ algorithm of  \cite{kadria2022algorithmic}. 
Both produce a multiplicative approximation strictly better than $3$, given a constant $g$ known in advance.
(a) The $y$-axis is the exponent of $n$ in the fastest running time that achieves such an approximation, and the $x$-axis is $g$.  
(b) The $y$-axis is the upper bound on the length of the cycle returned within this time, and the $x$-axis is $g$.}
\label{fig:graph_3g_approx_length}
\end{center}

\end{figure}

The flexibility of our algorithm is also demonstrated in Figure~\ref{fig:graph_g=6}. For a given constant value of $k$, if our $(2k,2\alpha)$-hybrid algorithm returns a cycle then its length is at most $2k$.
If we want algorithm  $\Cycle$  to output a $C_{\leq 2k}$, then $\lfloor \frac{k}{\alpha} \rfloor$ is the largest $\ell$ that we can choose, since $\ell$ must be an integer. The running time is $O(n^{1+1/\ell}) = O(n^{1+1/\lfloor \frac{k}{\alpha} \rfloor})$. Our algorithm achieves a better running time if $k$ is not divisible by $\alpha$. (In Figure~\ref{fig:graph_g=6} we choose $\alpha=3$.)


\begin{figure}[t] 
  \centering

    \begin{tikzpicture}
    
    \definecolor{color0}{rgb}{0.12156862745098,0.466666666666667,0.705882352941177}
    \definecolor{color1}{rgb}{1,0.498,0.0549}
    
    \begin{axis}[
    width=0.65\textwidth,    
    height= 0.42\textwidth, 
    legend cell align={left},
    legend style={fill opacity=0.8, draw opacity=1, text opacity=1, draw=white!80!black,
    font=\scriptsize
    },
    tick align=outside,
    tick pos=left,
    x grid style={white!86.2745098039216!black},
    xmajorgrids,
    xmin=2, xmax=15,
    xtick style={color=black},
    xtick={2,3,4,5,6,7,8,9,10,11,12,13,14},
    y grid style={white!86.2745098039216!black},
    ymajorgrids,
    ymin=1.1, ymax=2.1,
    ytick style={color=black},
    ytick={1.1,1.2,1.3,1.4,1.5,1.6,1.7,1.8,1.9,2.0},
    ]

    \addplot [draw=blue, fill=blue, mark=*, mark size=2.2pt, only marks]
    table{%
    x  y
    3 2
    4 1.75
    5 1.6
    6 1.5
    7 1.42857
    8 1.375
    9 1.33333
    10 1.3
    11 1.27273
    12 1.25
    13 1.23077
    14 1.214286
    };
    \addlegendentry{
    $(2k,2\alpha)$-hybrid
    } 
    
    \addplot [
    draw=red, fill=red, mark=*, mark size=1.7pt, only marks
    ]
    table{%
    x y
    3 2
    4 2
    5 2
    6 1.5
    7 1.5
    8 1.5
    9 1.33333
    10 1.33333
    11 1.33333
    12 1.25
    13 1.25
    14 1.25
    };
    \addlegendentry{
    $\Cycle$ \cite{kadria2022algorithmic}}

    \addplot [thick, 
            domain=2.75:14.95, 
            samples=100, 
            color0]{1 + 1/(x/3)};
    \addlegendentry{$y = 1+\frac{3}{x}$};
    
    \addplot [thick, 
            domain=1.5:14.95, 
            samples=104, 
            jump mark mid,
            color1]{1 + 1/floor(x/3)};
    \addlegendentry{$y =1 + \frac{1}{\lfloor x/3 \rfloor}$};
    \draw[thick, dashed,color1] (6,1.5) -- (6,2);
    \draw[thick, dashed,color1] (9,1.33333) -- (9,1.5);
    \draw[thick, dashed,color1] (12,1.25) -- (12,1.33333);

    \end{axis}
    
    \end{tikzpicture}

  \captionsetup{justification=raggedright}
  \caption{
  $\alpha=3$.  Red points are the $O(n^{1+1/\lfloor \frac{k}{\alpha} \rfloor})$-time $\Cycle$ algorithm \cite{kadria2022algorithmic}, and blue points are our $\tO(n^{1+\frac{\alpha}{k}})$-time $(2k, 2\alpha)$-hybrid algorithm. Both algorithms either return a $C_{\leq 2k}$ or determine that $g>2\alpha$, for a constant $k$.
The $y$-axis is the exponent of $n$ in the running time. The $x$-axis is $k$.}
  \label{fig:graph_g=6}
\end{figure}

Next, we overview our $(2k,2\alpha)$-hybrid algorithm that either finds a $C_{\leq 2k}$ or determines that $g> 2\alpha$ in $\Ot((\frac{k+1}{\alpha-1}+\alpha) \cdot m^{1+\frac{\alpha-1}{k+1}} )$ time. 
In order to determine that  $g\geq 2\alpha$, we can check for every $v\in V$ if $v$ is on a $C_{\leq 2\alpha}$ (and either find a $C_{\leq 2k}$ since  $\alpha\leq k$, or return that $g>2\alpha$).
If $v$ is on a $C_{\leq 2\alpha}$, then all the vertices and edges of this $C_{\leq 2\alpha}$ are at distance at most $\alpha$ from $v$.
If, for every $v\in V$, 
the number of edges at distance at most $\alpha$ is  $O(deg(v)\cdot m^{\frac{\alpha-1}{k+1}})$ 
then 
using standard techniques we can check for every $v\in V$ if $v$ is on a $C_{\leq 2\alpha}$ in $O(m^{1+\frac{\alpha-1}{k+1}})$ time.
However, this is not necessarily the case, and the region at  distance at most $\alpha$ from some vertices might be dense. 
To deal with dense regions within the promised running time  we develop  an iterative sampling procedure (see $\BFSSampleM$ in Section~\ref{sec:one-alg}), whose goal is to sparsify the graph, or to return a $C_{\leq 2k}$.
One component of the iterative sampling procedure is a generalization of our new cycle searching technique mentioned above.
In the generalization instead of checking whether a vertex $s$ and its neighbors are on a $C_{\leq 2k}$, we check whether all the vertices up to a  possibly further distance from $s$ are on a $C_{\leq 2\alpha}$, for $\alpha\leq k$, and if not we mark them so that they can be removed later. 

If the  iterative sampling procedure ends without finding a   $C_{\leq 2k}$ then there are two possibilities. 
Let $r=(k+1) \bmod (\alpha -1)$.
If $r=0$ then it holds that 
the number of edges at distance at most $\alpha$ from every $v\in V$ is  $O(deg(v)\cdot m^{\frac{\alpha-1}{k+1}})$, whp, as required.
If $r>0$ then 
it holds that 
the number of edges at distance at most $r$ from every $v\in V$ is  $O(m^{\frac{r}{k+1}})$, whp.
This does not necessarily imply that the graph is sparse enough for checking whether $g>2\alpha$.
In this case, we run another algorithm (see $\HandleRemainderM$ in Section~\ref{sec:one-alg}) that continues to sparsify the graph until 
the number of edges at distance at most $\alpha$ from every $v\in V$ is  $O(deg(v)\cdot m^{\frac{\alpha-1}{k+1}})$
and 
checking whether $g>2\alpha$ is possible within the required running time of $O(m^{1+\frac{\alpha-1}{k+1}})$.

\section{Preliminaries}
\label{sec:prelim}

Let $G=(V,E)$ be an unweighted undirected graph  with $n$ vertices and $m$ edges. 
Let $U\subseteq V$ be a set of vertices and let $G\setminus U$  be the graph obtained from $G$ by deleting all the vertices of $U$ together with their incident edges. 
For two graphs $G=(V,E)$ and $G'=(V',E')$, let $G \setminus G'$ be $G\setminus V'$. We say that $G \subseteq G'$ if $V\subseteq V'$ and $E \subseteq E'$. For convenience, we use both $u \in V$ and $u\in G$ to say that $u \in V$.
For every $u,v \in V$, let $d_G(u,v)$
be the length of a shortest path between $u$ and $v$ in $G$. 
The \emph{girth} $g$ of $G$ is the length of a shortest cycle in  $G$. 
Let $\wt(C)$ be the length of a cycle $C$. 
For an integer $\ell$, we denote a cycle of length (at most) $\ell$ by ($C_{\leq \ell}$) $C_\ell$.\footnote{Both $C_{\ell}$ and $C_{\leq \ell}$ might  not be simple cycles. However, the cycles that our algorithms return are simple.}
Let $E(v)$ be the edges incident to $v$ and $E(v,i)$ the $i$th edge in $E(v)$. 
Let $deg_{G}(v)$ be the degree of $v$ in $G$.
Let $N(v)$ be the set of neighbors of $v$, namely $N(v) = \{w \mid (v,w)\in E\}$.
For an edge set  $S$, let 
$V(S)$ be the endpoints of $S$'s edges, that is, 
$V(S)=\{u\in V \mid \exists (u,v)\in S\}$.
Let $e=(u,v)\in E$ and $w\in V$. The distance $d_G(w,e)$ between $w$ and $e$  is $\min\{ d_G(w, u), d_G(w,v) \} + 1$.
For every $u\in V$ and a real number $k$ 
let $B(G,u,k) =(V_u^k(G),E_u^k(G))$ be the 
\emph{ball graph} of $u$, where $V_u^k(G)=\{ v\in V\mid d_G(u,v)\leq k\}$ and $E_u^k(G)=\{e\in E \mid d_G(u,e)\leq k\}$~\cite{kadria2022algorithmic}.
For an integer $\ell \geq 0$ and a vertex $u\in V$, Let $L_u^\ell(G)=
\{ w \mid d_{G}(u,w)=\ell\}$.\footnote{When the graph is clear from the context, we sometimes omit $G$ from the notations.}

We now turn to  present several  essential tools  that are required in order to obtain our new algorithms.
We first restate an important property of the ball graph $B(v, R)$.

\begin{lemma}[\cite{kadria2022algorithmic}] \label{L-not-part-of}
Let  $0\leq t \leq R$ be two integers and let $v\in V$. If $B(v, R)$ is a tree then  no vertex in $V_v^{R-t}$  is part of a cycle of length at most $2t$ in $G$.

\end{lemma}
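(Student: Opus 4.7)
The plan is to argue by contradiction: assume there is some vertex $u \in V_v^{R-t}$ that lies on a cycle $C$ of length $\ell \le 2t$ in $G$. I will show that every vertex and every edge of $C$ already belongs to $B(v, R)$, so $C$ is a cycle inside $B(v, R)$. Since $B(v, R)$ is a tree by hypothesis, this yields the desired contradiction.

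The first step is to bound distances from $u$ to objects on $C$ using the cycle itself as a walk. Traversing $C$ from $u$ in either direction, every $w \in V(C)$ is reached in at most $\lfloor \ell / 2 \rfloor \le t$ edges, so $d_G(u, w) \le t$. For an edge $e = (a, b) \in C$, the endpoint of $e$ that is closer to $u$ along $C$ is reached in at most $\lfloor \ell / 2 \rfloor - 1 \le t - 1$ steps when $\ell$ is even, and at most $t-1$ steps when $\ell$ is odd, so by definition $d_G(u, e) = \min\{d_G(u, a), d_G(u, b)\} + 1 \le t$. (One needs to inspect both parities of $\ell$, but in each case the bound holds.)

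The second step lifts these bounds from $u$ to $v$ by the triangle inequality. For a vertex $w \in V(C)$, $d_G(v, w) \le d_G(v, u) + d_G(u, w) \le (R - t) + t = R$, so $w \in V_v^R(G)$. For an edge $e \in C$, the same estimate through a nearest endpoint gives $d_G(v, e) \le d_G(v, u) + d_G(u, e) \le R$, so $e \in E_v^R(G)$. Therefore the whole of $C$ is a subgraph of $B(v, R)$, contradicting the assumption that $B(v, R)$ is a tree. The argument is essentially a triangle-inequality computation and no real obstacle is expected; the only subtle point is the verification that edges of $C$ lie within distance $t$ of $u$, which follows from the cycle-walk bound above.
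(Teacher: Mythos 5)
The paper cites this lemma from Kadria et al.\ without reproducing a proof, so there is no in-paper argument to compare against. Your proof is correct and is the natural triangle-inequality argument one would expect: you bound $d_G(u,w)\le\lfloor\ell/2\rfloor\le t$ and $d_G(u,e)\le t$ for every vertex $w$ and edge $e$ of the cycle by walking along the cycle, then lift via $d_G(v,\cdot)\le d_G(v,u)+d_G(u,\cdot)\le (R-t)+t=R$ to conclude the whole cycle sits inside $B(v,R)$, contradicting acyclicity. The parity check on the edge bound is carried out correctly (for $\ell$ even the nearer endpoint of the farthest edge is at cycle-distance $\ell/2-1\le t-1$; for $\ell$ odd, $\ell\le 2t-1$ forces both endpoints of the antipodal edge to be at distance $(\ell-1)/2\le t-1$), and the extension of the triangle inequality to the vertex-edge distance $d_G(v,e)\le d_G(v,u)+d_G(u,e)$ follows from the definition $d_G(\cdot,e)=\min\{d_G(\cdot,a),d_G(\cdot,b)\}+1$. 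No gaps.
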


We  use procedure  $\BallOrCycle(G,v,R)$ \cite{kadria2022algorithmic,lingas2009efficient} (for completeness, we include a pseudo-code in Algorithm~\ref{A-BallOrCycle-modified}) that  searches for a $C_{\leq 2R}$  in the ball graph $B(v,R)$.
We summarize the properties of $\BallOrCycle$ in the next lemma.

\begin{algorithm2e}[t]
    \caption{$\BallOrCycle(G,v,R)$ \cite{kadria2022algorithmic,lingas2009efficient}}
    \label{A-BallOrCycle-modified}
    $Q \gets$ a queue that contains $v$ with $d(v)=0$\;
      \While{$Q \neq \emptyset$}{
      $u \gets  dequeue(Q)$\; 
      
      $V_v^R  \gets V_v^R \cup \{u\}$\; 
      
      $i \gets 1$\;
      \While{$(i\leq |E(u)|) \codeAnd d(u)+1\leq R$}{
            $(u,w)\gets E(u,i)$\;
            \If{$w\in Q$}{
                \Return $\codeNull$, $P(LCA(u,w),u)\cup \{ (u,w) \} \cup P(LCA(u,w),w)$\footnotemark\;
            }
            $Q \gets Q \cup \{ w \}$ with $d(w) = d(u)+1$\;

            $i \gets i +1$\;
          }
   
      }
      \Return 
      $V_v^R$, $\codeNull$\;
\end{algorithm2e}

\footnotetext{$LCA(u,w)$ is the least common ancestor of $u$ and $w$ in the tree rooted at $v$ before the edge $(u,w)$ was discovered. $P(x,y)$ is the path in this tree between $x$ and $y$.}

\begin{lemma}[\cite{kadria2022algorithmic}] \label{L-BallOrCycle}
Let $v\in V$. If the ball graph $B(v,R)$ is not a tree then 
\mbox{\rm $\BallOrCycle(G,v,R)$} returns a  $C_{\leq 2R}$ from $B(v,R)$. 
If $B(v,R)$ is a tree then \mbox{\rm $\BallOrCycle(G,v,R)$} returns $V_v^R$.\footnote{If $V_v^R$ is returned then we assume that $V_v^R$ is ordered by the distance from $v$, and for every $u \in V_v^R$ we  store $d(u,v)$ with $u$. Thus, given the set $V_v^R$, we can find $V_v^{R'}$ for every $R'<R$ in $O(|V_v^{R'}|)$ time.}
The running time of  $\BallOrCycle(G,v,R)$ is $O(|V_v^R|)$.
\end{lemma}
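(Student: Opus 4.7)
The plan is to view $\BallOrCycle(G, v, R)$ as a depth-bounded breadth-first search from $v$ that grows a BFS tree level by level: the outer loop dequeues a vertex $u$ of current depth $d(u)$; the inner loop, for each $u$ with $d(u) \leq R - 1$, scans $E(u)$ and either enqueues a fresh neighbor (extending the tree) or, upon encountering an already-discovered neighbor $w$, returns the unique tree cycle closed by the edge $(u, w)$. From this vantage point the tree case and the cycle case are essentially mutually exclusive and the proof splits cleanly into three parts: tree-case correctness, cycle-case correctness (with the length bound), and the running-time accounting.

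For the tree case, if $B(v, R)$ is acyclic, the inner check ``$w \in Q$'' can never succeed, for any such event would mean that $(u, w)$ together with the two tree paths to their LCA forms a cycle of $B(v, R)$. Consequently the BFS runs to completion, dequeuing exactly the vertices of $V_v^R$ with the correct BFS depths (i.e., $d_G(v, \cdot)$), which is precisely the returned set together with the annotation required by the footnote. For the cycle case, let $(u, w)$ be the first edge the inner loop probes with $w \in Q$. The guard $d(u) + 1 \leq R$ forces $d(u) \leq R - 1$, and $w$ was enqueued earlier under the analogous guard so $d(w) \leq R$. Letting $x = \mathrm{LCA}(u, w)$ in the BFS tree built so far, the paths $P(x, u)$ and $P(x, w)$ are internally vertex-disjoint by the LCA property, so $P(x, u) \cup \{(u, w)\} \cup P(x, w)$ is a simple cycle contained in $V_v^R$ of length
\[
(d(u) - d(x)) + 1 + (d(w) - d(x)) \;\leq\; d(u) + d(w) + 1 \;\leq\; (R - 1) + R + 1 \;=\; 2R,
\]
which proves the $C_{\leq 2R}$ guarantee.

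For the running time, I would charge each probe of an edge $(u, w)$ either to the newly enqueued vertex $w$ (when $w \notin Q$) or to the single terminating non-tree edge (when $w \in Q$). Since every vertex is enqueued at most once, this bounds the total number of probes by $|V_v^R|$, and the total dequeue bookkeeping is also $O(|V_v^R|)$, giving the claimed $O(|V_v^R|)$ overall. The main obstacle I anticipate is precisely this charging: a naive reading of the pseudocode might suggest $\Theta(|E_v^R|)$ work, and showing that only $O(1)$ amortized work is incurred per enqueued vertex — thanks to the early termination on the first non-tree edge, and in the tree case to the identity $|E_v^R| = |V_v^R| - 1$ — is the delicate accounting that makes the lemma sharp, and the step I would write most carefully.
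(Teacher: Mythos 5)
Lemma~\ref{L-BallOrCycle} is stated with a citation to~\cite{kadria2022algorithmic}; the paper does not reproduce a proof, so there is no in-text argument to compare your proposal against. Your reconstruction has the right skeleton — the tree case, the cycle case with the length bound $d(u)+d(w)+1\le (R-1)+R+1=2R$ coming from the guard $d(u)+1\le R$, and the running time via early termination — and is essentially what one would write.

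The one place you should tighten is exactly the step you flagged. You charge every edge probe ``either to the newly enqueued vertex $w$ (when $w\notin Q$) or to the single terminating non-tree edge (when $w\in Q$),'' but on an undirected graph each tree edge is scanned a second time from the child's side: when $u$ is dequeued, the edge back to $u$'s BFS parent is probed, and that probe neither enqueues a new vertex nor is it ``the'' terminating probe. Your charging as written does not account for these. The repair is routine — there is at most one such probe per dequeued vertex, so the total remains $O(|V_v^R|)$ — but they should be charged to the dequeued vertex $u$ rather than absorbed into the terminating edge. The same re-probe is also why the literal test ``$w\in Q$'' in Algorithm~\ref{A-BallOrCycle-modified} cannot be read naively: if $Q$ means ``ever discovered,'' the parent edge fires the return and $P(\mathrm{LCA}(u,w),u)\cup\{(u,w)\}\cup P(\mathrm{LCA}(u,w),w)$ degenerates to a doubled edge; if $Q$ is strict FIFO, the parent gets re-enqueued with a wrong distance. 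The intended reading is ``$w$ already discovered \emph{and} $(u,w)$ is not the tree edge to $u$'s parent.'' That is a looseness of the imported pseudocode, not of your argument, but it lives precisely in the accounting you identified as delicate, so it merits an explicit sentence.
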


Next, we obtain a simple $t$-hybrid algorithm, called $\dHybrid$, using 
$\BallOrCycle$. $\dHybrid$ (see Algorithm~\ref{A-dHybrid}) gets a graph $G$ and an integer $t\geq 2$, and runs $\BallOrCycle(v,t)$ from every  $v\in G$ as long as no cycle is found by $\BallOrCycle$. If $\BallOrCycle$ finds a cycle then $\dHybrid$ stops and returns that cycle. If no cycle is found then $\dHybrid$ returns $\codeNull$. We prove the next lemma.

 \begin{algorithm2e}[t]
    \caption{$\dHybrid(G,t)$} \label{A-dHybrid}
	\ForEach{$v\in V$}{
	    $(V_v^{t},C) \gets \BallOrCycle(v, t)$\;
	        \lIf {$C\neq \codeNull$}{\Return $C$}
	}
	\Return $\codeNull$\;
\end{algorithm2e}

\begin{lemma}
\label{L-dHybrid-properties} 
    $\dHybrid(G,t)$ either finds a $C_{\leq 2t}$ or determines that $g>2t$, in $O(\sum_{v\in V}|V_v^t|)$ time.
\end{lemma}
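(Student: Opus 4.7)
The plan is to argue correctness and running time separately, each as a short direct consequence of Lemma~\ref{L-BallOrCycle} (for \BallOrCycle) together with Lemma~\ref{L-not-part-of}.

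\textbf{Correctness.} By Lemma~\ref{L-BallOrCycle}, each call $\BallOrCycle(v,t)$ returns either a $C_{\leq 2t}$ drawn from $B(v,t)$ or the set $V_v^t$ witnessing that $B(v,t)$ is a tree. In the first case $\dHybrid$ immediately returns a valid $C_{\leq 2t}$, so the output is correct. In the second case $\dHybrid$ iterates through every $v\in V$ and each call returns $V_v^t$, i.e.\ $B(v,t)$ is a tree for every $v\in V$. I will then apply Lemma~\ref{L-not-part-of} with $R=t$ and the lemma's parameter also equal to $t$, so that $V_v^{R-t}=V_v^0=\{v\}$: since $B(v,t)$ is a tree, $v$ is not part of any cycle of length at most $2t$. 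As this holds for every $v\in V$, no vertex lies on a cycle of length at most $2t$, and hence $g>2t$. Thus returning $\codeNull$ correctly witnesses $g>2t$.

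\textbf{Running time.} Each iteration of the for-loop costs exactly the cost of $\BallOrCycle(v,t)$, which by Lemma~\ref{L-BallOrCycle} is $O(|V_v^t|)$. Summed over all iterations actually performed, this is bounded by $O(\sum_{v\in V}|V_v^t|)$; the loop may terminate earlier if a cycle is discovered, which can only decrease the cost. Combining the two parts gives the lemma.

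The only step that requires care is the conversion ``all balls are trees $\Rightarrow$ $g>2t$''. This is where invoking Lemma~\ref{L-not-part-of} at its boundary case $R=t$ is essential; without it one would need to re-prove the easy fact that a cycle of length $\leq 2t$ through $v$ sits inside $B(v,t)$ and forces that ball to contain a cycle. Since Lemma~\ref{L-not-part-of} is already established, this step is immediate and there is no real obstacle.
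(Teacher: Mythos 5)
Your proof is correct and follows essentially the same route as the paper: reduce correctness to Lemma~\ref{L-BallOrCycle}, conclude $g>2t$ from the fact that every ball $B(v,t)$ is a tree, and sum the per-vertex costs for the running time. The only stylistic difference is that you explicitly invoke Lemma~\ref{L-not-part-of} at its boundary instance to justify that a tree ball $B(v,t)$ implies $v$ is not on any $C_{\leq 2t}$, whereas the paper states this step directly as a consequence of Lemma~\ref{L-BallOrCycle} without citing Lemma~\ref{L-not-part-of}; your version is marginally more explicit but the substance is identical.
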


\begin{proof}
By Lemma~\ref{L-BallOrCycle}, if $\BallOrCycle(u,t)$ returns a cycle $C$ then  $\wt(C)\leq 2t$. 
Also by Lemma~\ref{L-BallOrCycle}, if $\BallOrCycle(u,t)$ does not return a cycle then the ball graph $B(u,t)$ is a tree, and specifically $u$ is not part of a $C_{\leq 2t}$ in $G$.
Therefore, if no cycle was found during any of the calls then all the vertices in $G$ are not part of a $C_{\leq 2t}$ in $G$. Hence, $G$ does not contain a $C_{\leq 2t}$, and we get that $g > 2t$.
By Lemma~\ref{L-BallOrCycle}, the running time of $\BallOrCycle(v,t)$ is $O(|V_v^t|)$, which is $O(\sum_{v\in V}|V_v^t|)$ in total.
\end{proof}

We now show that if the input graph satisfies a certain sparsity property then the running time of $\dHybrid(G,t)$ can be bounded as follows.

\begin{corollary}
\label{C-dHybrid-sparse-time}
If $|E_u^{t-1}| < D^{t-1}
$ for every $u\in V$ then $\dHybrid(G,t)$ runs in $O(mD^{t-1})$ time. 
\end{corollary}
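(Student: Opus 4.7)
The plan is to invoke Lemma~\ref{L-dHybrid-properties} to reduce the task to bounding $\sum_{v\in V}|V_v^t|$, and then to use the sparsity hypothesis $|E_u^{t-1}|<D^{t-1}$ to control each summand. The delicate point is that the hypothesis bounds the number of \emph{edges} within radius $t-1$, while the running time depends on the number of \emph{vertices} within radius $t$, so I need a clean way to cross both gaps while losing only a constant factor.

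First I would establish the elementary bound $|V_w^{t-1}|\le 1+|E_w^{t-1}|$ for every $w\in V$, via a parent-edge injection. Every $v\in V_w^{t-1}\setminus\{w\}$ with $d_G(w,v)=s\ge 1$ has a shortest-path predecessor $v'$ at distance $s-1$, and by the definition of edge distance the edge $(v,v')$ has $d_G(w,(v,v'))=\min(s,s-1)+1=s\le t-1$, so it lies in $E_w^{t-1}$; since the deeper endpoint uniquely determines the edge, distinct $v$'s produce distinct edges. Combined with the hypothesis this yields $|V_w^{t-1}|\le 1+D^{t-1}$. Next, for every $u\in V$, each $v\in V_u^t\setminus\{u\}$ is a neighbour of its BFS-parent, which lies in $V_u^{t-1}$, so
\[
|V_u^t|\;\le\;1+\sum_{w\in V_u^{t-1}}deg_G(w).
\]
Summing over $u\in V$ and swapping the two sums using the symmetry of undirected distances (i.e.\ $w\in V_u^{t-1}$ iff $u\in V_w^{t-1}$) gives
\[
\sum_{u\in V}|V_u^t|\;\le\;n+\sum_{w\in V}deg_G(w)\cdot|V_w^{t-1}|\;\le\;n+(1+D^{t-1})\sum_{w\in V}deg_G(w)\;=\;n+2m(1+D^{t-1}),
\]
which is $O(mD^{t-1})$ under the standard assumption that the graph has no isolated vertices (so $n\le 2m$; any isolated vertex anyway contributes only $O(1)$ time to $\dHybrid$).

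The main obstacle is essentially bookkeeping: making sure that the sum-swap is valid and that the edge-sparsity hypothesis cleanly translates into a vertex bound at the same radius. Both reduce to the parent-edge injection, so no further idea beyond these standard manipulations should be needed.
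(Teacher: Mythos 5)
Your proof is correct, and in substance it is the same argument as the paper's: expand the radius-$t$ vertex ball by one step, charge the expansion to vertex degrees, and sum. The paper does this a bit more directly via the edge-ball inclusion $E_v^{t}\subseteq\bigcup_{u\in N(v)}E_u^{t-1}$, which lets the hypothesis $|E_u^{t-1}|<D^{t-1}$ be substituted termwise without a sum-swap or a separate vertex/edge translation; you instead work entirely with vertex balls, use the symmetry $w\in V_u^{t-1}\Leftrightarrow u\in V_w^{t-1}$ to swap the sums, and invoke the parent-edge injection to turn the edge-sparsity hypothesis into a vertex bound. Both routes are elementary and equally tight; the paper's is slightly shorter, yours makes the bookkeeping more explicit.
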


\begin{proof}
For every $v\in V$, we have $O(|V_v^{t}|) \leq 
    O(|E_v^{t}|) \leq 
    O(\sum_{u \in N(v)} |E_{u}^{t-1}|) 
    $, which is at most 
    $ 
    O(\sum_{u \in N(v)} D^{t-1}) =
    O(deg(v)\cdot  D^{t-1})$.
   Thus, by Lemma~\ref{L-dHybrid-properties}, the running time of $\dHybrid(G,t)$ is 
   $O(\sum_{v\in V}|V_v^t|)
   \leq 
   O(\sum_{v\in V}deg(v)\cdot  D^{t-1}) = O(m  D^{t-1})$.    
\end{proof}

Next, we present procedure $\IsDense(G,w,T,r)$ from \cite{kadria2022algorithmic} (for completeness, we include a pseudo-code in Algorithm~\ref{A-Dense}). 
$\IsDense$ gets 
a graph $G$, a vertex $w$, a budget $T \geq 1$ (real) and a distance $r\geq 0$ (integer). 
 In the procedure a BFS is executed from $w$. The BFS   counts the edges that are scanned as long as their total number is less than $T$ and the
farthest vertex from $w$ is at distance at most $r$.

\begin{algorithm2e}[t]
    \caption{$\IsDense(G,w,T,r)$ \cite{kadria2022algorithmic}} 
    \label{A-Dense}
    $T'\gets 0$, $\ell_c \gets 1$, $\ell_n \gets 0$, $j\gets 0$\;
    $Q \gets \{w \}$\; 
    \While{$\big(Q\text{ is not empty }\big) \codeAnd \big( j < r\big) \codeAnd \big( T'< T\big)$}{
	    $u \gets dequeue(Q)$\;
	    $\ell_c \gets \ell_c - 1$, $i \gets 1$\;
     	\While{\big($i\leq |E(u)|\big)  \codeAnd \big( T'< T$\big)}{
     	    $(u,v)\gets E(u,i)$\;
     	    remove $(u,v)$ from $E(v)$\;
     	    $T'\gets T'+1$\;
      		\If{$v$ is not marked}{
      		    $Q \gets Q \cup \{ v \}$, mark $v$, $\ell_n \gets \ell_n +1$\;
		    }
		$i \gets i +1$\;
      	}
	\If{$\ell_c  = 0$}{
	    $\ell_c \gets \ell_n$, $\ell_n\gets 0$\;
	    $j \gets j+1$\;
	}
}

\lIf{$\big(T' \geq T \big)
$
}{\Return $\codeYes$}
\lElse{\Return $\codeNo$}
\end{algorithm2e}

\begin{lemma}[\cite{kadria2022algorithmic}] \label{L-Dense}

Procedure \mbox{\rm $\IsDense(G,w,T,r)$} runs in $O(
\lceil T \rceil) = O(T)$ 
time. If $\IsDense(G,w,T,r)=\codeNo$ then $|E_w^r|<T$.
If  $\IsDense(G,w,T,r) = \codeYes$ 
then $|E_w^r| \geq T$.

\end{lemma}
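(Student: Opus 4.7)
The plan is to interpret $\IsDense(G,w,T,r)$ as a truncated BFS from $w$ governed by two simultaneous stopping rules, a depth cap $j<r$ and an edge-budget cap $T'<T$, and then to determine what $T'$ counts at termination. A basic invariant I would establish first is that every undirected edge contributes at most once to $T'$: once $(u,v)$ is scanned from $u$'s adjacency list it is immediately deleted from $E(v)$, so any subsequent dequeue of $v$ cannot rescan it. Hence $T'$ always counts distinct edges that have been discovered so far.

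The running time follows directly from the budget. Each inner-loop iteration performs $O(1)$ work and, unless it exits at $i>|E(u)|$, increments $T'$ by one; the guard $T'<T$ caps the number of $T'$-incrementing iterations at $\lceil T\rceil$. Every dequeue beyond $w$ is paid for by a prior enqueue inside some inner iteration, and the $O(1)$ level-bookkeeping after each inner loop is absorbed. Hence the total cost is $O(\lceil T\rceil)=O(T)$.

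For correctness, I would prove by induction on $d\ge 0$ that, provided the budget has not yet been exhausted, once $j$ has just been incremented to $d+1$ the variable $T'$ equals $|E_w^{d+1}|$. The inductive step uses the fact that processing all level-$d$ vertices scans precisely the edges whose $w$-nearer endpoint lies at distance $d$, and no double counting occurs by the invariant above. Now if the procedure returns $\codeNo$, then $T'<T$ throughout, so the main loop exited either because $Q$ became empty (which forces the eccentricity of $w$ to be smaller than $r$ and still yields $T'=|E_w^r|$) or because $j$ reached $r$ (also yielding $T'=|E_w^r|$). Either way $|E_w^r|<T$. For the $\codeYes$ branch I would argue in the opposite direction: since $j$ never exceeds $r$ while a vertex is being dequeued, every scanned edge has its dequeued endpoint at distance at most $r-1$ from $w$, so by the definition of $E_w^r$ this edge lies in $E_w^r$. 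Combined with no double counting this gives $T'\le|E_w^r|$, and $T'\ge T$ then yields $|E_w^r|\ge T$.

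The main obstacle is the level-tracking invariant: $\ell_c$ and $\ell_n$ need to correctly separate the current from the next BFS level, yet the inner loop may be truncated mid-level if the budget is exhausted. Fortunately, mid-level truncation can only happen once $T'\ge T$, at which point the procedure returns $\codeYes$, and for that branch only the one-sided bound $T'\le|E_w^r|$ is required, which is immediate from the no-double-counting invariant. Hence the inductive level argument is only needed in the untruncated regime, where it is routine.
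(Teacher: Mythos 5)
The paper states this lemma with a citation to \cite{kadria2022algorithmic} and does not prove it internally, so there is no in-paper argument to compare your proposal against. On its own merits, your proof is correct: the no-double-counting invariant (a scanned edge $(u,v)$ is immediately deleted from $E(v)$, so it cannot be rescanned from $v$), the charging of every dequeue beyond $w$ to a prior $T'$-incrementing inner iteration for the $O(\lceil T\rceil)$ bound, the level induction that $T'=|E_w^{d+1}|$ at the moment $j$ is bumped to $d+1$, and the one-sided inclusion $T'\le|E_w^r|$ for the $\codeYes$ branch are all exactly what the pseudocode supports. Two points you handle correctly that are easy to get wrong: on the $\codeNo$ branch the loop may exit either via $j=r$ or via $Q$ emptying, and you observe that in the latter case the component has eccentricity below $r$ so $T'=|E_w^r|$ still holds; and you isolate mid-level budget truncation to the $\codeYes$ branch, where only the one-sided bound $T'\le|E_w^r|$ is required, neatly sidestepping the only place where the $\ell_c/\ell_n$ level bookkeeping would complicate an exact-count argument.
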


Given a vertex $v$ and a distance $R$ we sometimes want to bound 
$|E_{v}^{R}|$. Therefore, we adapt a lemma and a corollary of \cite{kadria2022algorithmic} from vertices to  edges.

\begin{lemma}\label{L-SumOfDenseM}
Let $x,y$ be positive integers, let $D\geq 1$ be a real number, and let $w\in V$.
If $|E_w^{x}| < D^{x}$, and $|E_u^{y}| < D^{y}$ for every $u\in V$, then $|E_w^{x+y}| < D^{x+y}$. 
\end{lemma}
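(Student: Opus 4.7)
The plan is to charge each edge $e \in E_w^{x+y}$ to a vertex $u(e) \in V_w^x$ whose radius-$y$ ball already contains $e$, and then apply the two ball-size hypotheses.

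First, I would fix any BFS tree $T$ of $G$ rooted at $w$. For an edge $e = (a,b) \in E_w^{x+y}$, let $a$ be the endpoint with $d_G(w,a) = d_G(w,e) - 1 \le x + y - 1$, and define $u(e)$ to be the ancestor of $a$ in $T$ at depth $\min\{x,\, d_G(w,a)\}$. By construction $u(e) \in V_w^x$, and a direct computation gives
\[
d_G(u(e), a) \;=\; \max\{0,\; d_G(w,a) - x\} \;\le\; y - 1,
\]
so $d_G(u(e), e) \le y$, i.e., $e \in E_{u(e)}^y$. Summing over this assignment yields the key inequality
\[
|E_w^{x+y}| \;\le\; \sum_{u \in V_w^x} |E_u^y| \;\le\; |V_w^x| \cdot \max_{u \in V} |E_u^y|.
\]

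Next, I would bound $|V_w^x|$ in terms of $|E_w^x|$. The restriction of $T$ to $V_w^x$ is a spanning tree of $V_w^x$, and each of its $|V_w^x| - 1$ edges has distance at most $x$ from $w$ and therefore lies in $E_w^x$. Hence $|V_w^x| \le |E_w^x| + 1$. Combined with the hypotheses $|E_w^x| < D^x$ and $\max_u |E_u^y| < D^y$, and using the integrality of $|E_w^x|$ to conclude $|E_w^x| + 1 \le D^x$, we obtain
\[
|E_w^{x+y}| \;\le\; (|E_w^x| + 1) \cdot \max_{u \in V} |E_u^y| \;<\; D^x \cdot D^y \;=\; D^{x+y}.
\]

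The main subtlety is the last step: propagating the strict inequality $|E_w^x| < D^x$ through the additive $+1$ coming from the vertex-versus-edge gap $|V_w^x| \le |E_w^x| + 1$. I expect this to be handled by a short integer case analysis, using that $|E_w^x|$ and each $|E_u^y|$ are non-negative integers while $D \ge 1$ is real, so that the two strict inequalities $|E_w^x| < D^x$ and $\max_u |E_u^y| < D^y$ together leave enough slack to keep the product strictly below $D^{x+y}$.
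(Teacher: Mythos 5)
Your charging strategy is essentially the same as the paper's (the paper writes $E_w^{x+y} = \bigcup_{u\in L}E_u^y$ with $L = V_w^x\setminus\{w\}$, which is exactly a vertex-assignment for each edge), but your final arithmetic step is a genuine gap, and the ``integer case analysis'' you anticipate does not exist. The claim that $|E_w^x| < D^x$ with $|E_w^x|$ an integer implies $|E_w^x| + 1 \le D^x$ is false when $D^x$ is not an integer: take $|E_w^x| = 5$, $D^x = 5.5$. Worse, the whole product bound can fail numerically: with $D^x = D^y = 5.1$, $|E_w^x| = 5$, and $\max_u|E_u^y| = 5$, your chain gives $(|E_w^x|+1)\cdot\max_u|E_u^y| = 30$, which exceeds $D^{x+y} = 26.01$. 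So no amount of exploiting integrality of the two factors rescues the inequality $(|E_w^x|+1)\max_u|E_u^y| < D^{x+y}$ under the stated hypotheses. (In the paper's applications $D$ is typically $m^{1/(k+1)}$, so non-integer $D^x$ is the generic case, not an edge case.)

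The root cause is that your assignment $u(e)$ charges edges incident to $w$ to $w$ itself, which forces you to count $|V_w^x|$ rather than $|V_w^x|-1$ vertices, producing the troublesome $+1$. The fix is small and is precisely what the paper does: never charge to $w$. For $e = (w,b)$ incident to $w$, charge $e$ to $b$ instead, which lies in $V_w^x\setminus\{w\}$ (since $x\ge 1$) and satisfies $d(b,e)=1\le y$; all other edges are already charged to vertices in $V_w^x\setminus\{w\}$ by your rule, since $d_G(w,a)\ge 1$ forces $\min\{x,d_G(w,a)\}\ge 1$. Then $|E_w^{x+y}| \le \sum_{u\in V_w^x\setminus\{w\}}|E_u^y| < |V_w^x\setminus\{w\}|\cdot D^y \le |E_w^x|\cdot D^y < D^x D^y$, where $|V_w^x\setminus\{w\}| = |V_w^x| - 1 \le |E_w^x|$ uses your spanning-tree observation and requires no integrality argument at all.
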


\begin{proof}
Let $w\in V$ and assume that  $|E_w^{x}| < D^{x}$. 
We also know that $|E_u^{y}| < D^{y}$, for every $u\in V$. 
We denote $L = V_w^{x}\setminus \{w\}$. 
If $L = \emptyset$ then $|E_w^{x}| = |E_w^{x+y}| = 0 \leq D^{x+y}$ as required. Now assume that $L \neq \emptyset$. Since $y\geq 1$, $E_w^{x+y} = \bigcup_{u\in L} E_u^{y}$. Therefore,  $|E_w^{x+y}| = |\bigcup_{u\in L } E_u^{y}|$.
As the ball graph $B(w,x) = (V_w^{x}, E_w^{x})$ is connected, we know that $|V_w^{x}| \leq |E_w^{x}| +1 < D^x +1$ and since $x>0$, $|L| = |V_w^{x}| - 1 < (D^x + 1) - 1 =  D^x$.
Thus, we get that 
$   |E_w^{x+y}| = 
    |\bigcup_{u\in L} E_u^{y}| \leq
    \sum_{u\in L}|E_u^{y}| < 
    \sum_{u\in L}D^{y} = 
    |L| \cdot D^{y}  <
    D^{x}\cdot D^{y} =
    D^{x+y}
$  so  $|E_w^{x+y}| < D^{x+y}$.
\end{proof}

Using Lemma~\ref{L-SumOfDenseM}, we prove following corollary.

\begin{corollary}\label{C-SumOfDenseM}
Let $x$ be a positive integer and let $D\geq 1$ be a real number. 
If $|E_w^{x}| < D^{x}$ for every $w\in V$, then $|E_w^{ix}| < D^{ix}$, for every $w\in V$ and $i\geq 1$.
\end{corollary}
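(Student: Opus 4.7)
The plan is to prove Corollary~\ref{C-SumOfDenseM} by straightforward induction on $i$, using Lemma~\ref{L-SumOfDenseM} as the key tool at each inductive step. The corollary is essentially the iterated version of the lemma, so no new ideas beyond induction are needed.

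For the base case $i=1$, the statement $|E_w^{x}| < D^{x}$ for every $w\in V$ is exactly the assumption of the corollary, so there is nothing to prove.

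For the inductive step, I assume that $|E_w^{ix}| < D^{ix}$ holds for every $w\in V$, and I wish to show that $|E_w^{(i+1)x}| < D^{(i+1)x}$ for every $w\in V$. Fix an arbitrary $w\in V$. By the inductive hypothesis, $|E_w^{ix}| < D^{ix}$, and by the hypothesis of the corollary, $|E_u^{x}| < D^{x}$ for every $u\in V$. Applying Lemma~\ref{L-SumOfDenseM} with the pair of positive integers $(ix, x)$ in place of $(x,y)$ yields $|E_w^{ix+x}| < D^{ix+x}$, i.e., $|E_w^{(i+1)x}| < D^{(i+1)x}$, which completes the induction.

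There is no real obstacle here: Lemma~\ref{L-SumOfDenseM} already does the heavy lifting (covering the ball of radius $x$ around each vertex in $V_w^{x}\setminus\{w\}$ and counting edges). The only mild subtlety worth flagging is that Lemma~\ref{L-SumOfDenseM} requires both arguments to be positive integers, which is satisfied since $ix\geq 1$ and $x\geq 1$ at every inductive step. Thus the full proof is essentially one line of induction plus one invocation of the lemma.
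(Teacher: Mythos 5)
Your proof is correct and follows essentially the same approach as the paper: induction on $i$ with Lemma~\ref{L-SumOfDenseM} applied at each step (the paper phrases the step as $i-1 \to i$ rather than $i \to i+1$, but the argument is identical). Your observation that both arguments passed to Lemma~\ref{L-SumOfDenseM} remain positive integers is a reasonable sanity check, though the paper doesn't bother to mention it.
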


\begin{proof}
The proof is by induction on $i$. For $i=1$ it follows from our assumption that $|E_w^{x}| < D^{x}$. Assume now that the claim holds for  $j = i-1$. This implies that $|E_w^{(i-1)x}| < D^{(i-1)x}$, for every $w\in V$. Combining this with the fact that $|E_w^{x}| < D^{x}$, for every $w\in V$, by Lemma~\ref{L-SumOfDenseM} we get that $|E_w^{ix}| < D^{ix}$, for every $w\in V$ and $i\geq1$.
\end{proof}

We also
adapt
procedure $\SparseOrCycleM(G,D,x,y)$ 
of \cite{kadria2022algorithmic} to our needs.
$\SparseOrCycleM$ (see Algorithm~\ref{A-SparseOrCycleM}) gets a graph $G$, a parameter $D \geq 1$,  and two integers $x, y >0$,
and
iterates over vertices using a for-each loop.
Let $w$ be the vertex currently considered
and $G(w)$ the current graph. 
If $\IsDense(G(w),w,D^{x},x)=\codeYes$ then 
$\BallOrCycle(G(w),w,x-1+y)$ is called.
If $\BallOrCycle$ returns a cycle $C$ then $C$ is returned by $\SparseOrCycleM$. Otherwise, the vertex set $V_w^{x-1}$ is removed from $G(w)$ along with the edge set $E_w^x$. 
After the loop ends, if no cycle was found, we return $\codeNull$.
Let $W\subseteq  V$ be the set of vertices for which $\BallOrCycle$ was called and no cycle was found, and $\hat{G}=(\hat{V},\hat{E})$
the graph
after $\SparseOrCycleM$ ends.

\begin{algorithm2e}[t]
    \caption{$\SparseOrCycleM(G,D,x,y)$} \label{A-SparseOrCycleM}
	\ForEach{$w\in V$}{
	    \If{$\IsDense(G,w,D^{x}, x)=\codeYes$}{
	        $(V_w^{x-1+y},C) \gets \BallOrCycle(w, x-1+y)$\;
	        \lIf {$C\neq \codeNull$}{\Return $C$}
	       $G\gets G \setminus V_w^{x-1}$\;
        }
	}
	\Return $null$\;
\end{algorithm2e}

The following lemma is similar to the corresponding lemmas from~\cite{kadria2022algorithmic}. We provide here the proof for completeness.

\begin{lemma}\label{L-SparseOrCycleM} 
$\SparseOrCycleM(G, D ,x,y)$ satisfies the following: 
\begin{enumerate} [label=(\roman*)]

\item
If a cycle $C$ is returned then $\wt(C)\leq 2(x-1+y)$
\item
If a cycle is not returned then $|E_u^{x}| 
<
D^x$, for every $u\in \hat{G}$
\item
If $u \in G \setminus \hat{G}$ then $u$ is not part of a $C_{\leq 2y}$ in $G$
\item 
$\SparseOrCycleM(G, D ,x,y)$ runs in $O(nD^x + \sum_{w\in W}(|V_w^{x-1+y}|))$ time.
\end{enumerate}

\end{lemma}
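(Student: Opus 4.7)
The plan is to handle the four items by analyzing the for-each loop iteration by iteration. Item~(i) is immediate: a returned cycle comes from a call to $\BallOrCycle(w,x-1+y)$, which by Lemma~\ref{L-BallOrCycle} produces a $C_{\leq 2(x-1+y)}$. For item~(ii), I would trace the history of any $u \in \hat{V}$. When $u$ is visited by the for-each loop (denote the current graph by $G_u$), $u$ is still present (since it survives to $\hat{G}$), and the call $\IsDense(G_u, u, D^{x}, x)$ must return $\codeNo$, because otherwise $V_u^{x-1}$ (which contains $u$) would be removed, contradicting $u \in \hat{V}$. By Lemma~\ref{L-Dense}, this yields $|E_u^{x}| < D^{x}$ in $G_u$, and since $\hat{G}$ is obtained from $G_u$ by further deletions, the same bound transfers to $\hat{G}$.

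For item~(iii), I would set up the loop invariant that every vertex deleted so far is not part of any $C_{\leq 2y}$ in the \emph{original} graph $G$. When an iteration triggers deletion of $V_w^{x-1}$, the preceding call $\BallOrCycle(w,x-1+y)$ did not return a cycle, so by Lemma~\ref{L-BallOrCycle} the ball graph $B(w,x-1+y)$ is a tree in the current graph $G_w$. Lemma~\ref{L-not-part-of}, applied with $R = x-1+y$ and $t = y$, then implies that no vertex of $V_w^{(x-1+y)-y} = V_w^{x-1}$ lies on a $C_{\leq 2y}$ of $G_w$. To lift this to $G$, I would invoke the inductive hypothesis: every previously deleted vertex was not on a $C_{\leq 2y}$ in $G$, so every $C_{\leq 2y}$ of $G$ is still intact as a cycle of $G_w$. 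Consequently, a vertex of $V_w^{x-1}$ that sat on some $C_{\leq 2y}$ of $G$ would witness such a cycle in $G_w$, contradicting what was just established; this maintains the invariant.

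For item~(iv), each iteration first calls $\IsDense$ with budget $T = D^{x}$, contributing $O(D^{x})$ by Lemma~\ref{L-Dense}, for a total of $O(nD^{x})$ across the $n$ iterations. For each $w \in W$ the call $\BallOrCycle(w,x-1+y)$ costs $O(|V_w^{x-1+y}|)$ by Lemma~\ref{L-BallOrCycle}, and the subsequent removal of $V_w^{x-1}$ is absorbed, because all its incident edges lie in $E_w^{x-1+y}$ which, in the tree case, has size less than $|V_w^{x-1+y}|$. A single final call that actually finds a cycle is bounded in the same way. The main obstacle is item~(iii): Lemma~\ref{L-not-part-of} speaks only about the \emph{current} graph, so the invariant is needed to propagate its conclusion back to $G$, and one must check that earlier deletions never destroy a short cycle that exists in $G$.
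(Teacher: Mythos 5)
Your proof is correct. The paper omits the proof of this lemma (stating it is similar to corresponding lemmas in \cite{kadria2022algorithmic}), so there is no paper proof to compare against, but your argument is exactly the expected reconstruction: Lemma~\ref{L-BallOrCycle} gives (i); for (ii) you correctly observe that $\IsDense$ must have answered $\codeNo$ when a surviving $u$ was examined (else $V_u^{x-1}\ni u$ would have been deleted), and that ball edge sets only shrink under subsequent deletions; for (iv) the $O(D^x)$ cost of $\IsDense$ over $n$ vertices and the $O(|V_w^{x-1+y}|)$ cost of $\BallOrCycle$ (with deletions absorbed because $E_w^{x-1+y}$ has fewer than $|V_w^{x-1+y}|$ edges in the tree case) give the stated bound. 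Most importantly, you correctly identify the one genuine subtlety in (iii): Lemma~\ref{L-not-part-of} only speaks about the \emph{current} graph $G_w$, so a loop invariant --- that every vertex deleted so far is not on a $C_{\leq 2y}$ in the original $G$, hence every such short cycle of $G$ survives into $G_w$ --- is needed to carry the conclusion back to $G$, and you set it up and close it correctly.
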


\begin{proof}
\begin{enumerate}[label=(\roman*)]
  \item Since $\SparseOrCycleM$ returns a cycle $C$ only if 
    a call to $\BallOrCycle(w, x-1+y)$ returns a cycle $C$,
    it follows from Lemma~\ref{L-BallOrCycle} that $\wt(C)\leq 2(x-1+y)$.
  \item
  Let $u\in\hat{G}$. This implies that $u$ was not removed, and therefore
  $u$ was considered in the for-each loop at some stage during the execution of $\SparseOrCycleM$. At this stage  $\IsDense(G(u),u,D^{x}, x)$ was $No$, as otherwise, since no cycle was returned $u$ would have been removed while removing $V_{u}^{x-1}$.
  It follows from Lemma~\ref{L-Dense} that $|E_u^{x}| < D^x$ in $G(u)$. As edges can only be removed during the run of $\SparseOrCycleM$, we have $|E_u^{x}| < D^x$ also in $\hat{G}$.
    
  \item  Since $u \in G \setminus \hat{G}$ ($u$ was removed) it follows that there was a    vertex $w$ such that $u\in V_{w}^{x-1}$ after a call to      $\BallOrCycle(w,x-1+y)$ did not return a cycle. 
    By Lemma~\ref{L-BallOrCycle}, the ball graph $B(w,x-1+y)$ is a tree, and it follows from Lemma~\ref{L-not-part-of} that no vertex in $V_{w}^{x-1}$ is part of a $C_{\leq 2y}$ in the current graph $G(w)$. Therefore, $u$ is not part of  a $C_{\leq 2y}$ in $G(w)$. 
    Since during the run of $\SparseOrCycleM$ we remove only vertices that are not part of a $ C_{\leq2y}$, $u$ is not part of a $C_{\leq 2y}$ also in $G$.
  
  \item  
  The cost of each call to $\IsDense$ is $O(D^{x})$. In the worst case we call $O(|V|)$ times to $\IsDense$. The total cost of this step is $O(|V|D^x)=O(nD^x)$.
  At most one call to $\BallOrCycle(w, x-1+y)$ returns a cycle. We can bound this call with $O(n)$.
  Each call to $\BallOrCycle(w, x-1+y)$ that does not return a cycle costs $O(|V_w^{x-1+y}|)$.  For each such call, the cost of removing the set $V_w^{x-1}$ from $G$ is bounded by $O(|E_w^{x}|)$. However, since we are in the case that $\BallOrCycle(w, x-1+y)$ does not return a cycle it follows that $B(w,x-1+y)$ is a tree and $|E_w^{x}|=O(|V_w^{x}|)\leq O(|V_w^{x-1+y}|)$ (as $y\geq 1$).
\end{enumerate}
\end{proof}

Similarly to $\dHybrid$, we show for $\SparseOrCycleM$ that if $G$ satisfies a certain sparsity property, the running time can be bounded as follows. 
 
\begin{corollary}
\label{C-SparseOrCycle-sparse-time}
If $|E_u^{x-1+y}| < D^{x-1+y}$ for every vertex $u\in V$ then $\SparseOrCycleM(G, D ,x,y)$ runs in $O(nD^x+mD^{y-1})$ time. 
\end{corollary}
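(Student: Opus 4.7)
The plan is to combine the running-time bound from Lemma~\ref{L-SparseOrCycleM}(iv), namely $O(nD^x + \sum_{w\in W}|V_w^{x-1+y}|)$, with two separate estimates: an upper bound on the size of each ball $V_w^{x-1+y}$ appearing in the sum, and an upper bound on the cardinality of~$W$.

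First, I would bound $|V_w^{x-1+y}|$ as evaluated in the current graph $G(w)$ when $w$ is processed. Since $\SparseOrCycleM$ only deletes vertices and edges from $G$, every $G(w)$ is a subgraph of $G$, so $V_w^{x-1+y}(G(w)) \subseteq V_w^{x-1+y}(G)$ (removing edges only lengthens shortest paths from $w$). Because each ball graph is connected, $|V_w^{x-1+y}(G)| \leq |E_w^{x-1+y}(G)| + 1$, and the hypothesis immediately gives $|V_w^{x-1+y}(G(w))| \leq D^{x-1+y}+1 = O(D^{x-1+y})$.

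Next, I would bound $|W|$ by tracking how many edges each call destroys. If $w\in W$, then $\IsDense(G(w),w,D^x,x)=\codeYes$, so by Lemma~\ref{L-Dense} the current graph satisfies $|E_w^x(G(w))|\geq D^x$. Every such edge has at least one endpoint in $V_w^{x-1}$ (by definition of distance from a vertex to an edge), so deleting $V_w^{x-1}$ from $G(w)$ removes all $\geq D^x$ edges of $E_w^x(G(w))$. Because the graph starts with only $m$ edges and edges are never added back, at most $m/D^x$ vertices can trigger this deletion step, hence $|W|\leq m/D^x$.

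Combining the two bounds gives $\sum_{w\in W}|V_w^{x-1+y}| \leq |W|\cdot O(D^{x-1+y}) \leq \frac{m}{D^x}\cdot O(D^{x-1+y}) = O(mD^{y-1})$, and substituting into Lemma~\ref{L-SparseOrCycleM}(iv) yields the claimed $O(nD^x + mD^{y-1})$ running time. The main subtlety to verify carefully is the first step: the sparsity hypothesis is stated for the original graph $G$, while $\IsDense$ and $\BallOrCycle$ operate on the shrinking subgraph $G(w)$; one must argue that edge deletions only reduce $|E_u^{r}|$ and $|V_u^{r}|$ for every relevant $u$ and $r$, so that the hypothesis survives intact throughout the execution. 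Once that monotonicity is in hand, the rest is an accounting exercise.
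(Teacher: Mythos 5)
Your proposal is correct and follows essentially the same approach as the paper's proof: invoke Lemma~\ref{L-SparseOrCycleM}(iv), bound each $|V_w^{x-1+y}|$ by $O(D^{x-1+y})$ using the sparsity hypothesis, bound $|W|\leq m/D^x$ by charging each $w\in W$ with the $\geq D^x$ edges it causes to be deleted (via Lemma~\ref{L-Dense}), and multiply. The only difference is that you explicitly flag and resolve the monotonicity point about the hypothesis holding on the original $G$ while the algorithm runs on shrinking subgraphs, which the paper leaves implicit; this is a worthwhile clarification but not a different argument.
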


\begin{proof}
By Lemma \ref{L-SparseOrCycleM}, $\SparseOrCycleM(G, D ,x,y)$ runs in $O(|V|D^x + \sum_{w\in W}(|V_w^{x-1+y}|))$ time.
For every $w\in W$, the call to $\IsDense(G,w,D^x,x)$ returned $\codeYes$, so it follows from Lemma~\ref{L-Dense} that $|E_w^x|\geq D^x$.
The edge set $E_w^x$ is removed while removing $V_w^{x-1}$. Therefore, for every $w\in W$ we remove at least $D^x$ edges. Since at most $m$ edges can be removed, the size of $W$ is at most $\frac{m}{D^x}$. By our assumption, we have $O(|V_w^{x-1+y}|)\leq O(|E_w^{x-1+y}|)\leq O(D^{x-1+y})$. Therefore, we get that $O(\sum_{w\in W}(|V_w^{x-1+y}|))\leq O(|W|\cdot D^{x-1+y}) \leq O(\frac{m}{D^x} \cdot D^{x-1+y}) = O(mD^{y-1})$. Thus, the running time of $\SparseOrCycleM$ is $O(nD^x+mD^{y-1})$.
\end{proof}

Finally, we include a standard lemma about sampling a hitting set (see, e.g., \cite{aingworth1999fast},  \cite{lingas2009efficient}, \cite{roditty2013fast}).

\begin{lemma} \label{L-whpM}
    It is possible to obtain in $O(m)$ time, using sampling, a set of edges $S$ of size $\tilde{\Theta} (\frac{m}{s})$, that hits, whp, the $s$ closest edges of every $v\in V$.
\end{lemma}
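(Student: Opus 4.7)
The plan is to use a standard independent-edge sampling argument. I would include every edge of $E$ in $S$ independently with probability $p = \frac{c \ln n}{s}$ for a sufficiently large constant $c > 0$; building $S$ this way clearly takes $O(m)$ time since we flip one coin per edge. The expected size is $\mathbb{E}[|S|] = mp = \frac{c m \ln n}{s} = \tilde{\Theta}(m/s)$, and a standard Chernoff bound gives $|S| = \tilde{\Theta}(m/s)$ with high probability.

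For the hitting property, I would fix an arbitrary vertex $v \in V$ and let $E_s(v)$ denote (a fixed choice of) its $s$ closest edges, so $|E_s(v)| = s$ (assuming $v$ has at least $s$ reachable edges; otherwise the claim is vacuous for $v$). The probability that $S$ misses $E_s(v)$ entirely is
\[
(1-p)^{s} \;\leq\; e^{-ps} \;=\; e^{-c \ln n} \;=\; n^{-c}.
\]
A union bound over all $n$ vertices shows that $S$ hits $E_s(v)$ for every $v \in V$ with probability at least $1 - n^{-(c-1)}$, which is whp once $c$ is chosen appropriately large.

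The only mildly subtle step is that the notion of \emph{the $s$ closest edges} of $v$ may be ambiguous when there are ties in BFS distance. To handle this cleanly I would fix, for each $v$, an arbitrary tie-breaking rule (for instance the order in which edges appear in the adjacency lists), so that \emph{some} canonical set $E_s(v)$ of $s$ edges at smallest distance is well defined; hitting this canonical set suffices. No part of the argument is genuinely hard; the only thing one should be careful about is that the probability and union-bound constants are chosen so that both $|S| = \tilde{\Theta}(m/s)$ and the hitting property hold simultaneously whp, which follows by taking $c$ large and applying a standard Chernoff bound together with the union bound above.
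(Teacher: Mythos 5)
Your proof is correct. The paper does not actually supply a proof of this lemma; it states it as a known result and simply cites \cite{aingworth1999fast}, \cite{lingas2009efficient}, and \cite{roditty2013fast}. Your argument — independent inclusion of each edge with probability $\Theta(\log n / s)$, a Chernoff bound to control $|S|$, and a union bound over all $n$ vertices for the hitting property, with a fixed tie-breaking rule to make ``the $s$ closest edges'' well-defined — is precisely the standard sampling argument those references rely on, so it matches the intended approach.
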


We remark that some of our algorithms get a graph $G$ that is being updated during their run. 
Within their scope,
$G$ denotes the current graph that includes all updates done so far.

\section{A new cycle searching technique}
\label{sec:neighborhood_technique}

In this section we introduce a new cycle searching technique implemented in algorithm $\DoubleBfsCycle$.
 This technique exploits the property that 
 a vertex $s \in V$ is not on a $C_{\leq 2k}$, 
 to check efficiently whether any neighbor of $s$ lies on a $C_{\leq 2k}$.

Consider a vertex $s\in V$. It is straightforward to check whether $s$ is on a $C_{\leq 2k}$, for every integer 
$k$,
using  $\BallOrCycle(G,s,k)$ in $O(n)$ time. If $\BallOrCycle(G,s,k)$ does not return a $C_{\leq 2k}$ then 
for every $x,y\in N(s)$ it holds that  $V_x^{k-1}(G')\cap V_y^{k-1}(G')= \emptyset$, where $G'= G\setminus \{s\}$, as otherwise there was a $C_{\leq 2k}$ passing through $s$ and $\BallOrCycle(G,s,k)$ would have returned a $C_{\leq 2k}$ (see Figure~\ref{fig:nbrballorcycleIllustration}). We show that it is possible to exploit this property to check  for every  $v\in N(s)$  whether $v$ is on a $C_{\leq 2k}$, using $\BallOrCycle(G', v,k)$,   in  $O(n+m)$ time instead of $O(deg(s) \cdot n)$. 
More specifically, we present algorithm 
$\DoubleBfsCycle(G, s, k)$ (see Algorithm~\ref{A-DoubleBfsCycle1}) that gets 
a graph $G$, a vertex $s$, and an  integer $k\geq 2$.  We first initialize $\hat{U}$ to $\emptyset$. Then, we run $\BallOrCycle(G,s,k)$. If a cycle $C$ is found by $\BallOrCycle(G, s,k)$ then $C$ is returned by $\DoubleBfsCycle$.
Otherwise, we add the vertex $s$ to $\hat{U}$, keep the neighbors of $s$ in $N_s$, and then remove $s$ from $G$. Recall that  $G'$ equals $G\setminus \{s\}$. Next, for every $v\in N_s$ we run $\BallOrCycle(G',v,k)$,  as long as a cycle is not found. If a cycle $C$ is returned by $\BallOrCycle(G',v,k)$ then $C$ is returned by $\DoubleBfsCycle$.\footnote{For our needs it suffices to stop and return a cycle passing through an $s$ neighbor once we find one, though $\BallOrCycle$ can be run from all the neighbors of $s$ in the same running time bound of $O(n+m)$.}
Otherwise,
$v$ is added to $\hat{U}$. After the loop ends, the vertex $s$ and its adjacent edges are added back to the graph, and the set $\hat{U}$ is returned by $\DoubleBfsCycle$.
We prove the following lemma.

\begin{algorithm2e}[t]
\caption{$\DoubleBfsCycle(G, s, k)$}
\label{A-DoubleBfsCycle1}

    $\hat{U} \gets \emptyset$\;
  
    $(V_s^k,C) \gets \BallOrCycle(s, k)$\;
     
    \lIf {$C\neq \codeNull$}{\Return $(\emptyset, C)$}
       
        $\hat{U}\gets \hat{U} \cup \{s\}$\;

        $N_s \gets N(s)$\;
        
        $G \gets G \setminus \{s\}$\;
        
         \ForEach{$v \in N_s$}{
             $(V_v^k,C) \gets \BallOrCycle(v, k)$\;
             
             \lIf {$C\neq \codeNull$}{\Return $(\emptyset, C)$}
             
             $\hat{U}\gets \hat{U} \cup \{v\}$ \;           
             
         }        
    
    add $s$ and the edge set $\{(s,v) \mid v\in N_s\}$ back to $G$;
    
    \Return $(\hat{U}, \codeNull);$
 
\end{algorithm2e}

\begin{lemma}  \label{L-DoubleBfsCycle1} 
    If algorithm $\DoubleBfsCycle(G,s,k)$ finds a cycle $C$ then $\wt(C)\leq 2k$. Otherwise, no vertex in $V_s^{1}$ is part of a $C_{\leq 2k}$ in $G$, and the set $\hat{U}= V_s^{1}$ is returned. 
\end{lemma}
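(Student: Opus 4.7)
The plan is to work through the two possible execution paths of $\DoubleBfsCycle$ and, in each case, appeal to the guarantees of $\BallOrCycle$ from Lemma~\ref{L-BallOrCycle} and the tree-structure property in Lemma~\ref{L-not-part-of}. First, if $\BallOrCycle(G,s,k)$ returns a cycle $C$, then by Lemma~\ref{L-BallOrCycle} we immediately have $\wt(C)\leq 2k$, and $\DoubleBfsCycle$ returns $C$. If instead $\BallOrCycle(G,s,k)$ returns no cycle, then the ball graph $B(G,s,k)$ is a tree. Applying Lemma~\ref{L-not-part-of} with $R=k$ and $t=k$ shows that $s$ itself is not on any $C_{\leq 2k}$ in $G$. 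The algorithm then proceeds to delete $s$, forming $G'=G\setminus\{s\}$, and to call $\BallOrCycle(G',v,k)$ for each $v\in N_s=N(s)$.

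Next I handle the inner calls. If one of the calls $\BallOrCycle(G',v,k)$ returns a cycle $C$, then by Lemma~\ref{L-BallOrCycle} we have $\wt(C)\leq 2k$ in $G'$, and since $G'\subseteq G$, $C$ is also a $C_{\leq 2k}$ in $G$; this handles the second possible ``cycle returned'' case. If, on the other hand, none of the $\BallOrCycle(G',v,k)$ calls returns a cycle, then for every $v\in N_s$ the ball graph $B(G',v,k)$ is a tree, so by Lemma~\ref{L-not-part-of} (with $R=t=k$) no $v\in N_s$ lies on a $C_{\leq 2k}$ in $G'$.

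The main obstacle is to lift the non-existence of a $C_{\leq 2k}$ through $v$ from $G'$ back to $G$. Suppose for contradiction that some $v\in N_s$ lies on a $C_{\leq 2k}$ $C^{\ast}$ in $G$. There are two cases. If $C^{\ast}$ avoids $s$, then $C^{\ast}$ is entirely contained in $G'$, contradicting the conclusion of the previous paragraph that $v$ is not on any $C_{\leq 2k}$ in $G'$. If $C^{\ast}$ passes through $s$, then $s$ itself lies on the $C_{\leq 2k}$ $C^{\ast}$ in $G$, contradicting the fact established in the first paragraph that $s$ is not on any $C_{\leq 2k}$ in $G$. Hence no vertex of $N_s$ lies on a $C_{\leq 2k}$ in $G$, and combined with the earlier conclusion that $s$ itself is not on such a cycle, no vertex of $\{s\}\cup N_s=V_s^{1}$ is on a $C_{\leq 2k}$ in $G$.

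Finally, inspecting the algorithm, in this ``no cycle found'' branch, $s$ is inserted into $\hat{U}$ and every $v\in N_s$ is inserted into $\hat{U}$ in the loop (the loop is not terminated early, since no cycle is found). Thus $\hat{U}=\{s\}\cup N(s)=V_s^{1}$, and this set is returned, completing the proof.
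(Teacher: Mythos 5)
Your proof is correct and follows essentially the same approach as the paper's: it handles the cycle-found case via Lemma~\ref{L-BallOrCycle}, uses the tree property (via Lemma~\ref{L-not-part-of}) to conclude that $s$ and each $v\in N_s$ avoid short cycles, and lifts the conclusion from $G'$ back to $G$ by observing that any $C_{\leq 2k}$ through $v\in N_s$ either avoids $s$ (so lives in $G'$) or passes through $s$ (contradicting that $s$ is not on a $C_{\leq 2k}$). The paper's proof is more terse but identical in structure; your version simply spells out the case analysis for the lifting step more explicitly.
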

\begin{proof}
    If a cycle was found, it happened during the call to $\BallOrCycle(G,s,k)$ or one of the calls to $\BallOrCycle(G',v,k)$. Therefore, by Lemma~\ref{L-BallOrCycle}, the cycle length is at most $2k$.
   
    If no cycle was found during the run of $\BallOrCycle(G,s,k)$ then $s$ is not on a $C_{\leq 2k}$ in $G$. 
    In addition, for every $v\in N_s$ $\BallOrCycle(G',v,k)$ did not find a cycle. Hence, $v$ is not on a $C_{\leq 2k}$ in $G'$ and therefore also in $G$, since $G'=G \setminus \{s\}$ and  $s$ is not on a $C_{\leq 2k}$.
    Since no cycle was found, $s$ and the vertices
    $v$ for every $v\in N_s$
    are added to $\hat{U}$, so 
    by the definition of $V_s^{1}$ 
    we have $\hat{U} = V_s^{1}$.
\end{proof}

To bound the running time of $\DoubleBfsCycle$, 
we show how to use the fact that no $C_{\leq 2k}$ was found by
$\BallOrCycle(G,s,k)$,
to efficiently run $\BallOrCycle(G',v,k)$ for every $v\in N_s$.

\begin{lemma}\label{L-Batch-RT}
Let $s\in V$. If the ball graph $B(G,s,k)$ is a tree then the total cost of running $\BallOrCycle(G',v,k)$ for every $v\in N_s$ is $O(|V_s^k(G)|+\sum_{w\in L_s^k(G)} deg(w) )$. 
\end{lemma}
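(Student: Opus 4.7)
The plan is to use Lemma~\ref{L-BallOrCycle} so that each call $\BallOrCycle(G', v, k)$ costs $O(|V_v^k(G')|)$ time, reducing the goal to proving
\[
    \sum_{v \in N_s} |V_v^k(G')| \;=\; O\bigl(|V_s^k(G)| + \sum_{w \in L_s^k(G)} deg(w)\bigr),
\]
where $G' = G \setminus \{s\}$. I would split $V_v^k(G') = V_v^{k-1}(G') \cup L_v^k(G')$ and handle the two pieces separately, exploiting the tree hypothesis to control how vertices are shared between the BFS balls of different $v \in N_s$.

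For the inner piece, I claim that $V_v^{k-1}(G') = T_v$, where $T_v$ is the set of descendants of $v$ (including $v$) in the BFS tree $B(G,s,k)$. The containment $T_v \subseteq V_v^{k-1}(G')$ is immediate, since the tree path from $v$ to any $u \in T_v$ has length $d_G(s,u) - 1 \leq k-1$ and does not use $s$. The reverse inclusion is the heart of the argument and is where the tree hypothesis enters. Suppose $u \in V_v^{k-1}(G')$; then $d_G(s,u) \leq 1 + d_{G'}(v,u) \leq k$, so $u \in V_s^k(G) \setminus \{s\}$ and belongs to a unique subtree $T_{v'}$. If $v' \neq v$, the same reasoning yields $u \in V_{v'}^{k-1}(G')$, and concatenating paths $v \to u$ and $u \to v'$ in $G'$ with the edges $(s,v)$ and $(s,v')$ produces a cycle of length at most $2k$ through $s$, every edge of which lies in $E_s^k$ -- contradicting $B(G,s,k)$ being a tree. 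Hence $v' = v$, so $u \in T_v$. Since the $T_v$'s partition $V_s^k(G) \setminus \{s\}$, this already gives $\sum_{v \in N_s}|V_v^{k-1}(G')| = |V_s^k(G)| - 1$.

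For the outer piece, any $u \in L_v^k(G')$ admits a BFS predecessor $w$ with $d_{G'}(v,w) = k-1$; by the characterization above, $w \in T_v$ and $d_G(s,w) = k$, so $w \in T_v \cap L_s^k(G)$. Since $u \in N_G(w)$, one gets $|L_v^k(G')| \leq \sum_{w \in T_v \cap L_s^k(G)} deg(w)$. Summing over $v \in N_s$ and using once more the partition of $L_s^k(G) \subseteq V_s^k(G)\setminus\{s\}$ by the $T_v$'s, this yields $\sum_{v \in N_s}|L_v^k(G')| \leq \sum_{w \in L_s^k(G)} deg(w)$, which combines with the inner-piece bound to finish the proof. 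The main obstacle is the structural identification $V_v^{k-1}(G') = T_v$: translating the hypothesis ``$B(G,s,k)$ is a tree'' into disjointness of the radius-$(k-1)$ neighborhoods of distinct neighbors of $s$ in $G'$ is exactly what prevents the naive bound $\deg(s) \cdot |V_s^k(G)|$, and once established, the remainder is a clean double count over the BFS-subtree partition.
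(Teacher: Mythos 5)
Your proof is correct and follows essentially the same approach as the paper's: reduce via Lemma~\ref{L-BallOrCycle} to bounding $\sum_{v\in N_s}|V_v^k(G')|$, use the tree hypothesis to establish that the balls $V_v^{k-1}(G')$ for distinct $v\in N_s$ partition $V_s^k(G)\setminus\{s\}$, and charge the outermost BFS layer to the degrees of the vertices at distance $k$ from $s$. Your subtree characterization $V_v^{k-1}(G')=T_v$ is a slightly more explicit packaging of the disjointness claim that the paper asserts directly, but the decomposition and the double count are the same.
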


\begin{proof}
By the definitions of $V_s^k$, $L_s^k$ and $G'$,
we know that $V_s^k (G) = \cup_{v\in N_s}V_v^{k-1}(G') \cup\{s\}$ and $L_s^k(G
) = \cup_{v\in N_s} L_v^{k-1}(G')$.
Since $B(G,s,k)$ 
contains no cycles
it follows that 
$V_x^{k-1}(G') \cap V_{y}^{k-1}(G') =\emptyset$ and 
$L_x^{k-1}(G') \cap L_{y}^{k-1}(G')=\emptyset$,
for any two distinct vertices 
$x,y\in N_s$.
Therefore, \textbf{(i)} $|V_s^k(G)| = \sum_{v\in N_s} |V_v^{k-1} (G')|+1$, and \textbf{(ii)} $L_s^k(G
) = \mathbin{\mathaccent\cdot\cup}_{v\in N_s} L_v^{k-1}(G')$.
From Lemma~\ref{L-BallOrCycle} it follows that the total cost 
of the calls to $\BallOrCycle(G',v,k)$ for every $v\in N_s$ is $O(\sum_{v\in N_s}|V_v^k(G')|)$.
It holds that $|V_v^k (G')| \leq |V_v^{k-1}(G')| + \sum_{w\in L_v^{k-1}(G')} deg(w)$, for every $v\in N_s$. 
Thus,
we get that the total cost is 
$O(\sum_{v\in N_s}|V_v^k(G')|) = O(\sum_{v\in N_s}(|V_v^{k-1}(G')|+\sum_{w\in L_v^{k-1}(G')} deg(w)))$. This equals to 
$ O(\sum_{v\in N_s}|V_v^{k-1}(G')|+\sum_{v\in N_s} \sum_{w\in L_v^{k-1}(G')} deg(w)))$, and it follows from \textbf{(i)} and \textbf{(ii)} that this is at most $
O(|V_s^k(G)|+\sum_{w\in L_s^k(G)} deg(w) )$. 
\end{proof}

We use Lemma~\ref{L-Batch-RT} to bound the running time of $\DoubleBfsCycle$.

\begin{lemma} \label{L-DoubleBfsCycle-time}
    Algorithm $\DoubleBfsCycle$ runs in $O(n+m)$ time. 
\end{lemma}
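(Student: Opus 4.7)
The plan is to partition the execution of $\DoubleBfsCycle$ into three stages and bound each by $O(n+m)$, relying on the structural fact that the early-termination case is already bounded, while the expensive case is exactly the one to which Lemma~\ref{L-Batch-RT} applies.

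First I would account for the initial call $\BallOrCycle(G,s,k)$. By Lemma~\ref{L-BallOrCycle}, this runs in $O(|V_s^k(G)|) = O(n)$ time. If this call returns a cycle, the algorithm halts immediately and we are done in $O(n)$ time. The bookkeeping between the two phases, namely forming $N_s$, removing $s$ from $G$, and later re-inserting $s$ together with the edges $\{(s,v) : v \in N_s\}$, takes only $O(\deg(s)) = O(m)$ total.

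Next I would handle the main case, where $\BallOrCycle(G,s,k)$ does not return a cycle. By Lemma~\ref{L-BallOrCycle}, in this case the ball graph $B(G,s,k)$ is a tree, so the hypothesis of Lemma~\ref{L-Batch-RT} is satisfied. Applying that lemma, the total cost of the for-each loop that calls $\BallOrCycle(G',v,k)$ for each $v \in N_s$ is
\[
O\!\left(|V_s^k(G)| + \sum_{w \in L_s^k(G)} \deg(w)\right).
\]
We have $|V_s^k(G)| \leq n$ trivially, and $\sum_{w \in L_s^k(G)} \deg(w) \leq \sum_{w \in V} \deg(w) = 2m$, since $L_s^k(G) \subseteq V$ and each vertex is counted at most once. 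Hence the loop cost is $O(n+m)$. (Note that even if the loop terminates early due to a returned cycle, the work performed is dominated by the full-loop bound above.)

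The only subtlety to double-check is that Lemma~\ref{L-Batch-RT} is applicable precisely when no cycle was found at the initial call, which is exactly the branch of the algorithm where the expensive loop is executed; in the early-return branch the cost is trivially $O(n)$ by Lemma~\ref{L-BallOrCycle}. Summing over all stages yields a total running time of $O(n+m)$, as claimed. I expect no real obstacle here since Lemma~\ref{L-Batch-RT} has already done the heavy lifting of exploiting the disjointness of the subtrees hanging off the neighbors of $s$; the proof of the lemma reduces to assembling that bound with the trivial initial and bookkeeping costs.
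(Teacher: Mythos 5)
Your proof is correct and follows essentially the same route as the paper's: bound the initial $\BallOrCycle(G,s,k)$ call by $O(n)$, handle the early-return case trivially, and invoke Lemma~\ref{L-Batch-RT} (whose tree hypothesis is guaranteed by Lemma~\ref{L-BallOrCycle} when no cycle was found) to bound the loop over $N_s$ by $O(n+m)$, with bookkeeping absorbed in $O(\deg(s))$. The only item the paper mentions that you leave implicit is the cost of accumulating the returned sets into $\hat{U}$, which is $O(|V_s^1|)=O(n)$ and does not affect the bound.
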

\begin{proof}
    Running $\BallOrCycle(G,s,k)$ costs $O(n)$. If a cycle is found, it is returned and the running time is $O(n) = O(n+m)$. If no cycle is found, then removing (and later adding back) $s$ and its  edges costs $O(deg(s))$.   
    By Lemma~\ref{L-Batch-RT}, the cost of running     $\BallOrCycle(G',v,k)$ for every $v\in N_s$ is $O(|V_s^k(G)|+\sum_{w\in L_s^{k}(G)}deg(w)) = O(n+m)$.
    Adding $s$ and  $N_s$ to $\hat{U}$ takes $O(V_s^{1})=O(n)$ time. Thus,  the total running time of $\DoubleBfsCycle$ is $O(n+m)$.
\end{proof}

 \section{\texorpdfstring{A $2k$-hybrid algorithm and a $(+1)$-approximation of the girth}{A 2k-hybrid algorithm and a (+1)-approximation of the girth}}
 
\label{sec:2k-hybrid and +1 aprox} 

In this section we first show how to use algorithm $\DoubleBfsCycle$  from the previous section to obtain a $2k$-hybrid algorithm that in $O(m^{1+\frac{k-1}{k+1}})$ time, either returns a $C_{\leq 2k}$ or determines that $g>2k$. 
Then, we use the $2k$-hybrid algorithm to compute a $(+1)$-approximation of $g$.

\subsection{\texorpdfstring{A $2k$-hybrid algorithm}{A 2k-hybrid algorithm}}

We first present  algorithm $\twoSparseOrCycle(G,k)$ that gets a graph $G$ and an integer $k\geq 2$. 
Let $G_0$ ($G_1$) be  $G$ before (after) running $\twoSparseOrCycle$. 
$\twoSparseOrCycle(G,k)$ either finds a $C_{\leq 2k}$ or removes vertices that are not on a $C_{\leq 2k}$, such that 
for every $u\in G_1$, 
the ball graph $B(G_1,u,2)$  is relatively sparse, that is, $|E_u^2(G_1)|< m^{\frac{2}{k+1}}$. 
$\twoSparseOrCycle$ (see Algorithm~\ref{A-2sparseorcycle}) iterates over vertices using a for-each loop.
Let $s$ be the vertex currently considered.
If $ \sum_{v \in N(s)} deg(v) \geq m^{\frac{2}{k+1}}$ then $\DoubleBfsCycle(G,s,k)$ is called. 
If $\DoubleBfsCycle$ returns a cycle $C$ then 
$\twoSparseOrCycle$ returns $C$.
If $\DoubleBfsCycle$ returns  a vertex set $\hat{U}$ then  $\hat{U}$ is removed from $G$. After the loop ends, if no cycle was found, we return $\codeNull$.

   \begin{algorithm2e}[t]
    \caption{$\twoSparseOrCycle(G,k)$}
    \label{A-2sparseorcycle}
        \ForEach{$s\in V$}  
        {
            \If{ 
            $ \sum_{v \in N(s)} deg(v) \geq m^{\frac{2}{k+1}}$
            }
            {
            
                $(\hat{U},C) \gets \DoubleBfsCycle(G,s, k)$\;
                \lIf{$C\neq null$}{\Return $C$}
                $G\gets G\setminus \hat{U}$\;
         
            }
        }
    \Return \codeNull\;
    \end{algorithm2e}

\begin{Remark}
   Notice that 
   $\SparseOrCycleM(G, m^{\frac{2}{k+1}},2,k)$ 
    either finds a $C_{\leq 2k+2}$ or removes vertices that are not on a $C_{\leq 2k}$, 
    such that for every $u\in G_1$  it holds that  $|E_u^2(G_1)|< m^{\frac{2}{k+1}}$. Using  $\DoubleBfsCycle$ instead of $\BallOrCycle$ in $\twoSparseOrCycle$ enables us in the case that a cycle is found 
    to bound the cycle length with $2k$ rather than $2k+2$, while still 
    maintaining the property that  $|E_u^2(G_1)|< m^{\frac{2}{k+1}}$, for every $u\in G_1$,  in the case that no cycle is found.
    
\end{Remark}

We prove the following lemma.

\begin{lemma}\label{L-twoSparseOrCycle} 
$\twoSparseOrCycle(G,k)$ satisfies the following: 
\begin{enumerate} [label=(\roman*)]
\item
If a cycle $C$ is returned then $\wt(C)\leq 2k$
\item
If a cycle is not returned then $|E_u^{2}| 
<
m^{\frac{2}{k+1}}$, for every $u\in G_1$
\item
If $u \in G_0 \setminus G_1$ then $u$ is not part of a $C_{\leq 2k}$ in $G_0$
\item 
$\twoSparseOrCycle(G,k)$ runs in $O(m^{1+\frac{k-1}{k+1}})$ time.
\end{enumerate}
\end{lemma}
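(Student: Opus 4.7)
The proof splits naturally into parts (i)--(iv). Part (i) is immediate from Lemma~\ref{L-DoubleBfsCycle1}: the only cycle $\twoSparseOrCycle$ ever returns is one produced by a call to $\DoubleBfsCycle$, and that lemma guarantees its length is at most $2k$. For part (ii), the key observation is that $|E_s^2(G)| \leq \sum_{v \in N_G(s)} deg_G(v)$, since every edge in $E_s^2$ has at least one endpoint in $\{s\} \cup N(s)$ and each such edge is counted at least once on the right-hand side. Now, if $s \in G_1$, then at the iteration in which $s$ was considered the guard condition $\sum_{v \in N(s)} deg(v) \geq m^{2/(k+1)}$ must have failed, because otherwise $\DoubleBfsCycle(G,s,k)$ would have been invoked and would either return a cycle (contradicting the hypothesis of~(ii)) or return $\hat{U}=V_s^1 \ni s$, which would have removed $s$ from $G$. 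Since the graph only shrinks over the course of the algorithm, $|E_s^2(G_1)|$ is bounded by the value of the guard at that earlier iteration, which is strictly less than $m^{2/(k+1)}$.

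For part (iii), some care is needed because Lemma~\ref{L-DoubleBfsCycle1} only forbids a deleted vertex from lying on a short cycle in the graph at the moment of deletion, whereas we need the statement in $G_0$. The plan is induction on the order of removal: let $u_1, u_2, \ldots$ be the removed vertices in the order $\twoSparseOrCycle$ deletes them (breaking ties within a batch arbitrarily), and let $H_j$ denote the state of $G$ just before $u_j$ is removed. The inductive claim is that $u_j$ is not on any $C_{\leq 2k}$ in $G_0$. Suppose for contradiction $u_j$ lies on some $C_{\leq 2k}$ cycle $C$ in $G_0$. By the inductive hypothesis, none of $u_1,\ldots,u_{j-1}$ lies on a $C_{\leq 2k}$ in $G_0$, so $C$ cannot include any of them; hence $C$ survives in $H_j$, contradicting Lemma~\ref{L-DoubleBfsCycle1} applied to the $\DoubleBfsCycle$ call that returned $u_j \in \hat{U}$.

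The main obstacle, and the heart of part (iv), is the amortized bound on the total cost of the $\DoubleBfsCycle$ calls. Each call costs $O(n+m)=O(m)$ time by Lemma~\ref{L-DoubleBfsCycle-time}. When a call does not return a cycle, $\twoSparseOrCycle$ removes $\hat{U}=V_s^1=\{s\}\cup N(s)$, and this deletes every edge incident to a vertex of $\hat{U}$, which is precisely $E_s^2(G)$. Because each such edge is counted at most twice in $\sum_{v \in N(s)}deg(v)$, the number of edges eliminated is at least $\tfrac{1}{2}\sum_{v\in N(s)} deg(v) \geq \tfrac{1}{2} m^{2/(k+1)}$ by the guard condition that triggered the call. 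Starting from $m$ edges in $G_0$, this permits at most $O(m^{(k-1)/(k+1)})$ such calls, so the total $\DoubleBfsCycle$ cost is $O(m^{1+(k-1)/(k+1)})$. The remaining overhead---evaluating $\sum_{v \in N(s)}deg(v)$ once per $s$ in the outer loop---sums to $O(m)$ across the for-each and is dominated by the above.
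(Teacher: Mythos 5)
Your proof is correct and follows essentially the same approach as the paper: part (i) is the same invocation of Lemma~\ref{L-DoubleBfsCycle1}, part (ii) is the same contrapositive argument via the guard condition and the bound $|E_s^2|\leq\sum_{v\in N(s)}\deg(v)$, part (iii) is the paper's argument with the induction over removed vertices made explicit rather than compressed, and part (iv) replaces the paper's edge-charging scheme with the equivalent call-counting amortization. The only cosmetic imprecision is in (ii): you state that every edge of $E_s^2$ has an endpoint in $\{s\}\cup N(s)$, but the inequality actually requires an endpoint in $N(s)$ itself (which holds, since an edge whose only close endpoint is $s$ necessarily has its other endpoint in $N(s)$).
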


\begin{proof}

    \begin{enumerate} [label=(\roman*)]
    \item 
    Since $\twoSparseOrCycle$ returns a cycle $C$ only if 
    a call to $\DoubleBfsCycle(G, s, k)$ returns a cycle $C$,
    it follows from Lemma~\ref{L-DoubleBfsCycle1} that $\wt(C)\leq 2k$.

    \item 
    Let $u\in G_1$. Since $u$ was not removed,  $u$ was considered in the for-each loop at some stage during the execution of $\twoSparseOrCycle$. At this stage
    $ \sum_{v \in N(s)} deg(v) < m^{\frac{2}{k+1}}$, as otherwise, since no cycle was returned, by Lemma~\ref{L-DoubleBfsCycle1} the call to $\DoubleBfsCycle$ with $u$ would have returned $\hat{U} = V_u^1$, so $u$ would have been removed while removing $\hat{U}$.
    Since $|E_s^2| \leq \sum_{v \in N(s)} deg(v)$ we  have $|E_u^{2}| < m^{\frac{2}{k+1}}$. As edges can only be removed during the run of $\twoSparseOrCycle$, we have $|E_u^{2}| < m^{\frac{2}{k+1}}$ also in $G_1$.

   \item 
    Since $u \in G_0 \setminus G_1$ it follows that there was a vertex $s$ such that $u\in \hat{U}$ after a call to $\DoubleBfsCycle(G,s,k)$ did not return a cycle. By Lemma~\ref{L-DoubleBfsCycle1}, no vertex in $\hat{U}$ is part of a $C_{\leq 2k}$ in $G$. Therefore, $u$ is not part of  a $C_{\leq 2k}$ in $G$.
    Since during the run of $\twoSparseOrCycle$ we remove only vertices that are not part of a $ C_{\leq2k}$, $u$ is not part of a $C_{\leq 2k}$ also in $G_0$.

    \item 
    Computing $ \sum_{v \in N(s)} deg(v)$ takes $O(|N(s)|) = O(deg(v))$ time, as all the degrees can be computed in advance in $O(m)$ time. 
    We compute this value for at most $n$ distinct vertices so the running time of this part is at most  $O(\sum_{v \in V} deg(v)) = O(m)$ in total.
    By Lemma~\ref{L-DoubleBfsCycle1}, running $\DoubleBfsCycle$ takes $O(n+m) = O(m)$ time.
    Each edge in $E_s^2$ contributes at most $2$ to the sum $ \sum_{v \in N(s)} deg(v)$, so 
    $ \sum_{v \in N(s)} deg(v) \leq 2|E_s^2|$ and 
    $\frac{1}{2} \sum_{v \in N(s)} deg(v) \leq |E_s^2|$.
    If a call to $\DoubleBfsCycle(G,s,k)$  did not return a cycle, then by Lemma~\ref{L-DoubleBfsCycle1}, the set $\hat{U}=V_s^{1}$ is returned. $\twoSparseOrCycle$ removes the set $\hat{U}=V_s^{1}$ and by doing so, the edge set $E_s^2$ is also removed. We charge each edge of $E_s^2$ with $O(m^{\frac{k-1}{k+1}})$. Thus the total cost that we charge for $s$ is $ O(|E_s^2|\cdot m^{\frac{k-1}{k+1}})
    \geq 
    O( \sum_{v \in N(s)} deg(v)
    \cdot m^{\frac{k-1}{k+1}})
    \geq
    O(m^{\frac{2}{k+1}}\cdot m^{\frac{k-1}{k+1}} )= O(m)
    $,
    which covers the $O(m)$ cost of $\DoubleBfsCycle(G,s,k)$.
    
    Since each edge can be charged and removed from $G$ at most once during the execution of $\twoSparseOrCycle$, the running time of $\twoSparseOrCycle$ is at most $O(m \cdot m^{\frac{k-1}{k+1}} )  = O(m^{1+\frac{k-1}{k+1}})$. \qedhere
\end{enumerate}
\end{proof}

Next, we use  $\twoSparseOrCycle$ to design a $2k$-hybrid algorithm called $\kHybrid$.
Notice first that if  $|E_u^{k-1}|< m^{\frac{k-1}{k+1}}$ for every $u\in V$, then 
it is straightforward to obtain an $O(m^{1+\frac{k-1}{k+1}})$-time  $2k$-hybrid algorithm, by  running $\dHybrid(G,k)$. 
Thus, in  $\kHybrid$ we  ensure that if we call  $\dHybrid$ then it holds for  every $u\in V$ that $|E_u^{k-1}|< m^{\frac{k-1}{k+1}}$. 
To do so, we  run $\twoSparseOrCycle$ and possibly $\SparseOrCycleM$. 
If no cycle was returned
then it holds that 
$|E_u^{k-1}|< m^{\frac{k-1}{k+1}}$ for every $u\in V$, and we can safely run  $\dHybrid$. 

$\kHybrid$ (see Algorithm~\ref{A-kHybrid}) gets a graph $G$ and an integer $k\geq 2$. 
$\kHybrid$ is composed of  three stages. 
In the first stage we call
$\twoSparseOrCycle(G,k)$. 
If $\twoSparseOrCycle$ returns a cycle $C$ then $\kHybrid$ stops and returns $C$, otherwise we proceed to the second stage. 
In the second stage, if $(k-1) \bmod 2 \neq 0$, we call 
$\SparseOrCycleM(G, m^{\frac{1}{k+1}}, 1, k)$.
If $\SparseOrCycleM$ returns a cycle $C$ then $\kHybrid$ stops and returns $C$, otherwise we proceed to the last stage. 
In the last  stage, we call $\dHybrid(G,k)$.
If $\dHybrid$ returns a cycle $C$ then $\kHybrid$ stops and returns $C$, otherwise $\kHybrid$ returns $\codeNull$.
In the next lemma we prove the correctness and analyze the running time of $\kHybrid(G,k)$.

 \begin{algorithm2e}[t]
    \caption{$\kHybrid(G,k)$}
    \label{A-kHybrid}
        $C \gets \twoSparseOrCycle(G,k)$\;
        
        \lIf{$C\neq \codeNull$}{\Return $C$}
    
        \BlankLine
    
       \If{$ (k-1) \bmod 2 \neq 0$}{
        
          $C\gets \SparseOrCycleM(G, m^{\frac{1}{k+1}} , 1, k)$

          \lIf {$C\neq \codeNull$}{\Return $C$}
        
        }
         \BlankLine
    
         $C\gets \dHybrid(G,k)$\;
          
          \lIf {$C\neq \codeNull$}{\Return $C$}

        \BlankLine
        \Return $\codeNull$\;
     
\end{algorithm2e}

\begin{lemma} \label{L-kHybrid} 
    $\kHybrid(G,k)$ either returns a $C_{\leq 2k}$ or determines that $g>2k$, in $O(m^{1+\frac{k-1}{k+1}})$ time.
\end{lemma}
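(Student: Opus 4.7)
The plan is to establish both parts of the claim separately: first, that whenever $\kHybrid$ outputs a cycle it has length at most $2k$, and whenever it outputs $\codeNull$ the girth exceeds $2k$; and second, that the total running time is $O(m^{1+(k-1)/(k+1)})$. Let $G_0$ denote the graph at the start of $\kHybrid$. The overall idea is that stages one and two progressively sparsify the graph so that the precondition of Corollary~\ref{C-dHybrid-sparse-time} with $D = m^{1/(k+1)}$ and $t = k$ is met when stage three is reached, while every vertex removed along the way is provably not on any $C_{\leq 2k}$ in $G_0$.

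For correctness, I would first observe that if a cycle is returned, it is returned by one of $\twoSparseOrCycle$, $\SparseOrCycleM(G, m^{1/(k+1)}, 1, k)$, or $\dHybrid(G,k)$, and each of these subroutines guarantees cycle length at most $2k$ (by Lemma~\ref{L-twoSparseOrCycle}(i), Lemma~\ref{L-SparseOrCycleM}(i) with $x=1,y=k$, and Lemma~\ref{L-dHybrid-properties} respectively). For the converse, Lemma~\ref{L-twoSparseOrCycle}(iii) and Lemma~\ref{L-SparseOrCycleM}(iii) (with $y=k$) together with the observation that any $C_{\leq 2k}$ in $G_0$ has \emph{all} its vertices on that cycle imply the invariant that every $C_{\leq 2k}$ of $G_0$ survives intact through both sparsification stages. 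Consequently, if the final call to $\dHybrid(G,k)$ returns $\codeNull$, meaning $g>2k$ in the current graph, the same must hold in $G_0$.

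For the running time, stage one costs $O(m^{1+(k-1)/(k+1)})$ directly by Lemma~\ref{L-twoSparseOrCycle}(iv). The crux is verifying the sparsity bound $|E_u^{k-1}| < D^{k-1}$ (with $D = m^{1/(k+1)}$) that is required by Corollary~\ref{C-dHybrid-sparse-time} before stage three. I would split on the parity of $k-1$. If $k-1$ is even, write $k-1 = 2i$; from $|E_u^2| < D^2$ (supplied by stage one) Corollary~\ref{C-SumOfDenseM} yields $|E_u^{2i}| < D^{2i}$, as desired, and stage two is skipped. If $k-1$ is odd, stage two runs, and I would invoke Corollary~\ref{C-SparseOrCycle-sparse-time} with $x=1, y=k$ to bound its cost by $O(nD + mD^{k-1}) = O(m^{1+(k-1)/(k+1)})$; the precondition $|E_u^{k}| < D^{k}$ (with $k$ even here) again follows from Corollary~\ref{C-SumOfDenseM} applied to $|E_u^2| < D^2$. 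After stage two we additionally have $|E_u^1| < D$ by Lemma~\ref{L-SparseOrCycleM}(ii), so combining with $|E_u^{k-2}| < D^{k-2}$ (from Corollary~\ref{C-SumOfDenseM}) via Lemma~\ref{L-SumOfDenseM} gives $|E_u^{k-1}| < D^{k-1}$. In either case, Corollary~\ref{C-dHybrid-sparse-time} bounds stage three by $O(mD^{k-1}) = O(m^{1+(k-1)/(k+1)})$, and summing yields the claim.

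The main obstacle is the case analysis on the parity of $k-1$, together with the careful bookkeeping of which sparsity bounds are available at each stage and how Lemma~\ref{L-SumOfDenseM} and Corollary~\ref{C-SumOfDenseM} can be chained to lift a bound on balls of radius $1$ and $2$ into a bound on balls of radius $k-1$. The remaining minor subtlety is to ensure that the $nD$ term arising in stage two is absorbed, which follows from $n = O(m)$ (WLOG) and $k\geq 2$.
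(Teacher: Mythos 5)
Your proposal is correct and follows essentially the same structure as the paper's proof: split on the parity of $k-1$, use $\twoSparseOrCycle$ to establish $|E_u^2|<D^2$, insert $\SparseOrCycleM(G,D,1,k)$ in the odd case, and feed the resulting sparsity bounds into Corollary~\ref{C-dHybrid-sparse-time}. The only stylistic difference is your final step in the odd case: the paper applies Corollary~\ref{C-SumOfDenseM} directly to $|E_u^1|<D$ with base $x=1$ to get $|E_u^{k-1}|<D^{k-1}$, whereas you detour through $|E_u^{k-2}|<D^{k-2}$ and Lemma~\ref{L-SumOfDenseM}; this works for $k\geq 4$ but requires treating $k=2$ separately (the paper's route avoids that degeneracy).
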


\begin{proof}

First, $\kHybrid(G,k)$ returns a cycle $C$ only if $\twoSparseOrCycle(G,k)$, $\SparseOrCycleM(G, m^{\frac{1}{k+1}} , 1, k)$ or $\dHybrid(G,k)$ returns a cycle $C$. Therefore, if a cycle is returned then it follows from Lemmas~\ref{L-twoSparseOrCycle}, \ref{L-SparseOrCycleM}, or \ref{L-dHybrid-properties}, respectively,  that $\wt(C)\leq 2k$.

Second, we show that if no cycle was found then $g>2k$.
If no cycle was found then it might be that some vertices were removed from the graph.
A vertex $u$ can be removed either by  $\SparseOrCycleM$ or by $\twoSparseOrCycle$.  It follows from Lemma~\ref{L-SparseOrCycleM} and Lemma~\ref{L-twoSparseOrCycle}, that $u$ is not part of a $C_{\leq 2k}$ when $u$ is removed. Since only vertices that are not on a $C_{\leq 2k}$ are removed, every $C_{\leq 2k}$ that was in the input graph also belongs to the updated graph. 
We then call $\dHybrid(G,k)$ with the updated graph. Since we are in the case that no cycle was found, $\dHybrid$ did not return a cycle. It follows from Lemma~\ref{L-dHybrid-properties} that $g>2k$ in the updated graph, and therefore $g>2k$ also in the input graph.

Now we turn to analyze the running time of $\kHybrid$.
At the beginning, $\kHybrid$ calls $\twoSparseOrCycle$.
By Lemma~\ref{L-twoSparseOrCycle}, $\twoSparseOrCycle$ runs in $O(m^{1+\frac{k-1}{k+1}})$ time.
Let $G_1$ be the graph after the call to $\twoSparseOrCycle$. By Lemma~\ref{L-twoSparseOrCycle},
for every $u\in G_1$ we have $|E_u^{2}| < m^{\frac{2}{k+1}}$. 

Next, $\kHybrid$ checks if $(k-1) \bmod 2 \neq 0$. 
We divide the rest of the proof  to the case that $(k-1) \bmod 2 = 0$ and to the case that $(k-1) \bmod 2 \neq  0$. 
If $(k-1) \bmod 2 = 0$ then $\kHybrid$ calls $\dHybrid$. 
Since $(k-1) \bmod 2 = 0$ and since for every $u\in G_1$ we have $|E_u^{2}| < m^{\frac{2}{k+1}}$, it follows from Corollary~\ref{C-SumOfDenseM} that $|E_u^{k-1}| <  m^{\frac{k-1}{k+1}}$ for every $u\in G_1$. 
By Corollary~\ref{C-dHybrid-sparse-time}, the running time of $\dHybrid$ is $O(m\cdot m^{\frac{k-1}{k+1}}) = O(m^{1+\frac{k-1}{k+1}})$.

We now turn to the case that $(k-1) \bmod 2 \neq  0$. 
In this case it might be that $|E_u^{k-1}| \geq  m^{\frac{k-1}{k+1}}$ for some vertices $u\in G_1$. 
Therefore, we first call $\SparseOrCycleM(G_1, m^{\frac{1}{k+1}}, 1, k)$.
Notice that since we are in the case that $(k-1) \bmod 2 \neq 0$ it holds that $k \bmod 2 = 0$. 
Moreover, $|E_u^{2}| < m^{\frac{2}{k+1}}$ for every $u\in G_1$. Thus, it follows from Corollary~\ref{C-SumOfDenseM} that $|E_u^{k}| <  m^{\frac{k}{k+1}}$ for every $u\in G_1$. 
By Corollary~\ref{C-SparseOrCycle-sparse-time} if $|E_u^{k}| <  m^{\frac{k}{k+1}}$ for every $u\in G_1$ then the running time of $\SparseOrCycleM(G_1, m^{\frac{1}{k+1}}, 1, k)$ is $O(nm^{\frac{1}{k+1}}+mm^{\frac{k-1}{k+1}})=O(m^{1+\frac{k-1}{k+1}})$.

Let $G_2$ be the graph after $\SparseOrCycleM$ ends.  By Lemma~\ref{L-SparseOrCycleM}, for every $u\in G_2$ we have $|E_u^{1}| <  m^{\frac{1}{k+1}}$.
It follows from Corollary~\ref{C-SumOfDenseM} that $|E_u^{k-1}| <  m^{\frac{k-1}{k+1}}$.
Now $\kHybrid$ calls $\dHybrid$, and using  Corollary~\ref{C-dHybrid-sparse-time}
again, we get that the running time of $\dHybrid$ is $O(m\cdot m^{\frac{k-1}{k+1}}) = O(m^{1+\frac{k-1}{k+1}})$.

It follows from the above discussion that $\SrtCycSpr$ either returns a $C_{\leq 2k}$ or determines that $g>2k$, and
the running time is $O( m^{1+\frac{k-1}{k+1}})$.
\end{proof}

\subsection{\texorpdfstring{A $(+1)$-approximation of the girth}{A (+1)-approximation of the girth}}
\label{subsec:+1approx}

Next, we describe algorithm $\AdditiveGirthApprox$, which uses $\kHybrid$ and the framework described in Section~\ref{sec:introduction}, to obtain a $(+1)$-approximation of $g$, when $g\leq \log n$.
$\AdditiveGirthApprox$ (see Algorithm~\ref{A-AdditiveGirthApprox}) gets a graph $G$.
In $\AdditiveGirthApprox$, we set $k$ to $2$ and start a while loop.
In each iteration, we create a copy $G'$ of $G$ and call $\kHybrid(G',k)$. If  $\kHybrid$ finds a cycle $C$ then $\AdditiveGirthApprox$ stops and returns $C$, otherwise we increment $k$ by $1$ and continue to the next iteration. We prove the following theorem.

 \begin{algorithm2e}[t]
        \caption{$\AdditiveGirthApprox(G)$}
        \label{A-AdditiveGirthApprox}
        \CommentSty{\color{blue}}
        $k \gets 2$\;
        
        \While{$\textsf{true}$}
        {
            $G' \gets$ a copy of $G$\;
            $C\gets \kHybrid(G',k)$\;
            \lIf{$C \neq \codeNull$}{return $C$}
             
            $k \gets k +1$\;    
        }

\end{algorithm2e}

\begin{theorem}
\label{T-Main-Approx} 
Algorithm $\AdditiveGirthApprox(G)$ returns either a $C_{g}$ or a $C_{g+1}$,  
and runs in $\Ot(m^{1+\frac{\ell -1}{\ell+1}})$ time, where $g=2\ell$ or $g=2\ell -1$ and $2\leq  \ell < \log n$.\footnote{
We note that when $\ell \geq \log n$,  the $O(n^2)$-time, $(+1)$-approximation of Itai and Rodeh~\cite{itai1977finding} can be used.}
\end{theorem}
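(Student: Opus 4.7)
The plan is to leverage the guarantee of $\kHybrid$ from Lemma~\ref{L-kHybrid} together with the fact that $\AdditiveGirthApprox$ is simply a linear search over $k$. First I would identify the value $k^\ast$ at which the loop first returns a cycle. Since $\kHybrid(G',k)$ either outputs a $C_{\leq 2k}$ or certifies $g>2k$, no cycle can be returned while $g > 2k$; hence for every $k\leq \ell-1$ the call returns $\codeNull$ (because $g\geq 2\ell-1 > 2(\ell-1)\geq 2k$). At $k=\ell$, on the other hand, $g\leq 2\ell = 2k$, so $\kHybrid$ must return a cycle. Thus $k^\ast = \ell$.

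For the length bound, I would combine two facts. Lemma~\ref{L-kHybrid} gives $\wt(C)\leq 2k^\ast = 2\ell$ for the returned cycle $C$. The preceding iteration certified $g > 2(k^\ast-1) = 2\ell-2$, so together with $g\leq 2\ell$ we get $g\in\{2\ell-1, 2\ell\}$. Using $\wt(C)\geq g$ from the definition of girth, this yields $\wt(C)\in\{g, g+1\}$ in both cases (and in fact $\wt(C)=g$ is forced when $g=2\ell$).

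The running time is then a routine geometric-sum argument. The total work is $\sum_{k=2}^{\ell} O(m^{1+(k-1)/(k+1)})$ for the $\kHybrid$ calls, plus $O(\ell(n+m))$ for the per-iteration copies of $G$. Since $(k-1)/(k+1)$ is monotone increasing in $k$, the last term dominates, yielding $O(\ell\cdot m^{1+(\ell-1)/(\ell+1)})$; absorbing the copying cost and using $\ell<\log n$ to swallow the $\ell$ factor into $\Ot$ gives the claimed $\Ot(m^{1+(\ell-1)/(\ell+1)})$ bound.

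I do not anticipate a real obstacle. The only subtlety is verifying that $k^\ast=\ell$ exactly, which follows immediately from the two-sided guarantee of $\kHybrid$ and the case split $g\in\{2\ell, 2\ell-1\}$; the length and running-time arguments that follow are direct.
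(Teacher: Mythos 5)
Your proposal is correct and follows essentially the same route as the paper: you identify that the loop terminates exactly at $k^\ast=\ell$ using the two-sided guarantee of Lemma~\ref{L-kHybrid}, bound the returned cycle length by $2\ell\leq g+1$ combined with the trivial lower bound $\wt(C)\geq g$, and sum the per-iteration costs $O(m^{1+(k-1)/(k+1)})$ geometrically to obtain $O(\ell\cdot m^{1+(\ell-1)/(\ell+1)})$, absorbing the $\ell<\log n$ factor into $\Ot$. The only cosmetic difference is that you derive $g>2\ell-2$ from the preceding iteration's certificate, whereas the paper simply uses the hypothesis $g\in\{2\ell-1,2\ell\}$ directly; both are valid and equivalent.
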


\begin{proof}
We first prove the bound on the approximation. 
$\AdditiveGirthApprox$ always returns a cycle in $G$ so it cannot be that a $C_{<g}$ will be returned. 
$\AdditiveGirthApprox$ starts with $k = 2$. 
It follows from Lemma~\ref{L-kHybrid} that as long as $k<\ell$ the calls to $\kHybrid$ will not return a cycle since the graph does not contain a $C_{\leq 2k}$. 
Consider now the iteration in which $k=\ell$. 
$\AdditiveGirthApprox$ calls to  $\kHybrid(G',\ell)$ where $G'=G$. It follows from Lemma~\ref{L-kHybrid} that $\kHybrid$ will either return a $C_{\leq 2\ell}$, or determine that $g>2\ell$. Since  we assume that $g\leq 2\ell$, $\kHybrid$ will return a $C_{\leq 2\ell}$
which is either a $C_g$ or a $C_{g+1}$ since $g=2\ell$ or $g=2\ell -1$.

We now turn to analyze the running time. 
Creating a copy of $G$ takes $O(m)$ time, and by Lemma~\ref{L-kHybrid} the running time of $\kHybrid(G',k)$ is $O(m^{1+\frac{k -1}{k+1}})$. Therefore, for every $k \leq \ell$, the running time of the iteration of the while loop with this value of $k$ is $O(m^{1+\frac{k -1}{k+1}})$.
From the previous part of this proof
it follows that the last iteration of the while loop  is when $k = \ell$, thus, the running time of $\AdditiveGirthApprox$ is 
$O(m^{1+\frac{1}{3}}+m^{1+\frac{2}{4}}+\cdots+m^{1+\frac{\ell-2}{\ell}}+m^{1+\frac{\ell-1}{\ell+1}})\leq O(\ell\cdot m^{1+\frac{\ell-1}{\ell+1}})
$.
Therefore, when $\ell < \log n$, the running time is $\Ot(m^{2-\frac{2}{\ell+1}})$.
\end{proof}

\begin{Remark}

Dahlgaard, Knudsen and St\"{o}ckel~\cite{dahlgaard2017cappedkwalks} showed that a $C_{2k}$, if exists, can be found in $O(k^{O(k)}\cdot m^{\frac{2k}{k+1}})$ time. 
It is possible to use their $C_{2k}$ detection algorithm to obtain a 
 $(+1)$-approximation for the girth in $\Ot(\ell^{O(\ell)}\cdot m^{1+\frac{\ell -1}{\ell+1}})$ time, where $g=2\ell$ or $g=2\ell -1$,
as mentioned in Section~\ref{sec:overview}. This $(+1)$-approximation is obtained as follows. 
We run the detection algorithm with increasing values of $k$, starting with $k=2$. 
If a $C_{2k}$ is detected we stop and return the detected cycle. In such a case $2k$ is a $(+1)$-approximation for the girth. 
If there is no $C_{2k}$ then it follows from the analysis of~\cite{dahlgaard2017cappedkwalks} that  a $C_{2k-1}$, if exists, can be detected within the same running time using a slight modification of their algorithm.
If a $C_{2k-1}$ is detected we stop and return the detected cycle. In such a case $2k-1$ is the girth.
If neither a $C_{2k}$ nor a $C_{2k-1}$ is found we increment $k$ by $1$.
Since a cycle must be found when $k=\ell$,  we obtain a $(+1)$-approximation for the girth in $\Ot(\ell^{O(\ell)}\cdot m^{1+\frac{\ell -1}{\ell+1}})$ time, where $g=2\ell$ or $g=2\ell -1$. 

A disadvantage of the algorithm described above is the $\ell^{O(\ell)}$ factor which in our new algorithm does not exist. 
To better appreciate our contribution we compare 
our $\kHybrid$ algorithm (for $C_{\leq 2k}$ detection) and the algorithm of \cite{dahlgaard2017cappedkwalks} (for $C_{2k}$ detection), that are used for obtaining the approximation.

Both algorithms first sparsify the graph (or find a cycle).
The algorithm of \cite{dahlgaard2017cappedkwalks} first handles  high degree vertices, by running an $O(m)$-time $C_{2k}$ detection algorithm from each high degree vertex (vertex of degree $>m^{2/(k+1)}$). If a $C_{2k}$ is found then a $C_{2k}$ is returned, otherwise the high degree vertices (and their adjacent edges) are removed.
This is possible within the running time since there cannot be too many high degree vertices. 
In our algorithm, we handle first vertices with  at least $m^{2/(k+1)}$ edges up to distance $2$ rather than $1$. This is possible using our $\DoubleBfsCycle$ algorithm, since we can search for a $C_{\leq 2k}$ from a vertex and also its neighbors in $O(m)$ time.
If a $C_{2k}$ is found then a $C_{2k}$ is returned, otherwise the vertex and its neighbors are removed (therefore also the edges up to distance $2$ from the vertex).
Then, if required, we handle also high degree vertices and remove them.

After the sparsification stage, in
\cite{dahlgaard2017cappedkwalks} they prove that when the maximum degree is bounded, 
if there is no $C_{2k}$ then the number of \textit{capped $k$-walks} (walks of length $k$ that visit only nodes according to a fixed ordering) in the graph is bounded by $O(k^{O(k)}m^{1+(k-1)/(k+1)})$, where the $k^{O(k)}$ factor follows from their analysis.
They use this property to build a series of subgraphs in which the total number of edges is bounded, and therefore it is possible to search for a $C_{2k}$ in these graphs efficiently.
 In our algorithm, after the sparsification stage it holds that the total number of edges up to distance $k$ from all the vertices is bounded by $O(m^{1+(k-1)/(k+1)})$ (avoiding the $k^{O(k)}$ factor), and we can search efficiently for a  $C_{\leq 2k}$.
    
\end{Remark}

\section{A general hybrid algorithm}
\label{sec:one-alg}

Algorithm $\kHybrid(G,k)$, presented in the previous section, either returns a $C_{\leq 2k}$ or determines that $g>2k$, in $O(m^{1+\frac{k-1}{k+1}})$ time.
In this section we introduce an additional parameter $2\leq \alpha\leq k$ and present a $(2k,2\alpha)$-hybrid algorithm  that  either returns a $C_{\leq 2k}$ or determines that $g>2\alpha$, in $O((\frac{k+1}{\alpha-1}+\alpha) \cdot m^{1+\frac{\alpha-1}{k+1}})$ time. 
In Section~\ref{sec:girth-approx} we use the $(2k,2\alpha)$-hybrid algorithm to present two tradeoffs for girth approximation.

In order to obtain the $(2k,2\alpha)$-hybrid algorithm we first  extend algorithm $\DoubleBfsCycle$.
Then, we use the extended $\DoubleBfsCycle$ together with  additional tools that we develop to either return a $C_{\leq 2k}$ or sparsify dense regions of the graph, so that we  can check whether $g>2\alpha$ (or return a $C_{\leq 2k}$) in $O(m^{1+\frac{\alpha-1}{k+1}})$ time, by running $\dHybrid(G,\alpha)$.

\subsection{\texorpdfstring{Extending $\DoubleBfsCycle$}{Extending NeighborhoodBallOrCycle}}

In algorithm $\DoubleBfsCycle$ we mark
vertices that can be removed from the graph, by using 
the property that if $\BallOrCycle(v,k)$ did not return a $C_{\leq 2k}$ then $v$ is not on a $C_{\leq 2k}$.
In \cite{kadria2022algorithmic}, they introduce an additional parameter $\alpha$ and used the following extended version of this property:
If $\BallOrCycle(v,k)$ did not return a $C_{\leq 2k}$ then  no vertex of $V_v^{k-\alpha}$ is  on a $C_{\leq 2\alpha}$. 
We use the same approach and modify $\DoubleBfsCycle$ to get an additional integer parameter $\alpha$ such that $2\leq\alpha\leq k$.
After each call to $\BallOrCycle(G',v,k)$, where $v \in N_s$, if no cycle was found we add $V_v^{k-\alpha}$, instead of $v$,  to $\hat{U}$. 
The modified pseudo-code appears in Algorithm~\ref{A-DoubleBfsCycle}.
We rephrase Lemma~\ref{L-DoubleBfsCycle1} to suit this modification.

\begin{algorithm2e}[t]
\caption{$\DoubleBfsCycle(G, s, k,\alpha)$}
\label{A-DoubleBfsCycle}

    $\hat{U} \gets \emptyset$\;
  
    $(V_s^k,C) \gets \BallOrCycle(s, k)$\;
     
    \lIf {$C\neq \codeNull$}{\Return $(\emptyset, C)$}
    \BlankLine
       
    $\hat{U}\gets \hat{U} \cup \{s\}$\;

    $N_s \gets N(s)$\;
    
    $G \gets G \setminus \{s\}$\;
    
     \ForEach{$(s,v) \in N_s$}{
         $(V_v^k,C) \gets \BallOrCycle(v, k)$\;
         
         \lIf {$C\neq \codeNull$}{\Return $(\emptyset, C)$}
         
        $ \hat{U}\gets \hat{U} \cup V_{v}^{k-\alpha}$ \;           
         
     }

    add $s$ and the edge set $\{(s,v) \mid v\in N_s\}$ back to $G$;
        
    \Return $(\hat{U}, \codeNull);$
 
\end{algorithm2e}

\begin{lemma}  \label{L-DoubleBfsCycle}
    Let $2\leq \alpha \leq k$. If  $\DoubleBfsCycle(G,s,k,\alpha)$ finds a cycle $C$ then $\wt(C)\leq 2k$. Otherwise, no vertex in $V_s^{k-\alpha+1}$ is part of a $C_{\leq 2\alpha}$ in $G$, and the set $\hat{U}= V_s^{k-\alpha+1}$ is returned. 
\end{lemma}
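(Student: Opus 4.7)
The plan is to dispatch the two halves of the statement separately, both by direct appeal to the structural facts already established for $\BallOrCycle$. For the length claim the argument is immediate: every cycle output by $\DoubleBfsCycle(G,s,k,\alpha)$ is produced inside one of the invocations of $\BallOrCycle(\cdot,\cdot,k)$, and Lemma~\ref{L-BallOrCycle} already forces any such output cycle to satisfy $\wt(C)\le 2k$.

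For the ``otherwise'' case I first record a fact about $s$. Since $\BallOrCycle(G,s,k)$ returned no cycle, Lemma~\ref{L-BallOrCycle} tells me that $B(G,s,k)$ is a tree, and applying Lemma~\ref{L-not-part-of} with $R=k$ and $t=k$ then gives that $s$ itself lies on no $C_{\le 2k}$ in $G$, hence a fortiori on no $C_{\le 2\alpha}$ since $\alpha\le k$. Next I process each neighbor $v\in N(s)$: the call $\BallOrCycle(G',v,k)$ returned no cycle, so $B(G',v,k)$ is a tree, and Lemma~\ref{L-not-part-of} with $R=k$, $t=\alpha$ yields that no vertex of $V_v^{k-\alpha}(G')$ lies on a $C_{\le 2\alpha}$ in $G'$.

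To lift this from $G'=G\setminus\{s\}$ to $G$, I argue by contradiction: if some $u\in V_v^{k-\alpha}(G')$ lay on a $C_{\le 2\alpha}$ in $G$, that cycle would either avoid $s$ and hence already be a cycle in $G'$, contradicting the previous sentence, or it would pass through $s$, contradicting the fact that $s$ is on no $C_{\le 2\alpha}$ in $G$. So none of the vertices added to $\hat U$ lies on a $C_{\le 2\alpha}$ in $G$, which is exactly the ``no short cycle'' half of the conclusion.

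What remains is to identify the returned set $\hat U = \{s\}\cup\bigcup_{v\in N(s)} V_v^{k-\alpha}(G')$ with $V_s^{k-\alpha+1}(G)$. The inclusion $\hat U\subseteq V_s^{k-\alpha+1}(G)$ is a single application of the triangle inequality, $d_G(s,u)\le 1+d_{G'}(v,u)\le k-\alpha+1$. For the reverse inclusion I take any $u\in V_s^{k-\alpha+1}(G)\setminus\{s\}$ and look at a shortest $s$-to-$u$ path in $G$: being simple and of length at most $k-\alpha+1$, it begins with some edge $(s,v)$ and its tail of length $\le k-\alpha$ avoids $s$, so it lies in $G'$ and places $u\in V_v^{k-\alpha}(G')$. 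This distance-bookkeeping, together with the need to cleanly separate ``cycles in $G'$'' from ``cycles in $G$ through $s$'', is the only delicate point; everything else is a mechanical composition of Lemmas~\ref{L-BallOrCycle} and~\ref{L-not-part-of}.
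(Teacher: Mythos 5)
Your proof is correct and follows essentially the same route as the paper's: the length bound is inherited directly from $\BallOrCycle$ (via Lemma~\ref{L-BallOrCycle}), the non-membership in short cycles comes from Lemma~\ref{L-not-part-of} applied in $G'=G\setminus\{s\}$ together with the observation that $s$ itself lies on no $C_{\le 2\alpha}$, and the set identity $\hat U=V_s^{k-\alpha+1}(G)$ closes the argument. The only difference is one of exposition: you spell out the lift from $G'$ to $G$ and both inclusions of the set identity (using simplicity of a shortest $s$-$u$ path), where the paper states these as immediate, so your version is slightly more careful but not a different proof.
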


\begin{proof}
    The proof that if a cycle $C$ is returned then $\wt(C)\leq 2k$ is as in Lemma~\ref{L-DoubleBfsCycle1}.
    It is left to show that if no cycle is found then no vertex in $V_s^{k-\alpha+1}$ is part of a $C_{\leq 2\alpha}$ in $G$, and the set $\hat{U}= V_s^{k-\alpha+1}$ is returned.

    From Lemma~\ref{L-not-part-of} it follows that if $\BallOrCycle(G',v,k)$ did not  return a cycle then no vertex in $V_v^{k-\alpha}$ is part of a $C_{\leq 2\alpha}$ in $G'$, and therefore also in $G$, since $G'=G \setminus \{s\}$ and  since $s$ itself is not on a $C_{\leq 2\alpha}$ as otherwise $\BallOrCycle(G',v,k)$ would not have been called. 
    Thus, if no cycle was found, we have $\hat{U}=V_s^{k-\alpha+1}(G)$, as in this case $\hat{U}$ contains $s$ and $V_v^{k-\alpha}(G')$ for each $v\in N(s)$, which equals to $V_s^{k-\alpha+1}(G)$.
\end{proof}

For the running time, we note that the sets $V_{v}^{k}$ are computed during the execution of $\BallOrCycle(G',v,k)$, for every $(s,v)\in N_s$ for which no cycle was found. Their total size is also $O(n+m)$, and we can obtain from them the sets $V_{v}^{k-\alpha}$ and add these sets to $\hat{U}$ in $O(n+m)$ time. Therefore, the modified $\DoubleBfsCycle$ also runs in $O(n+m)$ time.

\subsection{\texorpdfstring{A $(\max\{2k,g\},\gt)$-hybrid algorithm}{A (max{2k,g},tilde(g))-hybrid algorithm}}

In this section we present a $(\max\{2k,g\},2\alpha)$-hybrid algorithm called $\OneAlg$, where $\alpha = \lceil \frac{\gt}{2}\rceil $. $\OneAlg$ (see Algorithm~\ref{A:OneAlg}) gets 
a graph $G$ and two integers $\alpha, k\geq 2$.
If $m \geq 1+\lceil n\cdot (1+n^{1/k})\rceil$ then we run algorithm $\SrtCycDns(G,k)$, which is based on algorithm $\DegenerateOrCycle$ of \cite{kadria2022algorithmic} (for completeness, we include a pseudo-code in Algorithm~\ref{A:srtCycDns}).
If $m < 1+\lceil n\cdot (1+n^{1/k})\rceil$ then the main challenge is when
 $\alpha \leq k< \frac{n}{2}$. In this case we run algorithm $\SrtCycSpr(G,k,\alpha)$ (described later). 
The 
cases that $k\geq \frac{n}{2}$ or  $k<\alpha$ are relatively simple and 
are 
treated by algorithm $\SpCase(G,k,\alpha)$ (see Section~\ref{subsec:special}). 
We summarize 
the properties of 
$\OneAlg(G,k,\alpha)$
in the next theorem.

  \begin{algorithm2e}[t]
            \caption{$\OneAlg(G,k,\alpha)$}
            \label{A:OneAlg}
    
            \If{$m \geq 1+\lceil n\cdot (1+n^{1/k})\rceil$}
            {
                       \Return $\SrtCycDns(G,k)$\;
            }
            \BlankLine
        
            \If{$\alpha \leq k< \frac{n}{2} $}
            {
            \Return $\SrtCycSpr(G,k,\alpha)$\;
            }
        
            \Return $\SpCase(G,k,\alpha)$
        
    \end{algorithm2e}

\begin{algorithm2e}[t]
    \caption{$\SrtCycDns(G,k)$}
    \label{A:srtCycDns}
    
        $G'\gets (V',E')$ is the edge induced subgraph formed by an arbitrary subset of $1+\lceil n\cdot (1+n^{1/k})\rceil$ edges; 

        \BlankLine
        
        $S \gets \{a \in V' \mid deg_{G'}(a)\leq 1+n^{1/k}\}$\;

        \While{$S \neq \emptyset$}
        {
            pick $a$ from $S$\;     
            \ForEach{$(a,b)\in E'$}{\lIf {$deg_{G'}(b)=2+n^{1/k}$}{$S \gets S \cup \{b\}$}}
            remove $a$ from $G'$ and from $S$\;
        }
        
        $(*, C) \gets \BallOrCycle(G',w,k)$ with some $w \in V'$\;
        \Return $C$\tcp*{$C$ is a cycle of length $\leq 2k$}   
    \end{algorithm2e}
    
\begin{theorem}
\label{Thm-One-Alg}
Let $\alpha,k\geq2$  be   integers.  $\OneAlg(G,k,\alpha)$ runs whp in $\Ot((\frac{k+1}{\alpha-1}+\alpha) \cdot \min\{m^{1+\frac{\alpha-1}{k+1}},n^{1+\frac{\alpha}{k}}\})$ time 
and either returns a $C_{\leq \max\{2k,g\}}$, or determines that $g > 2\alpha$.  
\end{theorem}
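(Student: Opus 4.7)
The plan is to handle each of the three branches of $\OneAlg$ separately: the dense branch ($m \geq 1+\lceil n(1+n^{1/k})\rceil$) invokes $\SrtCycDns$, the main branch ($\alpha \leq k < n/2$) invokes $\SrtCycSpr$, and the residual regime is handled by $\SpCase$. For each branch I would verify that the output is either a $C_{\leq 2k}$ (hence a $C_{\leq \max\{2k,g\}}$) or a correct declaration $g>2\alpha$, and that the running time fits inside both terms of the claimed minimum in the relevant regime.

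For the dense branch I would argue that $\SrtCycDns(G,k)$ always returns a $C_{\leq 2k}$. After extracting an edge-induced subgraph $G'$ with exactly $1+\lceil n(1+n^{1/k})\rceil$ edges and peeling every vertex whose degree drops to at most $1+n^{1/k}$, a counting argument shows that the number of edges deleted is at most $n(1+n^{1/k}) < |E(G')|$, so a vertex survives, and by construction every survivor has degree strictly greater than $1+n^{1/k}$ in the residual graph. A standard Moore-type bound (essentially the $\DegenerateOrCycle$ argument of \cite{kadria2022algorithmic}) then forces the ball $B(G',w,k)$ to contain a cycle, and by Lemma~\ref{L-BallOrCycle}, $\BallOrCycle(G',w,k)$ returns a $C_{\leq 2k}$. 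The total cost is $O(m)$, which in this regime is $\Omega(n^{1+1/k})$ and hence fits inside both bounds of the theorem.

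For the main branch I would appeal to a lemma, to be established in the remainder of this section using $\BFSSampleM$ and $\HandleReminderM$, asserting that $\SrtCycSpr(G,k,\alpha)$ is a $(2k,2\alpha)$-hybrid algorithm with running time $\Ot((\frac{k+1}{\alpha-1}+\alpha) \cdot m^{1+\frac{\alpha-1}{k+1}})$ whp. To derive the $n^{1+\alpha/k}$ form of the minimum I would use the assumption that the dense branch did not fire, so $m = O(n^{1+1/k})$, together with the identity
\[
\left(1+\tfrac{1}{k}\right)\left(1+\tfrac{\alpha-1}{k+1}\right) \;=\; 1+\tfrac{1}{k}+\tfrac{\alpha-1}{k+1}\cdot\tfrac{k+1}{k} \;=\; 1+\tfrac{\alpha}{k},
\]
which yields $m^{1+(\alpha-1)/(k+1)} = O(n^{1+\alpha/k})$, as required.

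The special-cases branch $\SpCase$ handles the leftover regimes $k\geq n/2$ or $k<\alpha$. In both the input is tame: if $k\geq n/2$ then the sparsity assumption gives $m = O(n)$ and a single $\dHybrid(G,\alpha)$ call fits easily inside the stated bound; if $k<\alpha$ the hybrid guarantee degenerates on one side and a straightforward direct approach suffices. The main obstacle of the proof will be the analysis of $\SrtCycSpr$ in the main case: I expect the argument to show that $\BFSSampleM$, whp, either finds a $C_{\leq 2k}$ or sparsifies the graph so that every surviving vertex $u$ satisfies a sparsity bound of the form $|E_u^\alpha| = O(\deg(u)\cdot m^{(\alpha-1)/(k+1)})$; that $\HandleReminderM$ completes the sparsification when $(k+1)\bmod(\alpha-1)\neq 0$; and that a final $\dHybrid(G,\alpha)$ call certifies $g>2\alpha$ within the required runtime. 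Once that lemma is proved, assembling Theorem~\ref{Thm-One-Alg} from the three branches is essentially routine bookkeeping.
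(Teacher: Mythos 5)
Your decomposition into the three branches (dense, main, special cases) matches the paper's proof exactly, and you correctly identify Lemma~\ref{L-shortCycleSparse} on $\SrtCycSpr$ as the heart of the argument, together with the exponent identity $(1+\tfrac{1}{k})(1+\tfrac{\alpha-1}{k+1}) = 1+\tfrac{\alpha}{k}$ that lets you convert the $m$-bound into the $n$-bound once you know the dense branch did not fire. That much is on target.

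However, two of your branch-specific claims do not hold up. For the dense branch you write that the cost is $O(m)$, and then argue it fits both sides of the minimum because $O(m)$ is $\Omega(n^{1+1/k})$. That implication is the wrong direction: a lower bound on the cost cannot establish that it is bounded \emph{above} by $n^{1+\alpha/k}$, and indeed $m$ could be as large as $n^2$, which can exceed $n^{1+\alpha/k}$. What actually saves you is that $\SrtCycDns$ operates only on the extracted subgraph $G'$ of $1+\lceil n(1+n^{1/k})\rceil = O(n^{1+1/k})$ edges, so the time is $O(n^{1+1/k})$, which is at most both $m^{1+(\alpha-1)/(k+1)}$ (since in this regime $n^{1+1/k} \leq m$) and $n^{1+\alpha/k}$ (since $\alpha \geq 1$). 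Separately, your treatment of $\SpCase$ is not workable as stated. When $k \geq n/2$ you propose a single $\dHybrid(G,\alpha)$ call; but by Lemma~\ref{L-dHybrid-properties} this costs $O(\sum_v |V_v^\alpha|)$, which can be $\Theta(n^2)$ even with $m=O(n)$, while the target bound $n^{1+\alpha/k}$ is close to $n$ for small $\alpha$; moreover if $\alpha>k$ the returned cycle can have length $2\alpha > 2k$ without any guarantee of being shortest, so it need not be a $C_{\leq \max\{2k,g\}}$. The paper instead runs one $\BallOrCycle(G,w,n)$ in $O(n)$ time: any returned cycle is a simple cycle of length at most $n\leq 2k$, and no cycle means $g=\infty$. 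For $k<\alpha$ you wave at "a straightforward direct approach"; the paper uses the Alon--Yuster--Zwick even/odd cycle detection iterated up to length $2\alpha$, which is a known but nontrivial ingredient, and the running-time accounting $O(\alpha\cdot m^{1+(\alpha-1)/\alpha}) \leq O(\alpha \cdot \min\{m^{1+\frac{\alpha-1}{k+1}}, n^{1+\frac{\alpha}{k}}\})$ requires the inequalities $k+1\leq\alpha$ and $m = O(n^{1+1/k})$ and is not automatic.
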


The next corollary follows from  Theorem~\ref{Thm-One-Alg}, when $\alpha \leq k$.

\begin{corollary}
\label{C-shortCycleBigK}
    Let $2\leq \alpha \leq k$.
    Algorithm $\OneAlg(G,k,\alpha)$ runs whp in $\tilde{O}((\frac{k+1}{\alpha-1}+\alpha) \cdot \min \{m^{1+\frac{\alpha-1}{k+1}},n^{1+\frac{\alpha}{k}}\})$
    time and either returns a $C_{\leq 2k}$, or determines that $g > 2\alpha$.
\end{corollary}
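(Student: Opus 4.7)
The plan is to derive Corollary~\ref{C-shortCycleBigK} directly from Theorem~\ref{Thm-One-Alg} by observing that, under the extra hypothesis $\alpha \leq k$, the weaker cycle-length guarantee $\max\{2k,g\}$ in the theorem collapses to $2k$ (possibly after trivial post-processing). Concretely, the first step is to invoke the theorem to conclude that $\OneAlg(G,k,\alpha)$ runs, whp, within the time bound $\tilde{O}((\frac{k+1}{\alpha-1}+\alpha) \cdot \min\{m^{1+\frac{\alpha-1}{k+1}},n^{1+\frac{\alpha}{k}}\})$ and that its output is either a cycle $C$ with $\wt(C) \leq \max\{2k,g\}$ or a certificate that $g > 2\alpha$.

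The main step is to analyze the case in which a cycle $C$ is returned. Since $C$ is a cycle of $G$, we have $\wt(C) \geq g$. Combined with the theorem's upper bound $\wt(C) \leq \max\{2k,g\}$, two subcases arise. If $g \leq 2k$, then $\max\{2k,g\} = 2k$, so $\wt(C) \leq 2k$ and $C$ is a $C_{\leq 2k}$, matching the corollary's first output mode. If instead $g > 2k$, then the hypothesis $\alpha \leq k$ gives $g > 2k \geq 2\alpha$, so the inequality $g > 2\alpha$ already holds; moreover, $g \leq \wt(C) \leq \max\{2k,g\} = g$ forces $\wt(C) = g$, and the very existence of such a cycle together with $g > 2k$ witnesses $g > 2\alpha$. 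In this subcase we re-label the algorithm's output as ``determines that $g > 2\alpha$'', matching the corollary's second mode.

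The running-time bound is inherited verbatim from the theorem, since the post-processing step above is $O(1)$. There is no real obstacle: the argument relies only on the basic inequality $\wt(C) \geq g$ and the hypothesis $\alpha \leq k$, and no modification to $\OneAlg$ is required beyond the trivial relabeling described above. The content of the corollary is therefore that when $\alpha \leq k$, $\OneAlg$ fits the definition of a $(2k,2\alpha)$-hybrid algorithm, which is precisely the formulation needed to plug $\OneAlg$ into the girth-approximation framework in Section~\ref{sec:girth-approx}.
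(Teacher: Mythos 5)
Your argument is correct and reaches the stated conclusion, but it is slightly more roundabout than what the paper needs. The paper's entire justification is the one sentence that the corollary ``follows from Theorem~\ref{Thm-One-Alg}, when $\alpha \leq k$,'' and the cleanest reading of that is at the level of the algorithm, not the black-box statement: the $\max\{2k,g\}$ term in the theorem exists only to cover the branch of $\SpCase$ that runs when $k < \alpha$ (where the Alon--Yuster--Zwick subroutine may return a shortest cycle of length $g > 2k$). When $\alpha \leq k$, that branch is never invoked --- $\OneAlg$ dispatches only to $\SrtCycDns$, to $\SrtCycSpr$, or to the $k \geq n/2$ branch of $\SpCase$, each of which returns a $C_{\leq 2k}$ or certifies $g > 2\alpha$ --- so the unmodified $\OneAlg$ already satisfies the corollary's output specification verbatim, no post-processing required. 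Your route instead treats the theorem statement as a black box and exploits the basic fact $\wt(C) \geq g$; that is a valid and arguably more portable derivation, but it forces the extra relabeling step in the subcase $g > 2k$, which technically speaks about a modified algorithm rather than $\OneAlg$ itself. (To make the relabeling operationally well-defined, note the caller distinguishes the subcases by inspecting $\wt(C)$: if $\wt(C) > 2k$, then $\max\{2k,g\} \geq \wt(C) > 2k$ forces $g = \max\{2k,g\} > 2k \geq 2\alpha$.) Either way the running-time bound carries over unchanged, so both arguments establish the corollary; yours buys independence from the internal case structure of $\OneAlg$, while the paper's avoids the wrapper and shows $\OneAlg$ itself is already the desired $(2k,2\alpha)$-hybrid.
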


In the rest of this section, we present  the proof of Theorem~\ref{Thm-One-Alg}.
As follows from~\cite{kadria2022algorithmic},  if $m \geq 1+\lceil n\cdot (1+n^{1/k})\rceil$ then  $\SrtCycDns(G,k)$ returns in  $O(\min\{m,n^{1+1/k}\})$ time  a $C_{\leq 2k}$. We now consider the case in which $m < 1+\lceil n\cdot (1+n^{1/k})\rceil$.
We prove in Section~\ref{subsec:special} that  $\SpCase(G,k,\alpha)$ satisfies the claim of Theorem~\ref{Thm-One-Alg}, when $k\geq\frac{n}{2}$ or $k<\alpha$.

Our main technical contribution is algorithm $\SrtCycSpr$ that handles the case of
$\alpha \leq k < \frac{n}{2}$.
Notice  that if  $|E_u^{\alpha-1}|< m^{\frac{\alpha-1}{k+1}}$ for every $u\in V$, then 
$\dHybrid(G,\alpha)$ is a $(2k,2\alpha)$-hybrid algorithm that either finds a $C_{\leq 2\alpha}$ (which is also a $C_{\leq 2k}$ as $\alpha \leq k$) or determines that $g>2\alpha$, in $O(m^{1+\frac{\alpha-1}{k+1}})$ time. 
Thus, in  $\SrtCycSpr$ we  ensure that if we call  $\dHybrid$,
the property that $|E_u^{\alpha-1}|< m^{\frac{\alpha-1}{k+1}}$, for  every $u\in V$ (whp), holds.
To do so, we  run $\BFSSampleM$ and possibly $\HandleRemainderM$. 
If no cycle was returned, the property holds, 
and we can safely run  $\dHybrid$.

$\SrtCycSpr$ (see Algorithm~\ref{A-BigKM}) gets a graph $G$ and 
two integers $\alpha, k\geq 2$ such that $\alpha \leq k < \frac{n}{2}$,
and is composed of three stages.
In the first stage we call
$\BFSSampleM(G,k,\alpha)$ (described later). 
If $\BFSSampleM$ returns a cycle $C$ then $\SrtCycSpr$ stops and returns $C$,
otherwise we proceed to the second stage. 
In the second stage, if $(k+1) \bmod (\alpha-1) \neq 0$, we call 
$\HandleRemainderM(G,k,\alpha)$ (also described later). 
If $\HandleRemainderM$ returns a cycle $C$ then $\SrtCycSpr$ stops and returns $C$, otherwise we proceed to the last stage. 
In the last  stage, we call $\dHybrid(G,\alpha)$.
If $\dHybrid$ returns a cycle $C$ then $\kHybrid$ stops and returns $C$, otherwise $\SrtCycSpr$ returns $\codeNull$.

 \begin{algorithm2e}[t]

    \caption{$\SrtCycSpr(G,k,\alpha)$}
    \label{A-BigKM}
       
        $C \gets \BFSSampleM(G,k,\alpha)$\;
        
        \lIf{$C\neq \codeNull$}{\Return $C$}
    
        \BlankLine
    
        \If{$(k+1) \bmod (\alpha-1)  \neq 0$}{
        $C \gets \HandleRemainderM(G,k,\alpha)$\;
        
        \lIf{$C\neq \codeNull$}{\Return $C$}
        }
     
         \BlankLine

          $C \gets \dHybrid(G,\alpha)$\;
          
          \lIf {$C\neq \codeNull$}{\Return $C$}

        \BlankLine
        \Return $\codeNull$\;

    \end{algorithm2e}

Next, we give a high level description of  $\BFSSampleM$.
The goal of $\BFSSampleM$ is to either sparsify the graph 
without removing any $C_{\leq 2\alpha}$, 
or to report a $C_{\leq 2k}$. 
For simplicity  assume that $(k+1) \bmod (\alpha-1) =0$.
In such a case, if $\BFSSampleM$ does not report a $C_{\leq 2k}$, then the graph after $\BFSSampleM$ ends contains every $C_{\leq 2\alpha}$  that  
was in the original graph, and satisfies, whp, the following sparsity property: 
For every $u\in V$ it holds that $|E_u^{\alpha-1}| < m^{\frac{\alpha-1}{k+1}}$. 

This implies that in $\BFSSampleM$ we need to find every $u\in V$ 
which is in a dense region with
$|E_u^{\alpha-1}| \geq m^{\frac{\alpha-1}{k+1}}$, and to check if $u$ is in a $C_{\leq 2\alpha}$, so that if not we can remove $u$.
Finding every such $u$ is possible within the time limit by running $\IsDense(G,u,m^{\frac{\alpha-1}{k+1}}, \alpha - 1)$ for every $u \in V$.
The problem  is that checking whether $u$  is on a $C_{\leq 2\alpha}$ for every such $u$ is too costly since there might be $n$ such vertices, and this check costs $O(n)$ using $\BallOrCycle(G,u,\alpha)$. 

One  way to overcome this problem is to  sample  an edge set $S$ of size $\tilde{\Theta}(m^{1-\frac{\alpha-1}{k+1}})$  that hits the $m^{\frac{\alpha-1}{k+1}}$ closest edges of each vertex,
and then use $S$ to detect the vertices in the dense regions that are not on a $C_{\leq 2\alpha}$.
In $\BFSSampleM$ we use a \textit{detection process} in which we call $\BallOrCycle$ or $\DoubleBfsCycle$ from the endpoints of $S$'s edges, and then, if no cycle was found, we use the information obtained from this call to identify vertices that are not on a  $C_{\leq 2\alpha}$. 
The detection process either detects vertices that are not on a $C_{\leq 2\alpha}$ and  can be removed, or reports a $C_{\leq 2k}$. 
However, it is not clear how to 
implement this detection process efficiently, since just running $\BallOrCycle$ from the endpoints of $S$'s edges takes $O(n m^{1-\frac{\alpha-1}{k+1}})$ time which might be too much.
Our solution is an \textit{iterative} sampling procedure that starts with a smaller hitting set of edges, of size $\tilde{\Theta}(m^{\frac{\alpha-1}{k+1}})$.
For such a hitting set we 
can run our detection process.
If a $C_{\leq 2k}$ was not reported, then we remove the appropriate vertices and sparsify the graph without removing any $C_{\leq 2\alpha}$.
When the graph is sparser, the running time of our detection process 
becomes faster. Thus, in the following iteration we can sample a larger hitting set for which we run this process, and either return a  $C_{\leq 2k}$ or sparsify the graph further for the next iteration.
We continue the iterative sampling procedure until we get to the required sparsity property in which $|E_u^{\alpha-1}| < m^{\frac{\alpha-1}{k+1}}$ for every $u\in V$ (whp).

We remark that in the first iteration of $\BFSSampleM$, the detection process calls $\DoubleBfsCycle$, while in the rest of the iterations $\BallOrCycle$ is called. 
The use of $\DoubleBfsCycle$ allows us, in the case that no $C_{\leq 2k}$ is reported, 
to bound $|E_v^d|$ with $m^{\frac{d}{k+1}}$ for every $v\in V$, rather than $|E_v^{d-1}|$.
This is used to achieve the required sparsity property.
Since 
$\DoubleBfsCycle$ runs in $O(m)$ time we can only use it in the first iteration when the sampled set is small enough. In the rest of the iterations we use $\BallOrCycle$ instead.

We now formally describe 
$\BFSSampleM$.
$\BFSSampleM$ (see Algorithm~\ref{BFSSample-M}) gets 
a graph $G$ and two integers
$\alpha , k \geq 2$
such that $\alpha \leq k < \frac{n}{2}$. We 
first set $y$ to $\lceil \frac{k+1}{\alpha-1}\rceil-1$.
Then, we start the main for loop that has at most $y$ iterations.
In the $i$th iteration,
we initialize $\hat{V}$ to $\emptyset$ and sample a set $S_i\subseteq E$ of size $\Tilde{\Theta}(m^\frac{i\cdot(\alpha-1)}{k+1})$.
Next, 
we scan the 
endpoints in $V(S_i)$ using an inner for-each loop.

If $i=1$, we call $\DoubleBfsCycle(G,s,k,\alpha)$ from every endpoint $s\in V(S_1)$.
$\DoubleBfsCycle$ returns either a cycle or a set of vertices $\hat{U}$. If $\DoubleBfsCycle$ returns a cycle then the cycle is returned by $\BFSSampleM$. 
Otherwise, we add $\hat{U}$ to $\hat{V}$.

If $i>1$ then we call $\BallOrCycle(s, k_i)$, where $k_i=(k+1)-(i-1)\cdot(\alpha-1)$,  from every endpoint $s\in V(S_i)$.
If a cycle is found by $\BallOrCycle$ then the cycle is returned by $\BFSSampleM$. If $\BallOrCycle$ does not return a cycle then we add $V_{u_j}^{k_i -\alpha}$ to $\hat{V}$.

Right after the inner for-each loop ends, we remove $\hat{V}$ from  $G$, and continue to the next iteration of the main for loop.
If no cycle was found after $y$ iterations, we return $\codeNull$.
Let $\ell \leq y$ be the last iteration in which vertices were removed. Let $\hat{V_i}$ be the set of vertices that were removed during the $i$th iteration, where $\hat{V_i} = \emptyset$ for $i>\ell$.
Let $G_0$ ($G_1$) be  $G$  before (after) running $\BFSSampleM$.
Figure~\ref{fig:BFSSampleMFirstIteration} and Figure~\ref{fig:BFSSampleMIteration} illustrate the key steps of the first and the following iterations of $\BFSSampleM$, respectively. 
We summarize the properties $\BFSSampleM$ in the next lemma. 

\begin{algorithm2e}[t]
\caption{$\BFSSampleM(G,k,\alpha)$}
\label{BFSSample-M}
    $y \gets \lceil \frac{k+1}{\alpha-1}\rceil-1$\; 
    
    \For{$i \gets 1$ to $y$}{
        sample a set $S_i$ of $\Tilde{\Theta}(m^\frac{i\cdot(\alpha-1)}{k+1})$ edges\;
        
        $\hat{V} \gets \emptyset$\;
        
        \ForEach{$s \in V(S_i)$}{

                \If{$i=1$}{ 
                    $(\hat{U},C) \gets \DoubleBfsCycle(G,s, k, \alpha)$\;
                    
                    \lIf {$C\neq \codeNull$}{\Return $C$}
        	      $\hat{V}\gets \hat{V} \cup    \hat{U} $\;
                }%
                \Else{ %
                    $k_i \gets (k+1)-(i-1)\cdot(\alpha-1)$\;
                    $(V_{s}^{k_i},C) \gets \BallOrCycle(s, k_i)$\;
                    
                    \lIf {$C\neq \codeNull$}{\Return $C$}
        	      $\hat{V}\gets \hat{V} \cup    V_{s}^{k_i-\alpha} $;
                }%
        }
       $G\gets G \setminus \hat{V}$\;
    }
    \Return $\codeNull$;
 
\end{algorithm2e}

\input{tikzpictureBFSSAMPLE}

\begin{lemma}\label{L-BFSSample-M} 
$\BFSSampleM(G,k,\alpha)$ satisfies the following: 
\begin{enumerate} [label=(\roman*)]
\item
If a cycle $C$ is returned then $\wt(C)\leq 2k$
\item
If a cycle is not returned then $|E_u^{(k+1)-y \cdot (\alpha-1)}| 
<
m^{1-\frac{y \cdot (\alpha-1)}{k+1}}$, for every $u\in G_1$, whp
\item
If $u \in G_0 \setminus G_1$ then $u$ is not part of a $C_{\leq 2\alpha}$ in $G_0$
\item $\BFSSampleM(G,k,\alpha)$ runs in $\Ot(\lfloor \frac{k+1}{\alpha-1}\rfloor \cdot
m^{1+\frac{\alpha-1}{k+1}})
$ time, whp.  
\end{enumerate}
\end{lemma}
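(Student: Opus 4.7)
The plan is to split the lemma into two easy parts, (i) and (iii), which follow from previously proved lemmas, and two interleaved parts, (ii) and (iv), which require a joint induction on the iteration index $i$. Throughout, let $H_i$ denote the graph after iteration $i$ (so $H_0=G_0$ and $H_y=G_1$).

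First I would handle (i) and (iii) directly. For (i), a returned cycle comes from $\DoubleBfsCycle(G,s,k,\alpha)$ in iteration $1$, whose length is at most $2k$ by Lemma~\ref{L-DoubleBfsCycle}, or from $\BallOrCycle(s,k_i)$ in iteration $i>1$ where $k_i=(k+1)-(i-1)(\alpha-1)\leq k$ (since $\alpha\geq 2$ and $i\geq 2$), so by Lemma~\ref{L-BallOrCycle} the cycle has length at most $2k_i\leq 2k$. For (iii), each removed vertex is added to $\hat V$ either as a member of $V_x^{k-\alpha+1}$ (iteration $1$) or of $V_x^{k_i-\alpha}$ (iteration $i>1$), in each case after $\DoubleBfsCycle$ or $\BallOrCycle$ failed to return a cycle. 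Lemma~\ref{L-DoubleBfsCycle} handles the first case; in the second, Lemma~\ref{L-BallOrCycle} certifies that $B(H_{i-1},x,k_i)$ is a tree, so Lemma~\ref{L-not-part-of} (with $R=k_i$, $t=\alpha$, using $R-t=k_i-\alpha$) shows the removed vertex is not on a $C_{\leq 2\alpha}$ in $H_{i-1}$. A secondary induction on iterations---only vertices off every $C_{\leq 2\alpha}$ are ever removed, so every $C_{\leq 2\alpha}$ of $G_0$ persists in $H_{i-1}$---propagates this to $G_0$.

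For (ii), I would establish by induction on $i$ the invariant that, whp, for every $u\in H_i$ the bound $|E_u^{d_i}(H_i)|<m^{d_i/(k+1)}$ holds, where $d_i := (k+1)-i(\alpha-1)$. The algebraic identities that make everything align are $k_i=d_{i-1}$ and $k_i-\alpha=d_i-1$. For the inductive step, let $s_i := m^{d_i/(k+1)}$ and apply Lemma~\ref{L-whpM} to $S_i$ of size $\tilde\Theta(m^{i(\alpha-1)/(k+1)})=\tilde\Theta(m/s_i)$: whp $S_i$ hits the $s_i$ closest edges of every vertex in $H_{i-1}$. If some $u\in H_i$ had $|E_u^{d_i}(H_{i-1})|\geq s_i$, then $S_i$ would contain an edge $(x,y)\in E_u^{d_i}(H_{i-1})$, so (WLOG) $d_{H_{i-1}}(u,x)\leq d_i-1=k_i-\alpha$. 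But since no cycle was returned when processing $x$, the set $V_x^{k_i-\alpha}$ (respectively $V_x^{k-\alpha+1}=V_x^{d_1-1}$ for $i=1$) was added to $\hat V$, and it contains $u$, contradicting $u\in H_i$. Since $H_i\subseteq H_{i-1}$, this also yields $|E_u^{d_i}(H_i)|<s_i$. Setting $i=y$ gives (ii).

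Finally, for (iv), I would bound each iteration's cost by $\tilde O(m^{1+(\alpha-1)/(k+1)})$ and then sum. Iteration $1$ makes $O(|V(S_1)|)=\tilde O(m^{(\alpha-1)/(k+1)})$ calls to $\DoubleBfsCycle$, each costing $O(n+m)=O(m)$ by Lemma~\ref{L-DoubleBfsCycle-time}. For iteration $i>1$, using the invariant from iteration $i-1$ with $k_i=d_{i-1}$, each call to $\BallOrCycle(x,k_i)$ costs $O(|V_x^{k_i}(H_{i-1})|)\leq O(|E_x^{k_i}(H_{i-1})|+1)=\tilde O(m^{k_i/(k+1)})$; multiplying by $|V(S_i)|=\tilde O(m^{i(\alpha-1)/(k+1)})$ and using the exponent identity $i(\alpha-1)/(k+1)+k_i/(k+1)=1+(\alpha-1)/(k+1)$ gives the same per-iteration bound. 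Summing over $y=\lceil(k+1)/(\alpha-1)\rceil-1\leq\lfloor(k+1)/(\alpha-1)\rfloor$ iterations and taking a union bound over the whp events of each iteration yields (iv). The main obstacle is not any single step but the careful bookkeeping that couples the sample size, the BFS radius $k_i$, the sparsity threshold $s_i$, and the removed ball radius $k_i-\alpha$ through the identity $k_i-\alpha=d_i-1$, so that the running time per iteration telescopes to a clean bound and the sparsity improves monotonically in the right way.
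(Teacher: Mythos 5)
Your proof is correct and takes essentially the same route as the paper's: parts (i) and (iii) follow directly from Lemmas~\ref{L-DoubleBfsCycle}, \ref{L-BallOrCycle}, and \ref{L-not-part-of}, and parts (ii) and (iv) are established by the same joint induction on the iteration index, using the hitting-set argument of Lemma~\ref{L-whpM} and the telescoping exponent identity $i(\alpha-1)+k_i=(k+1)+(\alpha-1)$. Your explicit introduction of $d_i=(k+1)-i(\alpha-1)$ and the identities $k_i=d_{i-1}$, $k_i-\alpha=d_i-1$ is a cleaner bookkeeping device for the same argument the paper runs.
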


\begin{proof} 
\begin{enumerate} [label=(\roman*)]
    \item 
    We first show that if a cycle $C$ is returned then $\wt(C)\leq 2k$. 
    $\BFSSampleM$ returns a cycle only if 
    a call to $\DoubleBfsCycle(G, s, k, \alpha)$ returns a cycle or a call to $\BallOrCycle(s, k_i)$, where $2\leq i \leq y$ and $k_i=(k+1)-(i-1)\cdot(\alpha-1)$, returns a cycle. 
    By Lemma~\ref{L-DoubleBfsCycle}, if $\DoubleBfsCycle(G, s, k, \alpha)$ returns a cycle $C$ then $\wt(C)\leq 2k$.
    By Lemma~\ref{L-BallOrCycle}, if $\BallOrCycle(s,k_i)$ returns a cycle $C$ then $\wt(C) \leq 2 k_i$.
    Since $i>1$ is an integer, and since $\alpha\geq 2$, we have
    $k_i = (k+1)-(i-1)\cdot(\alpha-1) \leq (k+1)-(\alpha-1)\leq k$. Therefore,  $\wt(C) \leq 2k$.

    \item 
      Next, we show that if a cycle is not returned then $|E_u^{(k+1)-y \cdot (\alpha-1)}| < m^{1-\frac{y \cdot (\alpha-1)}{k+1}}$, for every $u\in G_1$, whp.
      Since $k\geq \alpha$, we have $y\geq 1$ and there is at least one iteration.

     Consider the $i$th iteration of the main loop. 
     We show that if no cycle was found during the $i$th iteration
     then after the $i$th iteration every 
     $u\in G_0\setminus (\cup_{j=1}^{i}\hat{V}_{j})$
     satisfies the following property: $|E_{u}^{(k+1)-i\cdot(\alpha-1)}| < m^{1-\frac{i\cdot(\alpha -1)}{k+1}}$,  whp.

    Let $u\in G_0\setminus (\cup_{j=1}^{i}\hat{V}_{j})$. 
    By Lemma~\ref{L-whpM}, the set $S_i$ hits the $m^{1-\frac{i\cdot(\alpha-1)}{k+1}}$ closest edges of every vertex of $G_0\setminus (\cup_{j=1}^{i-1}\hat{V}_{j})$, whp. 
    Assume that $S_i$ is indeed such a hitting set, and assume, towards a contradiction, that after the $i$th iteration $|E_{u}^{(k+1)-i\cdot(\alpha-1)}| \geq m^{1-\frac{i\cdot(\alpha -1)}{k+1}}$.
    Since $G_0\setminus (\cup_{j=1}^{i}\hat{V}_{j}) \subseteq G_0\setminus (\cup_{j=1}^{i-1}\hat{V}_{j})$ we have $u\in G_0 \setminus (\cup_{j=1}^{i-1}\hat{V}_{j})$. 
    Now, since  $u\in G_0 \setminus (\cup_{j=1}^{i-1}\hat{V}_{j})$ and since the graph is not updated in the inner for-all loop, it follows that there is an edge $(u_1,u_2) \in S_i$ such that
    $(u_1,u_2) \in E_{u}^{(k+1)-i\cdot(\alpha-1)}$.
     By the definition of $E_{u}^{(k+1)-i\cdot(\alpha-1)}$, either $u_1$ or $u_2$, denoted with $s$, satisfies that $d(u,s)\leq (k+1)-i\cdot(\alpha-1) -1$. By the definition of $V(S_i)$, we know that $s\in V(S_i)$.

     When $i=1$, we have $d(u,s)\leq  (k+1) - i
     \cdot(\alpha-1) -1 = k-\alpha+1$. Therefore, $u\in V_{s}^{k-\alpha+1}$. At the first iteration, when $i=1$, the input graph has not changed yet so $G=G_0$. Since no cycle was found by $\DoubleBfsCycle(G_0,s,k,\alpha)$, it follows from Lemma~\ref{L-DoubleBfsCycle} that $\hat{U} = V_{s}^{k-\alpha+1}$. Therefore,  $u\in \hat{U}$, and $u$ is added to $\hat{V_i}$ after the call to $\DoubleBfsCycle(G_0,s,k,\alpha)$. Hence, $u\notin G_0\setminus (\cup_{j=1}^{i}\hat{V}_i)$, a contradiction.  
     
     We now handle the case that $i>1$. It holds that $d(u,s)\leq  (k+1)-i\cdot(\alpha-1)-1 = (k+1)-(i-1)\cdot(\alpha-1)- \alpha =k_i-\alpha$.
    Hence, $u\in V_{s}^{k_i-\alpha}$.
    Since no cycle was found during the $i$th iteration, $u$ is added to $\hat{V_i}$ after the call to 
    $\BallOrCycle(s, k_i)$,
    and therefore, $u\notin G_0\setminus (\cup_{j=1}^{i}\hat{V}_i)$, a contradiction.  
    
    Now, if $\BFSSampleM$ does not return a cycle we get for $i=y$ that if $u\in G_0\setminus (\cup_{j=1}^{y}\hat{V}_{j})=G_1$
    then 
    $|E_u^{(k+1)-y \cdot (\alpha-1)}| < m^{1-\frac{y \cdot (\alpha-1)}{k+1}}$, whp. 

   \item 
    Next, we prove that if $u \in G_0 \setminus G_1$ then $u$ is not part of a $C_{\leq 2\alpha}$ in $G_0$.
    Since $u \in G_0 \setminus G_1$ and since $G_1 = G_0 \setminus (\cup_{j=1}^y\hat{V}_j$) it holds that $u\in \hat{V}_i$, where $1\leq i \leq y$.
    
    If $i=1$ then since $u\in \hat{V}_i$ it follows that there was in the first iteration a vertex $s$ such that $u\in \hat{U}$ after a call to $\DoubleBfsCycle(G_0,s,k,\alpha)$ did not return a cycle. By Lemma~\ref{L-DoubleBfsCycle}, $\hat{U} = V_{s}^{k-\alpha+1}$ and no vertex in $V_s^{k-\alpha+1}$ is part of a $C_{\leq 2\alpha}$ in $G_0$. Therefore, $u$ is not part of  a $C_{\leq 2\alpha}$ in $G_0$.

    If $i>1$ then since $u\in \hat{V}_i$ it follows that there was in the $i$th iteration a vertex $s$ such that 
    $u\in V_{s}^{k_i-\alpha}$
    after a call to $\BallOrCycle(s, k_i)$ did not return a cycle.
    As $\BallOrCycle(s, k_i)$ did not return a cycle, by Lemma~\ref{L-BallOrCycle} we know that $B(s,k_i)$ is a tree.
    It follows from Lemma~\ref{L-not-part-of} that no vertex in $V_{s}^{k_i-\alpha}$, and in particular $u$, is part of a $ C_{\leq2\alpha}$ in $G_0 \setminus (\cup_{j=1}^{i-1}\hat{V}_{j})$.
    Since during the run of $\BFSSampleM$ we remove only vertices that are not part of a $ C_{\leq2\alpha}$, $u$ is not part of a $C_{\leq 2\alpha}$ also in $G_0$.

    \item
    Finally, we show that $\BFSSampleM$ runs in  $\Ot(\lfloor \frac{k+1}{\alpha-1}\rfloor \cdot
    m^{1+\frac{\alpha-1}{k+1}})
    $ time, whp.
    To do so, we show that the running time of the $i$th iteration of the main for loop is whp $\Ot(m^{1+\frac{\alpha-1}{k+1}})$.

    We start with the first iteration, in which $i=1$. The size of $S_1$ is $\Tilde{\Theta}(m^{\frac{\alpha-1}{k+1}})$. 
    The size of $V(S_1)$ is at most $2\cdot |S_1|$. 
    For every $s\in V(S_1)$ we run $\DoubleBfsCycle(G_0,s, k, \alpha)$.
    By Lemma \ref{L-DoubleBfsCycle-time}, running $\DoubleBfsCycle$ from $s$ costs $O(n+m)=O(m)$.
    Adding $\hat{U}$ to $\hat{V}$ costs $O(n) = O(m)$. 
    Therefore, the total running time for all $s\in V(S_1)$ is at most $ 2\cdot |S_1|\cdot O(m) = 
       \Tilde{O}(m^{\frac{\alpha-1}{k+1}}\cdot m) = 
       \Tilde{O}(m^{1+\frac{\alpha-1}{k+1}})$.

    Now we assume that $i>1$.
    We proved in (ii) that if $i > 1$ and $u\in G_0\setminus (\cup_{j=1}^{i-1}\hat{V}_{j})$ then after the $(i-1)$th iteration, if no cycle was found, we have $|E_{u}^{(k+1)-(i-1)\cdot(\alpha-1)}| < m^{1-\frac{(i-1)\cdot(\alpha -1)}{k+1}}$,  whp.
    By Lemma~\ref{L-BallOrCycle}, for every $s\in V(S_i)$, the cost of running 
     $\BallOrCycle(s, k_i)$ is $O(|V_{s}^{k_i}|) = 
     O(|E_{s}^{k_i}|)=
     O(|E_{s}^{(k+1)-(i-1)\cdot(\alpha-1)}|)$. In our case this is at most $O(m^{1-\frac{(i-1)\cdot(\alpha -1)}{k+1}})$.
     As the size of $S_i$ is $\Tilde{\Theta}(m^\frac{i\cdot(\alpha-1)}{k+1})$ and the size of $V(S_i)$ is at most $2\cdot |S_i|$, the total running time of the calls to $\BallOrCycle$ for every $s\in V(S_i)$ is, whp,
         $
         2\cdot |S_i| \cdot O(m^{1-\frac{(i-1)\cdot(\alpha -1)}{k+1}}) =
        \Tilde{\Theta}(m^\frac{i\cdot(\alpha-1)}{k+1}) \cdot O(m^{1-\frac{(i-1)\cdot(\alpha -1)}{k+1}}) =\tilde{O}(m^{1+\frac{\alpha-1}{k+1}}).
        $

       The cost of adding $V_{s}^{k_i-\alpha}$ to $\hat{V}$ is $O(|V_{s}^{k_i-\alpha}|) $. This is at most $O(|V_{s}^{k_i}|) = O(m^{1-\frac{(i-1)\cdot(\alpha -1)}{k+1}})$ (whp), which is $\tilde{O}(m^{1+\frac{\alpha-1}{k+1}})$ for all $s\in V(S_i)$ (similarly to the previous calculation).
       
       The cost of removing a vertex $v$ is $O(deg(v))$. Thus, for every $i\geq 1$, the total cost of removing all the vertices in $\hat{V_i}$ is at most $O(m)$,    
        so the total running time of the $i$th iteration is $\tilde{O}(m^{1+\frac{\alpha-1}{k+1}})$, whp.
    
        If we are in the scenario that a cycle is returned, then the $i$th iteration  stops at an earlier stage, and therefore the running time is also $\tilde{O}(m^{1+\frac{\alpha-1}{k+1}})$.
    
        Now, since there are at most $y$ iterations of the main for loop, the running time of $\BFSSampleM$ is, whp\footnote{This is the running time in the case that $S_i$ was a hitting sets as described for every $1\leq i \leq y$. This happens whp since 
        we assume that
        $k < \frac{n}{2} < n$ and 
        therefore 
        $y < n$. For every $1\leq i \leq y$ the probability that $S_i$ is not such a hitting set is at most $\frac{1}{n^c}$. Therefore, using the standard union bound argument, the probability that there exists $1\leq i \leq y$ such that $S_i$ is not a hitting set is at most $y \cdot \frac{1}{n^c} \leq 
        \frac{1}{n^{c-1}}$. For large enough $c$, we get that $S_i$ is a hitting set for every $1\leq i \leq y$, whp.}, $\tilde{O}(y\cdot m^{1+\frac{\alpha-1}{k+1}}) = \Ot(\lfloor \frac{k+1}{\alpha-1}\rfloor \cdot
    m^{1+\frac{\alpha-1}{k+1}})$. \qedhere
\end{enumerate}
\end{proof}

Recall that our goal is to obtain the sparsity property that
$|E_u^{\alpha-1}| < m^{\frac{\alpha-1}{k+1}}$, for every $u\in V$, so that we can run $\dHybrid(G,\alpha)$.   
However, after running $\BFSSampleM$ the required sparsity property is guaranteed to hold (whp) only if $(k+1) \bmod (\alpha-1)  = 0$. 
In the case that $(k+1) \bmod (\alpha-1)  \neq 0$  we need an additional  step which is implemented in $\HandleRemainderM$,  to guarantee  that the required sparsity property holds.

Next, we formally describe $\HandleRemainderM$. 
$\HandleRemainderM$ (see Algorithm~\ref{A-handleRM}) gets a graph $G$ and two integers $\alpha,k \geq 2$ such that $k+1=q(\alpha-1)+r$ where $q\geq 1$ and $r>0$.
We  set
$D$ to $m^{\frac{1}{k+1}}$ and $r$ to $(k+1) \bmod (\alpha-1)$. 
Then,  a while loop runs as long as $(\alpha-1) \bmod r \neq 0$. 
Let $r_i$ be
the value of $r$ when the $i$th iteration
begins, so that $r_1$ is $ (k+1) \bmod (\alpha-1)$. 
Let $\ell$ be
the total number of iterations
and
$r_{\ell +1}$
the value of $r$ after the $\ell$th iteration.
During the $i$th iteration, we set $r_{i+1}$ to 
$(\lceil \frac{\alpha-1}{r_i} \rceil \cdot r_i) - (\alpha-1)$, where $\lceil \frac{\alpha-1}{r_i} \rceil \cdot r_i$ is the smallest multiple of $r_i$ that is at least $\alpha-1$ (see Figure~\ref{fig:r-jumps-M}).
Then, we call $\SparseOrCycleM(G, D, r_{i+1}, \alpha)$.
If $\SparseOrCycleM$ returns a cycle $C$ then $\HandleRemainderM$ returns $C$.
 If $\SparseOrCycleM$ does not return a cycle then it might be that some vertices were removed from $G$, and we continue to the next iteration.
If the while loop ends without returning a cycle then we return $\codeNull$.
Let $G_0$ ($G_1$) be  $G$  before (after) running $\HandleRemainderM$.
Next, we prove two properties on the value of  $r$ during the run of  $\HandleRemainderM$.

\begin{algorithm2e}[t]
\caption{$\HandleRemainderM(G,k,\alpha)$}
\label{A-handleRM}

    $D\gets m^{\frac{1}{k+1}}$\;
    
    $r \gets (k+1) \bmod (\alpha-1)$\;

    \While{ $(\alpha-1) \bmod r \neq 0$ } 
    {

        $r\gets (\lceil \frac{\alpha-1}{r} \rceil \cdot r) - (\alpha-1)$\; 
        
        $C\gets \SparseOrCycleM(G, D, r, \alpha)$\;
        
        \lIf {$C\neq \codeNull$}{\Return $C$}
        
    } 
    
    \Return $\codeNull$\;
     
\end{algorithm2e}

\begin{claim} \label{C-remindersM} 
Let $k+1 = q(\alpha-1)+r_1$ and assume  $r_1>0$. \textbf{(i)} $0 < r_{\ell+1} < r_{\ell} < \cdots < r_1 <\alpha-1$. \textbf{(ii)} ($r_{i+1} + \alpha-1) \bmod  r_i = 0$, for every $1\leq i\leq \ell$.

\end{claim}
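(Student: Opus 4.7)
The plan is to analyze the update rule $r_{i+1} = \lceil (\alpha-1)/r_i\rceil \cdot r_i - (\alpha-1)$ by expressing it in terms of the division remainder $(\alpha-1) \bmod r_i$. Both parts of the claim then follow from a single algebraic identity.

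First, I will fix an iteration $1 \leq i \leq \ell$ and write $\alpha - 1 = q_i r_i + s_i$ with $0 \leq s_i < r_i$, where $s_i = (\alpha-1)\bmod r_i$. The key observation is that the while-loop guard forces $s_i > 0$: the body is entered only when $(\alpha-1)\bmod r_i \neq 0$. Given $s_i > 0$, we have $\lceil (\alpha-1)/r_i\rceil = q_i + 1$, so the update collapses to
\[
r_{i+1} \;=\; (q_i+1)\,r_i \;-\; (\alpha-1) \;=\; r_i - s_i.
\]

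From this identity, part~(ii) is immediate: $r_{i+1} + (\alpha-1) = (r_i - s_i) + (q_i r_i + s_i) = (q_i+1)\,r_i$, which is a multiple of $r_i$. For part~(i), the same identity plus $0 < s_i < r_i$ yields $0 < r_{i+1} < r_i$, giving strict monotonicity $0 < r_{\ell+1} < r_\ell < \cdots < r_1$. Finally, $r_1 = (k+1) \bmod (\alpha-1) < \alpha - 1$ by definition of the remainder, which closes off the chain.

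I do not anticipate a real obstacle here, since the argument is essentially just unwinding the definition of $\lceil \cdot \rceil$ in terms of the Euclidean division $\alpha - 1 = q_i r_i + s_i$; the only subtlety is remembering to invoke the while-loop condition to rule out $s_i = 0$ (which would make $r_{i+1} = 0$ and break strict positivity). The statement $r_1 > 0$ assumed in the claim guarantees that iteration $i=1$ is reached, and the argument above then justifies each subsequent step in the sequence.
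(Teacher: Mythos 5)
Your proof is correct and takes essentially the same approach as the paper's: both hinge on the while-loop guard forcing $(\alpha-1)\bmod r_i \neq 0$, together with the fact that $\lceil(\alpha-1)/r_i\rceil\cdot r_i$ is the smallest multiple of $r_i$ that is at least $\alpha-1$. Your packaging via the explicit Euclidean division $\alpha-1 = q_i r_i + s_i$, which collapses the update to the single identity $r_{i+1} = r_i - s_i$, is a modest streamlining that delivers both parts at once, whereas the paper argues positivity, strict decrease, and divisibility as three separate short steps about the ceiling.
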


\begin{proof}
\begin{enumerate} [label=(\roman*)]
\item 
 First, since $r_1 = (k+1) \bmod (\alpha-1)$ and we assume that $r_1>0$, we have $0 < r_1 < \alpha-1$.
Now we show that $0 < r_{i+1}<r_i$, for every $1 \leq i \leq \ell$.  
This implies that  $0 < r_{\ell+1} < r_{\ell} < \cdots < r_1 <\alpha$, as required.

We first show by induction that $r_i>0$, for every $1 \leq i \leq \ell$. The base of the induction follows from the assumption that $r_1 > 0$. We assume that $r_{i}>0$ and prove that $r_{i+1}>0$. 
During the $i$th iteration of the while loop, we set $r_{i+1}$ to 
$(\lceil \frac{\alpha-1}{r_i} \rceil \cdot r_i) - (\alpha-1)$. 
Since this occurs during  the $i$th iteration it must be that $(\alpha-1) \bmod r_i \neq 0$, as otherwise the $i$th iteration would not have started.
Since $(\alpha-1) \bmod r_i \neq 0$ we have $\lceil\frac{\alpha-1}{r_i}\rceil>\frac{\alpha-1}{r_i}$ and therefore, 
$\lceil \frac{\alpha-1}{r_i} \rceil \cdot r_i > \alpha - 1$.
We get that 
$r_{i+1} = (\lceil \frac{\alpha-1}{r_i} \rceil \cdot r_i) - (\alpha-1) >0$, as required.

We now turn to prove that $r_{i+1}<r_i$.
As $\lceil \frac{\alpha-1}{r_i} \rceil \cdot r_i$ is the smallest multiple of $r_i$ that is at least $\alpha-1$, we know that 
 $(\lceil\frac{\alpha-1}{r_i}\rceil -1)\cdot r_i = \lceil\frac{\alpha-1}{r_i}\rceil\cdot r_i -r_i < \alpha-1$.
 Therefore, $\lceil\frac{\alpha-1}{r_i}\rceil\cdot r_i - (\alpha-1) < r_i$. Since $r_{i+1} = \lceil\frac{\alpha-1}{r_i}\rceil\cdot r_i -(\alpha-1)$, we get that $r_{i+1}<r_i$.

\item
Let $1\leq i\leq \ell$. Since $ r_{i+1} = \lceil\frac{\alpha-1}{r_i}\rceil\cdot r_i -(\alpha-1)$, it follows that $ r_{i+1} +(\alpha-1)$ is a multiple of $r_i$, so   ($r_{i+1} + \alpha-1) \bmod  r_i = 0$.  \qedhere
 \end{enumerate}
\end{proof}

\begin{figure}[t]
    \centering

    \begin{tikzpicture}[scale=1.3]
        \draw[thick] (0,0) -- (4,0);
        \draw[thick] (0,-0.07) -- (0,0.07);
        \draw[thick] (3.3,-0.07) -- (3.3,0.07);
        \draw[thick] (4,-0.07) -- (4,0.07);
        
        \node[font=\small] at (0,-0.05) [below] {$0$};
        \node[font=\small] at (3.25,-0.05) [below] {$\alpha-1$};
        \node[font=\small] at (4.25,-0.05) [below] {
        $\lceil \frac{\alpha-1}{r_i}\rceil \cdot r_i$
        };
        
        \foreach \x in {0, 1, 2, 3} {
            \draw[->] (\x, 0.25) arc[start angle=160, end angle=0, x radius=0.5, y radius=0.3
            ];
            \node[font=\small] at (\x + 0.5, 0.57) {$r_i$};
        }

        \draw[semithick, decorate, decoration={brace,  amplitude=4pt}] (3.3,0.1) -- (3.97,0.1);
        \node[font=\tiny] at (3.6, 0.3) {$r_{i+1}$};

    \end{tikzpicture}

    \caption{The relation between $r_i$, $r_{i+1}$ and $\alpha-1$}
    \label{fig:r-jumps-M}
\end{figure}


We now prove the main lemma regarding $\HandleRemainderM$.

\begin{lemma}\label{L-HandleRemainderM}
Let $k+1 = q(\alpha-1)+r_1$ and assume that $r_1>0$, $q\geq 1$,  and  that $|E_u^{r_{1}}| < D^{r_{1}}$, for every $u\in V$.
$\HandleRemainderM(G,k,\alpha)$ 
satisfies the following: 
\begin{enumerate} [label=(\roman*)]
\item If a cycle $C$ is returned then $\wt(C)\leq 2k$
\item If a cycle is not returned then
$|E_u^{\alpha-1}| < D^{\alpha-1}$, for every vertex $u\in G_2$
\item 
If $u \in G_1 \setminus G_2$ 
then $u$ is not on a $C_{\leq 2\alpha}$ in  $G_1$
\item 
$\HandleRemainderM(G,k,\alpha)$ runs in $O(r_1\cdot m^{1+\frac{\alpha-1}{k+1}})$ time.
\end{enumerate}
\end{lemma}

\begin{proof}
\begin{enumerate} [label=(\roman*)]
    \item
    First, we show that if a cycle $C$ is returned then $\wt(C) \leq 2k$.     
   $\HandleRemainderM$  returns  a cycle only if a call  to $\SparseOrCycleM(G, D, r_{i+1}, \alpha)$, where $1\leq i \leq \ell$, returns  a cycle. 
   By Lemma~\ref{L-SparseOrCycleM}, if $\SparseOrCycleM(G, D, r_{i+1}, \alpha)$ returns a cycle $C$ then $\wt(C)\leq 2(r_{i+1} -1 +\alpha)$. 
    By Claim~\ref{C-remindersM}(i), $r_{i+1} < r_1$.
    In addition, since $k+1 =q(\alpha-1) +r_1$ with $q\geq 1$, 
     we have $r_{i+1} -1 +\alpha < r_{1} -1 +\alpha = (\alpha-1) + r_{1} \leq k+1$.
    As $r_{i+1} -1 +\alpha$ and $k$ are integers, we have $r_{i+1} -1 +\alpha \leq k$.
    Therefore, $\wt(C)\leq 2(r_{i+1} -1 +\alpha) \leq 2k$.

    \item
    Next, we show that if a cycle is not returned then
    $|E_u^{\alpha-1}| < D^{\alpha-1}$, for every vertex $u\in G_2$.
    To do so, we show that if $\HandleRemainderM$ does not return a cycle then when the algorithm ends, for every vertex $u \in G_2$ we have $|E_u^{r_{\ell+1}}| < D^{r_{\ell+1}}$.

     If we do not enter the while loop and $\ell=0$ then by our assumption $|E_u^{r_{1}}| < D^{r_{1}}$, for every $u\in V$, as required.
    Now we assume that we enter the loop so $\ell>0$.
    Consider the $i$th iteration of the while loop. During the $i$th iteration $\SparseOrCycleM(G,D,r_{i+1},\alpha)$ is called. 
    If $\SparseOrCycleM$ does not return a cycle, then it follows from  Lemma~\ref{L-SparseOrCycleM} that if $u\in \hat{V}$ ($u$ was not removed) then  $|E_u^{r_{i+1}}| < D^{r_{i+1}}$. 
    Therefore, if no cycle was returned by $\HandleRemainderM$ we get for $i=\ell$ that after the $\ell$th iteration, we have $|E_u^{r_{\ell+1}}| < D^{r_{\ell+1}}$, for every $u \in G_2$. 
    
    The while loop ends when $(\alpha-1) \bmod r = 0$. Therefore, we know that after the while loop, $(\alpha-1) \bmod r_{\ell +1} = 0$. 
    Additionally,  By Claim~\ref{C-remindersM}(i) $ r_{\ell +1} < \alpha - 1$, so there is $z>0$ such that $\alpha - 1  = zr_{\ell +1}$. By Corollary~\ref{C-SumOfDenseM}, since $|E_u^{r_{\ell+1}}| < D^{r_{\ell+1}}$, we know that $|E_u^{\alpha-1}| = |E_u^{zr_{\ell +1}}| < D^{{zr_{\ell +1}}} = 
    D^{\alpha-1}$, as required.

    \item
    Next, we prove that if $u \in G_1 \setminus G_2$ then $u$ is not on a $C_{\leq 2\alpha}$ in  $G_1$.
    If $u \in G_1 \setminus G_2$ then by the definition of $G_1$ and $G_2$, $u$ was removed while executing  $\HandleRemainderM$.
    During the run of $\HandleRemainderM$, a vertex can be removed only by $\SparseOrCycleM(G, D, r_{i+1}, \alpha)$ for some $1\leq i \leq \ell$.
    Therefore, by Lemma~\ref{L-SparseOrCycleM}, $u$ is not part of a $ C_{\leq2\alpha}$ in $G$.
    Since during the run of $\HandleRemainderM$  only vertices that are not part of a $C_{\leq2\alpha}$ are removed, $u$ is not part of a $C_{\leq 2\alpha}$ also in $G_1$.

    \item
    Finally, we show that $\HandleRemainderM$ runs in $O(r_1\cdot m^{1+\frac{\alpha-1}{k+1}})$ time.
    To do so, we show that the running time of the $i$th iteration of the while loop is $O(m^{1+\frac{\alpha-1}{k+1}})$. When $\ell=0$ there are no iterations and the running time is $O(1)$. Now we assume that $\ell>0$.
   During the $i$th iteration, we call $\SparseOrCycleM(G, D, r_{i+1}, \alpha)$. 
   We proved in (ii) that for $i > 1$ if $\SparseOrCycleM$ did not return a cycle during the $(i-1)$th iteration and $u\in \hat{V}$ ($u$ was not removed) then after the $(i-1)$th iteration $|E_u^{r_{i}}| < D^{r_{i}}$. For $i=1$ by our assumption $|E_u^{r_{1}}|  < D^{r_{1}}$.
    Therefore, before the $i$th iteration starts, by Corollary~\ref{C-SumOfDenseM}, $|E_u^{zr_{i}}| < D^{zr_{i}}$ for every integer $z>0$.  By Claim~\ref{C-remindersM}(ii), $r_{i+1}-1+\alpha$ is divisible by $r_i$, so $|E_u^{r_{i+1}-1+\alpha}|  < 
    D^{r_{i+1}-1+\alpha}$ for every vertex $u$.
    It then follows from Corollary~\ref{C-SparseOrCycle-sparse-time} that $\SparseOrCycleM(G, D, r_{i+1}, \alpha)$ runs in $O(nD^{r_{i+1}}+mD^{\alpha-1})$ time.
    By Claim~\ref{C-remindersM}(i), $r_{i+1} < \alpha-1$ so $O(nD^{r_{i+1}}+mD^{\alpha-1})\leq 
    O(mD^{\alpha-1})=
    O(m^{1+\frac{\alpha-1}{k+1}})$, and the running time of the $i$th iteration is $O(m^{1+\frac{\alpha-1}{k+1}})$.   
    Now, it follows from Claim~\ref{C-remindersM}(i) that after at most $r_1 < \alpha-1$ iterations, the value of $r$ cannot decrease anymore (since it cannot become less than $1$, and $(\alpha-1) \bmod 1 = 0$) so the while loop ends. As we saw, the running time of each iteration is $O(m^{1+\frac{\alpha-1}{k+1}})$, hence the total running time of the while loop is $O(r_1 \cdot m^{1+\frac{\alpha-1}{k+1}})$. \qedhere
    \end{enumerate}
\end{proof}

Now we are ready to prove the correctness and running time of $\SrtCycSpr$.

\begin{lemma} \label{L-shortCycleSparse}
   Let $2\leq \alpha \leq k < \frac{n}{2}$
    such that 
    $k +1 = q (\alpha-1)+r$, 
    where $q\geq 1$ and $0\leq r < \alpha-1$ are integers.
    Algorithm $\SrtCycSpr(G,k,\alpha)$ runs whp in $\tilde{O}((q+r) \cdot m^{1+\frac{\alpha-1}{k+1}})$
    time and either returns a $C_{\leq 2k}$, or determines that $g > 2\alpha$.
\end{lemma}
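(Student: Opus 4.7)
The plan is to prove correctness and the running time bound by tracking the three stages of $\SrtCycSpr$ and combining the lemmas established for $\BFSSampleM$, $\HandleReminderM$ and $\dHybrid$. The high-level observation that drives the argument is that the sparsity guarantee from Lemma~\ref{L-BFSSample-M}(ii), evaluated at $y = \lceil\tfrac{k+1}{\alpha-1}\rceil-1$, yields precisely $|E_u^{\alpha-1}| < m^{(\alpha-1)/(k+1)}$ when $r=0$ (so $y = q-1$ and $(k+1)-y(\alpha-1) = \alpha-1$), and yields $|E_u^{r}| < m^{r/(k+1)} = D^{r}$ when $r>0$ (so $y=q$ and $(k+1)-y(\alpha-1) = r$). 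In the first case we can jump straight to $\dHybrid$; in the second case this is exactly the precondition needed by $\HandleReminderM$.

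For correctness I will argue as follows. If any of $\BFSSampleM$, $\HandleReminderM$ or $\dHybrid$ returns a cycle $C$, then Lemmas~\ref{L-BFSSample-M}(i), \ref{L-HandleReminderM}(i) and \ref{L-dHybrid-properties} each bound $\wt(C) \leq 2k$ (for $\dHybrid(G,\alpha)$ the bound is $2\alpha \leq 2k$). If no cycle is returned, then each removed vertex was not on a $C_{\leq 2\alpha}$ in the graph present at its removal (Lemmas~\ref{L-BFSSample-M}(iii) and \ref{L-HandleReminderM}(iii)), so by a straightforward induction no $C_{\leq 2\alpha}$ of the input graph is destroyed. Since the final call $\dHybrid(G,\alpha)$ returns $\codeNull$, the updated graph contains no $C_{\leq 2\alpha}$, hence neither does the input graph, giving $g > 2\alpha$.

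For the running time I will split into the cases $r=0$ and $r>0$. In both cases $\BFSSampleM$ costs $\tilde{O}(\lfloor\tfrac{k+1}{\alpha-1}\rfloor \cdot m^{1+(\alpha-1)/(k+1)}) = \tilde{O}(q \cdot m^{1+(\alpha-1)/(k+1)})$ whp by Lemma~\ref{L-BFSSample-M}(iv). If $r=0$, then $\HandleReminderM$ is skipped, and by the observation above together with Corollary~\ref{C-dHybrid-sparse-time}, the final $\dHybrid(G,\alpha)$ costs $O(m \cdot m^{(\alpha-1)/(k+1)}) = O(m^{1+(\alpha-1)/(k+1)})$, for a total of $\tilde{O}(q\cdot m^{1+(\alpha-1)/(k+1)})$ whp. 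If $r>0$, the post-$\BFSSampleM$ graph satisfies $|E_u^{r}| < D^{r}$ for every $u$ whp, which is exactly the hypothesis $|E_u^{r_1}|<D^{r_1}$ (with $r_1=r$) required by Lemma~\ref{L-HandleReminderM}; hence $\HandleReminderM$ runs in $O(r\cdot m^{1+(\alpha-1)/(k+1)})$ time and leaves the graph with $|E_u^{\alpha-1}| < D^{\alpha-1} = m^{(\alpha-1)/(k+1)}$ for every $u$. Corollary~\ref{C-dHybrid-sparse-time} again bounds the final $\dHybrid$ call by $O(m^{1+(\alpha-1)/(k+1)})$. Summing the three stages yields $\tilde{O}((q+r)\cdot m^{1+(\alpha-1)/(k+1)})$ whp.

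The main (minor) obstacle is purely bookkeeping: verifying that the value $y = \lceil\tfrac{k+1}{\alpha-1}\rceil-1$ used by $\BFSSampleM$ matches $q-1$ when $r=0$ and $q$ when $r>0$, so that the exponent $(k+1)-y(\alpha-1)$ lands on either $\alpha-1$ or $r$, giving in each case exactly the sparsity needed to invoke the next stage's preconditions (Corollary~\ref{C-dHybrid-sparse-time} or Lemma~\ref{L-HandleReminderM}). Once this identification is spelled out, the rest is direct assembly of the previously established lemmas.
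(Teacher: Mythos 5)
Your proof follows the paper's argument essentially verbatim: the same three-stage decomposition (bound cycle length via Lemmas~\ref{L-BFSSample-M}, \ref{L-HandleReminderM}, \ref{L-dHybrid-properties}; argue no $C_{\leq 2\alpha}$ is destroyed so a $\codeNull$ from $\dHybrid$ certifies $g>2\alpha$), the same case split on $r=0$ versus $r>0$, and the same key bookkeeping identifying $y=q-1$ or $y=q$ so that the $\BFSSampleM$ post-condition feeds exactly the precondition of Corollary~\ref{C-dHybrid-sparse-time} or of Lemma~\ref{L-HandleReminderM}. This matches the paper's proof and is correct.
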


\begin{proof}

First, $\SrtCycSpr(G,k,\alpha)$ returns a cycle $C$ only if $\BFSSampleM(G,k,\alpha)$, $\HandleRemainderM(G,k,\alpha)$, or $\dHybrid(G,\alpha)$ returns a cycle $C$. 
If $\BFSSampleM$ or $\dHybrid$ returns a cycle $C$ then by Lemma~\ref{L-BFSSample-M} or by Lemma~\ref{L-dHybrid-properties}, $\wt(C)\leq 2k$.
If $\HandleRemainderM$ returns a cycle $C$ then by Lemma \ref{L-HandleRemainderM}, since $k\geq \alpha$ and hence $q\geq 1$,  $\wt(C)\leq 2k$.

Second, we show that if no cycle was found then $g>2\alpha$.
If no cycle was found then it might be that some vertices were removed from the graph.
A vertex $u$ can be removed either by  $\BFSSampleM$ or by $\HandleRemainderM$.  It follows from Lemma~\ref{L-BFSSample-M} and Lemma~\ref{L-HandleRemainderM}, that $u$ is not part of a $C_{\leq 2\alpha}$ when $u$ is removed. Since only vertices that are not on a $C_{\leq 2\alpha}$ are removed, every $C_{\leq 2\alpha}$ that was in the input graph also belongs to the updated graph. 
After the (possible) removal of vertices, we call $\dHybrid(G,\alpha)$ with the updated graph. Since we are in the case that no cycle was found, $\dHybrid$ did not return a cycle. It follows from Lemma~\ref{L-dHybrid-properties} that $g>2\alpha$ in the updated graph, and therefore $g>2\alpha$ also in the input graph.

Now we turn to analyze the running time of $\SrtCycSpr$.
At the beginning, $\SrtCycSpr$ calls $\BFSSampleM$.
By Lemma~\ref{L-BFSSample-M}, $\BFSSampleM$ runs in $\Ot(\lfloor \frac{k+1}{\alpha-1}\rfloor \cdot
    m^{1+\frac{\alpha-1}{k+1}}) = \Ot(q \cdot m^{1+\frac{\alpha-1}{k+1}})$ time.
Let $G_1$ be the graph after the call to $\BFSSampleM$.
Recall that $y=\lceil \frac{k+1}{\alpha-1}\rceil-1$. By Lemma~\ref{L-BFSSample-M},
  for every $u\in G_1$ we have $|E_u^{(k+1)-y \cdot (\alpha-1)}| < m^{1-\frac{y \cdot (\alpha-1)}{k+1}}$, whp.
 If $r>0$ then $y=q$ and $|E_u^{r}| < m^{\frac{r}{k+1}}$.  
 If $r=0$ then $y=q-1$ and $|E_u^{\alpha-1}| < m^{\frac{\alpha-1}{k+1}}$.

Next, $\SrtCycSpr$ checks if $r = (k-1) \bmod (\alpha-1) > 0$. 
We divide the rest of the proof  to the case that $r = 0$ and to the case that $r > 0$. 
If $r = 0$ then $\SrtCycSpr$ calls $\dHybrid$. 
Since $r = 0$, we have $|E_u^{\alpha-1}| < m^{\frac{\alpha-1}{k+1}}$ after $\BFSSampleM$.
By Corollary~\ref{C-dHybrid-sparse-time}, the running time of $\dHybrid$ is $O(m\cdot m^{\frac{\alpha-1}{k+1}}) = O(m^{1+\frac{\alpha-1}{k+1}})$.

We now turn to the case that $r > 0$. 
In this case it might be that $|E_u^{\alpha-1}| \geq m^{\frac{\alpha-1}{k+1}}$ for some vertices $u\in G_1$. 
Therefore, we first call $\HandleRemainderM(G,k,\alpha)$, knowing that $r>0$, $q\geq 1$ and that whp, $|E_u^{r}| < m^{\frac{r}{k+1}}$ after $\BFSSampleM$.
By Lemma~\ref{L-HandleRemainderM}, the running time is $O(r\cdot m^{1+\frac{\alpha-1}{k+1}})$.
Let $G_2$ be the graph after $\HandleRemainderM$ ends.  By Lemma~\ref{L-HandleRemainderM}, for every $u\in G_2$ we have $|E_u^{\alpha-1}| < D^{\alpha-1}$, where $D = m^{\frac{1}{k+1}}$.
Now $\SrtCycSpr$ calls $\dHybrid$, and using  Corollary~\ref{C-dHybrid-sparse-time} again, we get that the running time of $\dHybrid$ is $O(m\cdot m^{\frac{k-1}{k+1}}) = O(m^{1+\frac{k-1}{k+1}})$.

It follows from the above discussion that $\SrtCycSpr$ either returns a $C_{\leq 2k}$ or determines that $g>2\alpha$, and
the running time is whp\footnote{
    Throughout the run of $\SrtCycSpr$, some of the bounds that we get on $|E_v^d|$ for vertices $v\in V$ and distances $d$, are whp, because the sets that we sample are hitting sets whp (see the proof of Lemma~\ref{L-BFSSample-M}). Therefore, also the running times of $\BFSSampleM$, $\HandleRemainderM$ and $\dHybrid$ are whp, since they rely on these bounds.
    }
    $\Ot(( q+r) \cdot
    m^{1+\frac{\alpha-1}{k+1}})$.
\end{proof}

Since $\SrtCycSpr$ is run by $\OneAlg$ when $m\leq O(n^{1+\frac{1}{k}})$ and when 
$\alpha\leq k$
and so $1+\frac{\alpha-1}{k+1}\leq 2$, we have $m^{1+\frac{\alpha-1}{k+1}}\leq O(n^{1+\frac{\alpha}{k}})$. Thus, the running time of $\SrtCycSpr$ is whp $\Ot(( q+r) \cdot
   \min\{ m^{1+\frac{\alpha-1}{k+1}},
    n^{1+\frac{\alpha}{k}}
   \})$,
    which is at most 
   $\Ot(( \frac{k+1}{\alpha-1}+\alpha) \cdot
   \min\{ m^{1+\frac{\alpha-1}{k+1}},
    n^{1+\frac{\alpha}{k}}
   \})$.

\subsubsection{\texorpdfstring{Algorithm $\SpCase$}{Algorithm SpecialCases}} \label{subsec:special}

 We now present algorithm $\SpCase$ that handles special cases of $k$ and $\alpha$.
$\SpCase(G,k,\alpha)$ gets as an input a graph $G$ and two integers $k\geq 2$ and $\alpha \geq 2$.

If $k\geq \frac{n}{2}$, the algorithm simply runs $\BallOrCycle(G,w,n)$ from an arbitrary vertex $w\in V$ in $O(n)\leq O(\min\{m^{1+\frac{\alpha-1}{k+1}},n^{1+\frac{\alpha}{k}}\})$ time, to check whether $G$ contains a cycle.  If a cycle is found then its length is at most $n\leq 2k$, and we return a $C_{\leq 2k}$. Otherwise, $g=\infty$ so for every integer $\alpha$ we return that $g>2\alpha$.

If $k \leq \alpha -1$, we check if the graph contains a $C_d$ for $d\gets 3,\dots, 2\alpha$, using an algorithm of Alon, Yuster and Zwick \cite{alon1997finding}. Their algorithm decides whether $G$ contains a $C_{2 \ell-1}$ or a $C_{2 \ell}$, and finds such a cycle if it does, in $O(m^{2-\frac{1}{\ell}})$ time.
Applying this algorithm with increasing cycle lengths until a length of $2\alpha$ (the values of $\ell$ are in the worst case $2,3,\dots,\alpha$), we can either find a shortest cycle or determine that $g> 2\alpha$.
The running time is $O(m^{2-\frac{1}{2}}+m^{2-\frac{1}{3}}+\dots+ m^{2-\frac{1}{\alpha}})= O(\alpha \cdot m^{2-\frac{1}{\alpha}}) = O(\alpha \cdot  m^{1+\frac{\alpha-1}{\alpha}})$ time\footnote{It is possible to modify the algorithm of Alon \etal{} \cite{alon1997finding} to search in $O(m^{2-\frac{1}{\ell}})$ time a shortest cycle of length \textit{at most} $2\ell$ instead of \textit{exactly} $2\ell-1$ or $2\ell$, and then run it only with $\ell = \alpha$, to avoid the $\alpha$ factor in the running time. 
}.
Since $k\leq \alpha-1$, we have $k+1 \leq \alpha$ and therefore $O(\alpha\cdot m^{1+\frac{\alpha-1}{\alpha}}) \leq O(\alpha\cdot m^{1+\frac{\alpha-1}{k+1}})$.
In addition, since $m\leq O(n^{1+\frac{1}{k}})$ and since $1+\frac{\alpha-1}{\alpha}\leq 2$, we have $m^{1+\frac{\alpha-1}{\alpha}} \leq O(n^{(1+\frac{1}{k})\cdot (1+\frac{\alpha-1}{\alpha})})
\leq
O(n^{(1+\frac{1}{k})\cdot (1+\frac{\alpha-1}{k+1})}) 
=
O(n^{1+\frac{\alpha}{k}})
$.
Therefore, the running time is $O(\alpha\cdot m^{1+\frac{\alpha-1}{\alpha}})\leq O(\alpha\cdot \min\{m^{1+\frac{\alpha-1}{k+1}},n^{1+\frac{\alpha}{k}}\})$.

By choosing which algorithm to run according to the  relation between $k$, $\alpha$ and $\frac{n}{2}$, we get that for every two integers $\alpha\geq2$ and  $k\geq 2$, algorithm $\OneAlg(G,k,\alpha)$ runs whp in $\Ot((\frac{k+1}{\alpha-1}+\alpha) \cdot \min\{m^{1+\frac{\alpha-1}{k+1}},n^{1+\frac{\alpha}{k}}\})$ time 
and either returns a $C_{\leq \max\{2k,g\}}$, or determines that $g > 2\alpha$.
This completes the proof of Theorem~\ref{Thm-One-Alg}.

\section{Approximation of the girth}\label{sec:girth-approx}

In this section we present two new  tradeoffs for girth approximation that follow from  Corollary~\ref{C-shortCycleBigK}. 
In these tradeoffs we use $\OneAlg$ with  $2\leq \alpha \leq k$, so by Corollary~\ref{C-shortCycleBigK}, $\OneAlg$ is an $\tilde{O}((\frac{k+1}{\alpha-1}+\alpha) \cdot \min \{m^{1+\frac{\alpha-1}{k+1}},n^{1+\frac{\alpha}{k}}\})$-time, $(2k,2\alpha)$-hybrid algorithm.

\subsection{Dense graphs}
Kadria \etal{} \cite{kadria2022algorithmic} presented an  $O((\alpha-c)\cdot n^{1+\frac{\alpha}{2\alpha - c}})$-time, $(4\alpha-2c, 2\alpha)$-hybrid algorithm, where $0 < c \leq \alpha$ are two integers (see footnote  \ref{footnoteTC} in Section~\ref{sec:overview}).  This algorithm, combined with a binary search, was used by \cite{kadria2022algorithmic} 
to compute for every $\varepsilon\in (0,1]$ a cycle $C$  such that $\wt(C) \leq 4\lceil \frac{g}{2}\rceil  - 2\lfloor \varepsilon \lceil \frac{g}{2}\rceil \rfloor \leq (2-\varepsilon)g+4$, in $\tO(n^{1+1/(2-\varepsilon)})$ time, if $g\leq \log^2 n$.
We use $\OneAlg$ in a similar way  and prove:

\begin{theorem}\label{L-approx-n}
    Let $\Ynew\geq 2$ be an integer, $\varepsilon\in [0,1]$ and  $g\leq \log^2 n$.
   It is possible to compute, whp, in    $\tO(\Ynew\cdot n^{1+1/(\Ynew-\varepsilon)})$ time, a cycle $C$ such that $\wt(C) \leq 2\Ynew\lceil \frac{g}{2}\rceil  - 2\lfloor \varepsilon \lceil \frac{g}{2}\rceil \rfloor \leq
(\Ynew-\varepsilon)g+\Ynew+2$.\footnote{We note that when $\ell> \log n$, we can run the $\Ot(n^{1+1/\ell'})$-time algorithm of \cite{kadria2022algorithmic}, which computes a $C_{\leq 2\ell' \lceil g/2\rceil}$, with $\ell' = \ell -1$, to get the required approximation. Since $\ell> \log n$, its running time is $\Ot(n)$. Thus, our running time becomes $\Ot(n^{1+1/{(\ell-\varepsilon)}})$, where the 
$\ell$ factor is absorbed into the $\Ot$ notation.}
\end{theorem}

\begin{proof}
For each $\tilde{g}$ in the range $[3,\log^2 n]$ in increasing order, we call $\OneAlg(G,k(\alpha_{\tilde{g}}), \alpha_{\tilde{g}})$, where $\alpha_{\tilde{g}} = \lceil \frac{\tilde{g}}{2}\rceil$, and $k(\alpha) = \Ynew\alpha - \lfloor \varepsilon\alpha \rfloor$.
When we find the smallest value $\tilde{g}$ for which $\OneAlg$ returns a cycle, we stop and return that cycle. 
Since $\Ynew\geq 2$ and $\varepsilon\leq 1$ we have $k(\alpha_{\tilde{g}})\geq \alpha_{\tilde{g}}$, and it follows from Corollary~\ref{C-shortCycleBigK} that $\OneAlg$ either returns a $C_{\leq 2k(\alpha_{\tilde{g}})}$ or determines that $g> 2\alpha_{\tilde{g}}\geq \tilde{g}$ in  $\tilde{O}((\frac{k(\alpha)+1}{\alpha-1}+\alpha) \cdot n^{1+\frac{\alpha}{k(\alpha)}})$ time, whp.

We first prove that the algorithm returns a cycle $C$ such that $\wt(C) \leq 2\Ynew\lceil \frac{g}{2}\rceil  - 2\lfloor \varepsilon \lceil \frac{g}{2}\rceil \rfloor \leq
(\Ynew-\varepsilon)g+\Ynew+2$.
Let $g'$ be the smallest value $\tilde{g}$ for which $\OneAlg$ returned a cycle. This implies that for $g'-1$ the algorithm did not return a cycle, and hence $g>g'-1$. Since $g$ and $g'$ are integers, we have $g\geq g'$. Also for $g'=3$ we have $g\geq g'$ since the girth is at least $3$.

The call to $\OneAlg(G,k(\alpha_{g'}),\alpha_{g'})$ returns a cycle $C$
such that  $\wt(C)\leq 2 k(\alpha_{g'}) =
2\cdot (\Ynew\alpha_{g'} - \lfloor \varepsilon \alpha_{g'} \rfloor) = 
2\cdot \Ynew \lceil \frac{g'}{2} \rceil - 2\cdot \lfloor \varepsilon \lceil \frac{g'}{2} \rceil \rfloor
\leq 
2 \Ynew \lceil \frac{g}{2} \rceil - 2 \lfloor \varepsilon \lceil \frac{g}{2} \rceil \rfloor
$. 
Thus, $\wt(C)\leq 2\Ynew\lceil \frac{g}{2}\rceil  -  2 \lfloor \varepsilon \lceil \frac{g}{2} \rceil \rfloor \leq
2\Ynew\lceil \frac{g}{2}\rceil - 2\lfloor \varepsilon \frac{g}{2} \rfloor
\leq 
2\Ynew\lceil \frac{g}{2}\rceil - 2( \varepsilon \frac{g}{2} -1) 
\leq
2\Ynew \frac{g+1}{2} - \varepsilon g +2
=
\Ynew g+\Ynew - \varepsilon g +2 = (\Ynew-\varepsilon)g +\Ynew+2
$.

For the running time, there are at most $O(\log^2 n
)$ calls to $\OneAlg$, and each call costs $\tilde{O}((\frac{k(\alpha)+1}{\alpha-1}+\alpha) \cdot n^{1+\frac{\alpha}{k(\alpha)}})$ whp (with the values of $k$ and $\alpha$ that correspond to that call).
 In each call,
 $ \frac{k(\alpha)+1}{\alpha-1} 
 = 
 \frac{\Ynew\alpha-\lfloor \varepsilon \alpha \rfloor+1}{\alpha-1}
 \leq 
  \frac{\Ynew\alpha+1}{\alpha-1}
 $ which is at most $2\Ynew+1$ since $\alpha \geq 2$.
In addition, $n^{1+\frac{\alpha}{k(\alpha)}} = n^{1+\frac{\alpha}{\Ynew\alpha-\lfloor \varepsilon \alpha \rfloor}} \leq 
n^{1+\frac{\alpha}{\Ynew\alpha- \varepsilon \alpha }} = 
n^{1+\frac{1}{\Ynew- \varepsilon}}
$. Thus, the running time of each call is 
$\tilde{O}((\Ynew+\alpha) \cdot n^{1+\frac{1}{\Ynew- \varepsilon}})$ whp, which is $\tilde{O}(\Ynew\cdot n^{1+\frac{1}{\Ynew- \varepsilon}})$ since $\alpha \leq \log^2 n$ in each call. Therefore, the total running time is, whp\footnote{The running time of each call to $\OneAlg$ is whp. 
Since the number of calls to $\OneAlg$ is at most $\log^2 n < n$, using a union bound argument as in the proof of Lemma \ref{L-BFSSample-M} we get that
also the total running of all the calls is whp. \label{footnoteAWHP}}, $\tilde{O}(
\log^2 n
\cdot \Ynew\cdot n^{1+\frac{1}{\Ynew- \varepsilon}}) = \tilde{O}(\Ynew\cdot n^{1+\frac{1}{\Ynew- \varepsilon}})$.
\end{proof}

 Our algorithm can be viewed as a natural generalization of a couple of algorithms from \cite{kadria2022algorithmic}. 
By setting $\Ynew=2$ in Theorem~\ref{L-approx-n} we get the $(2-\varepsilon)g+4$ approximation of \cite{kadria2022algorithmic}.
By setting $\varepsilon=0$ we get an  $\tO(\Ynew\cdot n^{1+1/\Ynew})$ time algorithm that computes a  $C_{\leq 2\Ynew\lceil \frac{g}{2}\rceil}$, which is similar to the $\tO(n^{1+1/\ell})$ time algorithm of \cite{kadria2022algorithmic} that computes a $C_{\leq 2\ell \lceil \frac{g}{2}\rceil}$. 

\subsection{Sparse graphs} 
We use a similar approach to obtain a tradeoff for girth approximation in sparse graphs. We prove the following theorem.

\begin{theorem}\label{L-approx-m}
    Let $\Ynew\geq 3$ 
    be an integer, $\varepsilon\in [0,1)$ and  $g\leq \log^2 n$.
    It is possible to compute, whp, in  $\tO(\Ynew\cdot m^{1+1/(\Ynew-\varepsilon)})$ time, a cycle $C$ such that $\wt(C) \leq 2\Ynew(\lceil \frac{g}{2}\rceil-1)  - 2\lfloor \varepsilon (\lceil \frac{g}{2}\rceil-1) \rfloor-2 \leq
(\Ynew-\varepsilon)g-\Ynew+2\varepsilon$.
\end{theorem}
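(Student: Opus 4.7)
The plan is to mirror the proof of Theorem~\ref{L-approx-n}, but tune the parameters of $\OneAlg$ so that the $m^{1+(\alpha-1)/(k+1)}$ branch of the $\min$ in Corollary~\ref{C-shortCycleBigK} controls the running time instead of the $n^{1+\alpha/k}$ branch. Concretely, I would set $\alpha_{\tilde g}=\lceil \tilde g/2\rceil$ and
\[
k(\alpha)\;=\;\Ynew(\alpha-1)-\lfloor \varepsilon(\alpha-1)\rfloor-1,
\]
so that $\alpha-1$ (rather than $\alpha$) drives the exponent. For each $\tilde g\in\{3,4,\ldots,\log^2 n\}$ in increasing order I would call $\OneAlg(G,k(\alpha_{\tilde g}),\alpha_{\tilde g})$ and return the first cycle found; if the loop exits without a cycle, the $O(n^2)$-time $(+1)$-approximation of \cite{itai1977finding} can be used as a fallback (this is relevant only in the $g>\log^2 n$ regime).

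To apply Corollary~\ref{C-shortCycleBigK} I must check $\alpha\le k(\alpha)$. For $\alpha=2$ we get $k=\Ynew-1\ge 2$ because $\Ynew\ge 3$, and for $\alpha\ge 3$, using $\varepsilon<1$, $k\ge (\Ynew-1)(\alpha-1)-1\ge 2(\alpha-1)-1\ge \alpha$. This verification is the reason the theorem requires $\Ynew\ge 3$ and $\varepsilon<1$: at $\alpha=2$ the quantity $\alpha-1=1$ leaves no slack, so $\lfloor\varepsilon\cdot 1\rfloor$ must be $0$ and $\Ynew$ cannot drop below $3$. I expect this boundary check to be the only real subtlety.

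For correctness of the cycle length, let $g'$ be the smallest $\tilde g$ at which $\OneAlg$ returns a cycle. Then at $\tilde g=g'-1$ the call determined $g>2\alpha_{g'-1}\ge g'-1$, hence $g\ge g'$. By Corollary~\ref{C-shortCycleBigK} the returned cycle $C$ satisfies $\wt(C)\le 2k(\alpha_{g'})=2\Ynew(\lceil g'/2\rceil-1)-2\lfloor\varepsilon(\lceil g'/2\rceil-1)\rfloor-2$. A short monotonicity check ($h(\alpha+1)-h(\alpha)\ge 2\Ynew-2>0$ since $\varepsilon<1$ forces $\lfloor\varepsilon\alpha\rfloor-\lfloor\varepsilon(\alpha-1)\rfloor\le 1$) lifts this to the same bound with $g$ in place of $g'$. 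Then, using $\lfloor x\rfloor>x-1$ and $2(\lceil g/2\rceil-1)\le g-1$, the bound simplifies to $2(\Ynew-\varepsilon)(\lceil g/2\rceil-1)\le (\Ynew-\varepsilon)(g-1)=(\Ynew-\varepsilon)g-\Ynew+\varepsilon\le (\Ynew-\varepsilon)g-\Ynew+2\varepsilon$.

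For the running time, Corollary~\ref{C-shortCycleBigK} gives each call the bound $\tilde O\!\bigl((\tfrac{k+1}{\alpha-1}+\alpha)\cdot m^{1+(\alpha-1)/(k+1)}\bigr)$. The choice $k+1=\Ynew(\alpha-1)-\lfloor\varepsilon(\alpha-1)\rfloor$ yields $\tfrac{k+1}{\alpha-1}\le \Ynew$ and
\[
\frac{\alpha-1}{k+1}\;\le\;\frac{\alpha-1}{(\Ynew-\varepsilon)(\alpha-1)}\;=\;\frac{1}{\Ynew-\varepsilon},
\]
so each call costs $\tilde O((\Ynew+\alpha)\cdot m^{1+1/(\Ynew-\varepsilon)})=\tilde O(\Ynew\cdot m^{1+1/(\Ynew-\varepsilon)})$ since $\alpha\le \log^2 n$. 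Summing over the $O(\log^2 n)$ values of $\tilde g$ and applying a union bound over the (at most polylogarithmically many) high-probability events inside each $\OneAlg$ call, the total cost is $\tilde O(\Ynew\cdot m^{1+1/(\Ynew-\varepsilon)})$, whp, as claimed.
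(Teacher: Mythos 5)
Your proposal matches the paper's own proof essentially step for step: the same parameter choice $k(\alpha)=\Ynew(\alpha-1)-\lfloor\varepsilon(\alpha-1)\rfloor-1$ with $\alpha_{\tilde g}=\lceil\tilde g/2\rceil$, the same sweep over $\tilde g$, and the same algebra for the cycle-length and running-time bounds. The only differences are cosmetic — you spell out the boundary check $\alpha\le k(\alpha)$ and the monotonicity of $2\Ynew\alpha-2\lfloor\varepsilon\alpha\rfloor$ that the paper leaves implicit — so the argument is correct and is the same approach.
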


\begin{proof}
For each $\tilde{g}$ in the range $[3,\log^2 n]$ in increasing order, we call $\OneAlg(G,k(\alpha_{\tilde{g}}), \alpha_{\tilde{g}})$, where $\alpha_{\tilde{g}} = \lceil \frac{\tilde{g}}{2}\rceil$, and $k(\alpha) = \Ynew(\alpha-1) - \lfloor \varepsilon(\alpha-1) \rfloor-1$.
When we find the smallest value $\tilde{g}$ for which $\OneAlg$ returns a cycle, we stop and return that cycle. 
Since $\Ynew\geq 3$ and $\varepsilon < 1$ we have $k(\alpha_{\tilde{g}})\geq \alpha_{\tilde{g}}$, and it follows from Corollary~\ref{C-shortCycleBigK} that $\OneAlg$ either returns a $C_{\leq 2k(\alpha_{\tilde{g}})}$ or determines that $g> 2\alpha_{\tilde{g}}\geq \tilde{g}$ in  $\tilde{O}((\frac{k(\alpha)+1}{\alpha-1}+\alpha) \cdot m^{1+\frac{\alpha-1}{k(\alpha)+1}})$ time, whp.

We first prove that the algorithm returns a cycle $C$ such that $\wt(C) \leq 2\Ynew(\lceil \frac{g}{2}\rceil-1)  - 2\lfloor \varepsilon (\lceil \frac{g}{2}\rceil-1) \rfloor-2 \leq
(\Ynew-\varepsilon)g-\Ynew+2\varepsilon$.
Let $g'$ be the smallest value $\tilde{g}$ for which $\OneAlg$ returned a cycle. As before, this implies that $g\geq g'$. 
The call to $\SrtCycSpr(G,k(\alpha_{g'}),\alpha_{g'})$ returns a cycle $C$
such that  
\begin{align*}
\wt(C)\leq 2 k(\alpha_{g'}) &=
2\cdot ( \Ynew(\alpha_{g'}-1) - \lfloor \varepsilon(\alpha_{g'}-1) \rfloor-1 )\\
& = 
2\cdot ( \Ynew(\lceil \frac{g'}{2} \rceil-1) - \lfloor \varepsilon(\lceil \frac{g'}{2} \rceil-1) \rfloor-1 )\\
& \leq 
2\cdot ( \Ynew(\lceil \frac{g}{2} \rceil-1) - \lfloor \varepsilon(\lceil \frac{g}{2} \rceil-1) \rfloor-1 ) \\
&=
2 \Ynew(\lceil \frac{g}{2} \rceil-1) - 2\lfloor \varepsilon(\lceil \frac{g}{2} \rceil-1) \rfloor-2. 
\end{align*}

Now, since $\lfloor \varepsilon(\lceil \frac{g}{2} \rceil-1) \rfloor
\geq 
\lfloor \varepsilon( \frac{g}{2} -1) \rfloor
\geq 
 \varepsilon( \frac{g}{2} -1) -1 
 =
  \varepsilon\frac{g}{2} -\varepsilon -1
$,
we get that $\wt(C) \leq 
2 \Ynew(\lceil \frac{g}{2} \rceil-1) - 2(\varepsilon\frac{g}{2} -\varepsilon -1)-2
=
2 \Ynew\lceil \frac{g}{2} \rceil-2\Ynew - \varepsilon g+2\varepsilon
\leq
2 \Ynew\frac{g+1}{2}-2\Ynew - \varepsilon g+2\varepsilon
=
\Ynew g+\Ynew-2\Ynew - \varepsilon g+2\varepsilon
=
(\Ynew-\varepsilon)g -\Ynew +2\varepsilon
$.

For the running time, there are at most $O(\log^2 n)$ calls to $\OneAlg$, and each call costs $\tilde{O}((\frac{k(\alpha)+1}{\alpha-1}+\alpha) \cdot m^{1+\frac{\alpha-1}{k(\alpha)+1}})$ whp (with the values of $k$ and $\alpha$ that correspond to that call).
We have $ \frac{k(\alpha)+1}{\alpha-1}
=
\frac{\Ynew(\alpha-1) - \lfloor \varepsilon(\alpha-1) \rfloor}{\alpha-1}
\leq
\frac{\Ynew(\alpha-1)}{\alpha-1}
=
\Ynew
$.
In addition, $m^{1+\frac{\alpha-1}{k+1}} = m^{1+\frac{\alpha-1}{\Ynew(\alpha-1) - \lfloor \varepsilon(\alpha-1) \rfloor}} \leq 
m^{1+\frac{\alpha-1}{\Ynew(\alpha-1) -  \varepsilon(\alpha-1) }}
=
m^{1+\frac{1}{\Ynew - \varepsilon}}
$. Thus, the running time of each call is 
$\tilde{O}((\Ynew+\alpha) \cdot m^{1+\frac{1}{\Ynew - \varepsilon}})$ whp, which is $\tilde{O}(\Ynew\cdot m^{1+\frac{1}{\Ynew - \varepsilon}})$ since $\alpha \leq \log^2 n$ in each call. Therefore, the total running time is, whp\footnote{See footnote~\ref{footnoteAWHP}.}, $\tilde{O}(\log^2 n\cdot \Ynew\cdot m^{1+\frac{1}{\Ynew - \varepsilon}}) = \tilde{O}(\Ynew\cdot m^{1+\frac{1}{\Ynew - \varepsilon}})$.
\end{proof}

By setting $\varepsilon=0$ we get an  $\tO(\Ynew\cdot m^{1+1/\Ynew})$-time algorithm that computes a $C_{\leq 2\Ynew(\lceil \frac{g}{2}\rceil-1)}$, as opposed to the $\tO(\Ynew\cdot n^{1+1/\Ynew})$ time algorithm that computes a $C_{\leq2\Ynew\lceil \frac{g}{2}\rceil}$.

\bibliographystyle{plain}
\bibliography{bibliography.bib}

\begin{thebibliography}{10}

\bibitem{aingworth1999fast}
Donald Aingworth, Chandra Chekuri, Piotr Indyk, and Rajeev Motwani.
\newblock Fast estimation of diameter and shortest paths (without matrix multiplication).
\newblock {\em SIAM Journal on Computing}, 28(4):1167--1181, 1999.

\bibitem{alon1997finding}
Noga Alon, Raphael Yuster, and Uri Zwick.
\newblock Finding and counting given length cycles.
\newblock {\em Algorithmica}, 17(3):209--223, 1997.

\bibitem{ChechikLRS20}
Shiri Chechik, Yang~P. Liu, Omer Rotem, and Aaron Sidford.
\newblock Constant girth approximation for directed graphs in subquadratic time.
\newblock In {\em Proccedings of the 52nd Annual {ACM} {SIGACT} Symposium on Theory of Computing, {STOC} 2020, Chicago, IL, USA, June 22-26, 2020}, pages 1010--1023. {ACM}, 2020.

\bibitem{dahlgaard2017cappedkwalks}
S{\o}ren Dahlgaard, Mathias B{\ae}k~Tejs Knudsen, and Morten St{\"{o}}ckel.
\newblock Finding even cycles faster via capped k-walks.
\newblock In Hamed Hatami, Pierre McKenzie, and Valerie King, editors, {\em Proceedings of the 49th Annual {ACM} {SIGACT} Symposium on Theory of Computing, {STOC} 2017, Montreal, QC, Canada, June 19-23, 2017}, pages 112--120. {ACM}, 2017.

\bibitem{dahlgaard2017new}
S{\o}ren Dahlgaard, Mathias B{\ae}k~Tejs Knudsen, and Morten St{\"o}ckel.
\newblock New subquadratic approximation algorithms for the girth.
\newblock {\em arXiv preprint arXiv:1704.02178}, 2017.

\bibitem{galil1997all}
Zvi Galil and Oded Margalit.
\newblock All pairs shortest distances for graphs with small integer length edges.
\newblock {\em Information and Computation}, 134(2):103--139, 1997.

\bibitem{itai1977finding}
Alon Itai and Michael Rodeh.
\newblock Finding a minimum circuit in a graph.
\newblock In {\em Proceedings of the ninth annual ACM symposium on Theory of computing}, pages 1--10, 1977.

\bibitem{kadria2022algorithmic}
Avi Kadria, Liam Roditty, Aaron Sidford, Virginia~Vassilevska Williams, and Uri Zwick.
\newblock Algorithmic trade-offs for girth approximation in undirected graphs.
\newblock In {\em Proceedings of the 2022 Annual ACM-SIAM Symposium on Discrete Algorithms (SODA)}, pages 1471--1492. SIAM, 2022.

\bibitem{lingas2009efficient}
Andrzej Lingas and Eva-Marta Lundell.
\newblock Efficient approximation algorithms for shortest cycles in undirected graphs.
\newblock {\em Information Processing Letters}, 109(10):493--498, 2009.

\bibitem{RodittyT13}
Liam Roditty and Roei Tov.
\newblock Approximating the girth.
\newblock {\em ACM Transactions on Algorithms (TALG)}, 9(2):1--13, 2013.

\bibitem{roditty2013fast}
Liam Roditty and Virginia Vassilevska~Williams.
\newblock Fast approximation algorithms for the diameter and radius of sparse graphs.
\newblock In {\em Proceedings of the forty-fifth annual ACM symposium on Theory of computing}, pages 515--524, 2013.

\bibitem{roditty2012subquadratic}
Liam Roditty and Virginia~Vassilevska Williams.
\newblock Subquadratic time approximation algorithms for the girth.
\newblock In {\em Proceedings of the Twenty-Third Annual ACM-SIAM Symposium on Discrete Algorithms}, pages 833--845. SIAM, 2012.

\bibitem{seidel1995all}
Raimund Seidel.
\newblock On the all-pairs-shortest-path problem in unweighted undirected graphs.
\newblock {\em Journal of computer and system sciences}, 51(3):400--403, 1995.

\bibitem{shoshan1999all}
Avi Shoshan and Uri Zwick.
\newblock All pairs shortest paths in undirected graphs with integer weights.
\newblock In {\em 40th Annual Symposium on Foundations of Computer Science (Cat. No. 99CB37039)}, pages 605--614. IEEE, 1999.

\bibitem{WilliamsW18}
Virginia {Vassilevska Williams} and R.~Ryan Williams.
\newblock Subcubic equivalences between path, matrix, and triangle problems.
\newblock {\em J. {ACM}}, 65(5):27:1--27:38, 2018.

\bibitem{williams2024new}
Virginia~Vassilevska Williams, Yinzhan Xu, Zixuan Xu, and Renfei Zhou.
\newblock New bounds for matrix multiplication: from alpha to omega.
\newblock In {\em Proceedings of the 2024 Annual ACM-SIAM Symposium on Discrete Algorithms (SODA)}, pages 3792--3835. SIAM, 2024.

\end{thebibliography}

\end{document}